\documentclass[reqno,12pt,letterpaper]{amsart}
\usepackage{xcolor,colortbl}
\usepackage{graphicx}
\usepackage{geometry}
\usepackage{enumitem}
\usepackage{amsmath,amssymb,amsthm,amsfonts}
\usepackage{algorithm}
\usepackage{algorithmic}
\usepackage{dcolumn}
\usepackage{epstopdf}
\usepackage{bm}
\usepackage{appendix}
\usepackage{multirow}
\usepackage{braket}
\usepackage[english]{babel}
\usepackage{hyperref}
\usepackage[capitalize]{cleveref}
\usepackage[square,numbers]{natbib}
\usepackage[T1]{fontenc}
\usepackage{xpatch}
\usepackage{tikz}
\usepackage{adjustbox}
\usepackage{xspace}
\usepackage[roman]{complexity}

\newcommand{\bvec}[1]{\mathbf{#1}}

\newcommand{\va}{\bvec{a}}

\newcommand{\vg}{\bvec{g}}

\newcommand{\vk}{\bvec{k}}

\newcommand{\vq}{\bvec{q}}
\newcommand{\vr}{\bvec{r}}

\newcommand{\vu}{\bvec{u}}
\newcommand{\vv}{\bvec{v}}
\newcommand{\vw}{\bvec{w}}

\newcommand{\vG}{\bvec{G}}

\newcommand{\vzero}{\boldsymbol{0}}

\newcommand{\diag}{\operatorname{diag}}

\renewcommand{\Re}{\operatorname{Re}}
\renewcommand{\Im}{\operatorname{Im}}

\newcommand{\mc}[1]{\mathcal{#1}}

\newcommand{\abs}[1]{\left\lvert#1\right\rvert}

\newcommand{\ud}{\,\mathrm{d}}

\newcommand{\vac}{\mathrm{vac}}

\usepackage{xcolor}
\definecolor{purp}{RGB}{160, 32, 240}

\usetikzlibrary{fit}
\tikzset{%
  highlight/.style={rectangle,rounded corners,fill=blue!15,draw,fill opacity=0.3,thick,inner sep=0pt}
}

%


\global\long\def\R{\mathbb{R}}

\usepackage{ulem}
\usepackage{amsmath,amssymb,amsthm,graphicx,mathrsfs,url}
\usepackage{amsxtra}
\usepackage{hyperref}
\usepackage{enumitem}
\usepackage{arydshln}
\usepackage{placeins}
\usepackage{wasysym} 
\usepackage{graphicx}
\usepackage{subcaption}

\usepackage{array}
\newcolumntype{P}[1]{>{\centering\arraybackslash}m{#1}}

\setlength{\marginparwidth}{0.6in}

\Crefname{theo}{Theorem}{Theorems}
\Crefname{lemm}{Lemma}{Lemmas}
\Crefname{assumption}{Assumption}{Assumptions}
\Crefname{corr}{Corollary}{Corollaries}
\Crefname{prop}{Proposition}{Propositions}

\def\?[#1]{\textbf{[#1]}\marginpar{\Large{\textbf{??}}}}

\let\epsilon=\varepsilon

\setlength{\textheight}{8.50in} \setlength{\oddsidemargin}{0.00in}
\setlength{\evensidemargin}{0.00in} \setlength{\textwidth}{6.08in}
\setlength{\topmargin}{0.00in} \setlength{\headheight}{0.18in}
\setlength{\marginparwidth}{1.0in}
\setlength{\abovedisplayskip}{0.2in}
\setlength{\belowdisplayskip}{0.2in}
\setlength{\parskip}{0.05in}

\newcommand{\RR}{{\mathbb{R}}}

\newcommand{\CC}{{\mathbb{C}}}

\newcommand{\ZZ}{{\mathbb{Z}}}
\newcommand{\Z}{{\mathbb{Z}}}

\DeclareGraphicsRule{*}{mps}{*}{}

\newtheorem{theo}{Theorem}
\newtheorem{prop}{Proposition}[section] 
\newtheorem{defi}[prop]{Definition}

\newtheorem{assumption}{Assumption}
\newtheorem{lemm}[prop]{Lemma}
\newtheorem{corr}[theo]{Corollary}
\newtheorem{rem}{Remark}
\numberwithin{equation}{section}
\newtheorem{result}[theo]{Result}

\DeclareMathOperator{\Spec}{Spec}

\let\Im=\Imag

\let\Re=\Real

\DeclareMathOperator{\tr}{tr}

\usepackage{scalerel}

\newcommand\reallywidehat[1]{\arraycolsep=0pt\relax%
\begin{array}{c}
\stretchto{
  \scaleto{
    \scalerel*[\widthof{\ensuremath{#1}}]{\kern-.5pt\bigwedge\kern-.5pt}
    {\rule[-\textheight/2]{1ex}{\textheight}} 
  }{\textheight} %
}{0.5ex}\\           
#1\\                 
\rule{-1ex}{0ex}
\end{array}
}

\allowdisplaybreaks[1]

\begin{document}

\title[Interacting electrons in magic angle graphene]{Exact ground state of interacting electrons \\ in magic angle graphene}

\author{Simon Becker}
\email{simon.becker@math.ethz.ch}
\address{ETH Zurich, 
Institute for Mathematical Research, 
Rämistrasse 101, 8092 Zurich, 
Switzerland}

\author{Lin Lin}
\email{linlin@math.berkeley.edu}
\address{Department of Mathematics, University of California, Berkeley, CA 94720, USA; Applied Mathematics and Computational Research Division, Lawrence Berkeley National Laboratory, Berkeley, CA 94720, USA}
\author{Kevin D. Stubbs}
\email{kstubbs@berkeley.edu}
\address{Department of Mathematics, University of California, Berkeley, CA 94720, USA}

\begin{abstract}
One of the most remarkable theoretical findings in magic angle twisted bilayer
graphene (TBG) is the emergence of ferromagnetic Slater determinants as exact
ground states for the interacting Hamiltonian at the chiral limit. This
discovery provides an explanation for the correlated insulating phase which has been experimentally observed at half
filling. This work is the first mathematical study of interacting models in
magic angle graphene systems. These include not only TBG but also TBG-like
systems featuring four flat bands per valley, and twisted trilayer graphene
(TTG) systems with equal twist angles. We identify symmetries of the
Bistritzer-MacDonald Hamiltonian that are responsible for characterizing the
Hartree-Fock ground states as zero energy many-body ground states. Furthermore,
 for a general class of Hamiltonian, we establish criteria  that the ferromagnetic Slater determinants
are the unique ground states within the class of uniformly half-filled, translation invariant Slater determinants. We
then demonstrate that these criteria can be explicitly verified for TBG and
TBG-like systems at the chiral limit, using properties of Jacobi-$\theta$ and
Weierstrass-$\wp$ functions.
\end{abstract}

\maketitle 
\section{Introduction}

Over the past few years, moir{\'e} structures, notably ``magic angle'' twisted bilayer graphene (TBG), have attracted significant attention in the condensed matter physics community.
This surge of interest began after numerous experiments \cite{2018Nature,Serlin,CaoFatemiDemir2018,LuStepanovYang2019,YankowitzChenPolshyn2019} which demonstrate that TBG has nearly flat energy bands and can host intricate phases, such as the correlated insulator (CI) phases at integer fillings and superconducting (SC) phases at non-integer fillings.
Before these exciting experimental results however, the Bistritzer-MacDonald (BM) model \cite{BistritzerMacDonald2011} successfully predicted these nearly flat energy bands near the magic angle of roughly $1.1^{\circ}$.
Yet, the BM model does not directly include electron-electron repulsion (i.e., it is a non-interacting model) and in fact, due to symmetry restrictions, the BM model predicts that twisted bilayer graphene is always in a simple metallic phase.
As a result, the CI and SC phases must arise from electron-electron interactions beyond the non-interacting BM model.

At the chiral limit, we can define a flat-band interacting (FBI) Hamiltonian for TBG that only consists of terms describing electron-electron interactions~\cite{BultinckKhalafLiuEtAl2020,ChatterjeeBultinckZaletel2020,SoejimaParkerBultinckEtAl2020,XieMacDonald2020,WuSarma2020,DasLuHerzog-Arbeitman2021,BernevigSongRegnaultEtAl2021,LiuKhalafLee2021,SaitoGeRademaker2021,JiangLaiWatanabe2019,PotaszXieMacDonald2021,LiuKhalafLeeEtAl2021,FaulstichStubbsZhuEtAl2023}.
One of the most remarkable theoretical results on this model is that, at half filling, the FBI Hamiltonian is \emph{frustration-free}, i.e., the Hamiltonian can be written as a sum of terms (in this case, these terms do not commute), and there exists a ground state that minimizes the energy of each term.
For the FBI Hamiltonian, one possible ground state is a single Slater determinant, which is the simplest type of fermionic many-body wavefunctions.
Moreover, it can be shown that for this ground state, in the thermodynamic limit, a strictly positive amount of energy is needed to either add or remove an electron.
This offers \emph{one} possible explanation for the CI phase in TBG at half-filling.

To our knowledge, mathematical studies so far have focussed on understanding the BM model of twisted bilayer and multilayer graphene at the chiral limit \cite{BeckerEmbreeWittstenEtAl2022, BeckerEmbreeWittstenEtAl2021, becker2023chiral,BeckerHumbertZworski2023,BeckerHumbertZworski2022a, BeckerHumbertZworski2022, Yang2023, WatsonLuskin2021}, and systematic derivations of the BM Hamiltonian from more complex models such as the atomistic tight binding models \cite{WatsonKongMacDonaldEtAl2022,CancesGarrigueGontier2023,cancs2023semiclassical}.
In this paper, we provide the \emph{first} mathematical study of interacting systems, and specifically the FBI Hamiltonian and its ground states. We note that our analysis applies to the FBI Hamiltonian defined on a single valley of the moir{\'e} Brillouin zone.

Our analysis is based on the symmetries of the BM Hamiltonian at the chiral limit. These symmetries are preserved as the number of layers and flat bands varies, which allows us to study TBG as well as TBG-like systems, including twisted multilayer graphene structures.
For example, the standard chiral model for TBG has two layers and two flat bands; a system we refer to as TBG-2.
By changing the interlayer potential, it is possible for two layers of graphene to host four flat bands while still preserving the relevant symmetries \cite{BeckerHumbertZworski2023}; we refer to this system as TBG-4.
We can also consider a twisted trilayer graphene (TTG) system, where the relative twist between the top and middle and the relative twist between the middle and bottom are the same, and can host 
four flat bands \cite{PopovTarnopolsky2023}.  We refer to this system as the equal angle twisted trilayer graphene (eTTG-4).

Although the FBI Hamiltonian is an idealized and simplified representation of
interacting electrons, it still contains many intricate details.
Our paper does not discuss the derivation of the FBI Hamiltonian from the BM model, or the derivation of the BM model from the finer level atomistic models. 
We assume  readers are familiar with second quantization and standard treatments of non-interacting periodic systems.

\subsection{Notation}\label{sec:mainresults}
In this paper, operators and matrices acting on the Fock space are represented using the hat notation, such as $\hat{f}^{\dag}, \hat{f}, \hat{H}_{FBI}$. Operators and matrices that operate in the single particle space (such as $L^2(\RR^2;\CC^2\times \CC^2)$ for TBG) are indicated without the hat notation, as seen in $H, D$. For a single particle operaotr $H$, the Bloch-Floquet transformed Hamiltonian is denoted by $H_\vk$, where $\vk$ is the Brillouin zone index. With some slight abuse of the hat notation, vectors defined in real space are denoted without the hat, for example, $u(\vr)$. Their corresponding Fourier transforms are indicated with the hat notation, as in $\hat{u}(\vq)$.
For any given matrix $A$, the operations of entrywise complex conjugation, transpose, and Hermitian conjugation are represented by $\overline{A}, A^{\top},$ and $A^{\dag}$, respectively.

Due to the symmetries present in twisted graphene systems, the number of flat bands is always an even number which we will denote by $2M$. We denote the set of indices of the flat bands by $\mc{N}=\{ -M, \cdots, -1, 1, \cdots, M\}$. 
After a proper discretization of a single valley of the moir{\'e} Brillouin zone into a discrete set $\mc{K}$ with $N_{\vk}$ points (see the definition of $\mc{K}$ in \eqref{eq:mcK}), we can consider a finite dimensional Hilbert space $\mc{F}$ consisting of $2M N_{\vk}$ fermionic modes.
This space is spanned by $2^{2M N_{\vk}}$ basis vectors of the form
\begin{equation}
\ket{s_{1},\ldots,s_{2MN_{\vk}}}=\prod_{n\in \mc{N}}\prod_{\vk \in \mathcal K}(\hat{f}^{\dag}_{n\vk})^{s}\ket{\vac}, \quad s_j\in\{0,1\},
\end{equation}
where $\hat{f}_{n\vk}^{\dag}, \hat{f}_{n\vk}$ are fermionic creation and annihilation operators satisfying the canonical anticommutation relation (CAR), and the vacuum state $\ket{\vac}=\ket{0,\ldots,0}$ satisfies $\hat{f}_{n\vk}\ket{\vac}=0$ for each $n \in \mc{N}$ and $\vk \in \mc{K}$.

The FBI Hamiltonian takes the form 
\begin{equation}\label{eqn:FBI_abstract1}
  \hat{H} = \sum_{\vq'} \hat{A}_{\vq'}^{\dag} \hat{A}_{\vq'},
\end{equation}
where each $\hat{A}_{\vq'}$ takes the form
\begin{equation}\label{eqn:FBI_abstract2}
\hat{A}_{\vq'}  = \sum_{\vk \in \mc{K}}\sum_{m, n \in\mc{N}} [\alpha_{\vk,\vk+\vq'}]_{m,n} \hat{f}^{\dag}_{m\vk} \hat{f}_{n(\vk + \vq')} + [\beta_{\vk,\vk+\vq'}]_{m,n}.
\end{equation}
Here, the sum over $\vq' \in \R^{2}$ is taken over a discrete set which will be specified later; $\vq'$ represents a momentum difference and so the terms in $\hat{A}_{\vq'}$ connect states at momentum $\vk$ and $\vk + \vq'$.
The coefficient matrices $\alpha_{\vk,\vk+\vq'}$ and $\beta_{\vk,\vk+\vq'}$ are matrices of size $2 M \times 2M$ (defined in \cref{eq:h-fbi}) depend on the eigenstates of the non-interacting BM Hamiltonian at momenta $\vk$ and $\vk + \vq'$ and inherit a number of symmetries from the BM Hamiltonian.

The total number operator $\hat{N}=\sum_{n\in\mc{N}} \sum_{\vk} \hat{f}^{\dag}_{n\vk} \hat{f}_{n\vk}$ counts the number of electrons in a state, i.e.,
if $\hat{N}\ket{\psi}=\nu N_{\vk}\ket{\psi}$, then the number of electrons in $\ket{\psi}$ is $\nu N_{\vk}$. The number operator
$\hat{N}$ commutes with $\hat{H}$ so we can restrict to functions that are simultaneously eigenfunctions of $\hat{H}$ and $\hat{N}$.
The integer filling regime refers to the case when $\nu$ is an integer ($0\le \nu\le 2M$).
One particular integer filling is $\nu=M (= \frac{1}{2} (2M))$, and the corresponding $\ket{\psi}$ is also called a \emph{half-filled} state.
If we further have $\sum_{n\in\mc{N}} \hat{f}^{\dag}_{n\vk} \hat{f}_{n\vk}\ket{\psi}=M\ket{\psi}$ for all $\vk\in \mc{K}$, then $\ket{\psi}$ is a \emph{uniformly half-filled} state.
The one-body reduced density matrix (1-RDM) is defined as $[P(\vk,\vk')]_{nm}=\braket{\psi|\hat{f}^{\dag}_{m\vk'} \hat{f}_{n\vk}|\psi}$. If $[P(\vk,\vk')]_{nm}=[P(\vk)]_{nm} \delta_{\vk,\vk'}$, then $\ket{\psi}$ is a \emph{translation invariant} state.

This paper considers the uniformly half-filled, translation invariant states, which forms a subspace of $\mc{F}$ denoted by $\mc{F}_u$. The ground-state energy in $\mc{F}_u$ is the solution to the following optimization problem
\begin{equation}
E=\min_{\ket{\psi}\in \mc{F}_u} \frac{\braket{\psi|\hat{H}|\psi}}{\braket{\psi|\psi}},
\end{equation}
and its minimizer (which may not be unique) is called a ground state. For any $\ket{\psi}\in\mc{F}$, by construction $\braket{\psi|\hat{H}|\psi}=\sum_{\vq'} \braket{\hat{A}_{\vq'} \psi|\hat{A}_{\vq'} \psi}\ge 0$. Therefore $\hat{H}$ is a positive semidefinite (PSD) Hamiltonian, and $\ket{\psi}\in\mc{F}_u$ satisfying $\hat{H}\ket{\psi}=0$ is a ground state.

We further consider a subset of $\mc{F}_u$  
\begin{equation}\label{eqn:slater_S}
\mc{S}=\left\{\prod_{i=1}^{M} \prod_{\vk\in\mc{K}} \hat{b}_{i\vk}^{\dag}\ket{\vac}\Big\vert~\hat{b}^{\dag}_{i\vk}=\sum_{n\in \mathcal N}\hat{f}^{\dag}_{n\vk}[\Xi(\vk)]_{ni}, \quad \sum_{n \in \mathcal N} [\overline{\Xi(\vk)}]_{ni} [\Xi(\vk)]_{nj}=\delta_{ij}\right\}.
\end{equation}
Each element of $\mc{S}$ is called a uniformly half-filled, translation invariant Slater determinant. The \emph{Hartree-Fock} theory solves a much simpler optimization problem
\begin{equation}\label{eqn:HF_min}
E_{\text{HF}}=\min_{\substack{\ket{\psi}\in \mc{S}}} \frac{\braket{\psi|\hat{H}|\psi}}{\braket{\psi|\psi}}.
\end{equation}
By definition we have $E\le E_{\text{HF}}$ and in general $E< E_{\text{HF}}$.

\subsection{Main results}\label{sec:mainresults}

In this section, we summarize the main findings of this paper,
emphasizing the algebraic structures, while reserving some of the more
technical specifics for later in the paper.

Our first result is that for the FBI Hamiltonian, there exist two states in $\mc{S}$ which are exact ground states. This also implies that the Hartree-Fock theory is exact in this case.

\begin{result}[Informal version of \cref{prop:hf-gs}]
  \label{result:exact-gs}
The single Slater determinants
  \begin{equation}\label{eqn:ferromagnetic}
    \ket{\Psi_{+}}=\prod_{n>0,n\in\mc{N}} \prod_{\vk\in\mc{K}}\hat{f}_{n\vk}^{\dagger} \ket{\vac}, \quad
    \ket{\Psi_{-}}=\prod_{n<0,n\in\mc{N}} \prod_{\vk\in\mc{K}}\hat{f}_{n\vk}^{\dagger},
  \end{equation}
  satisfy
  \begin{equation}
  \hat{A}_{\vq'} \ket{\Psi_{\pm}}=0
  \end{equation}
  for all $\vq'$. Therefore $E=E_{\operatorname{HF}}=0$, and $\ket{\Psi_{\pm}}$ are exact ground states of $\hat{H}$.
\end{result}

The proof of \cref{result:exact-gs} is a generalization of the results in \cite{BernevigSongRegnaultEtAl2021,BultinckKhalafLiuEtAl2020} for TBG. The mechanism of constructing these exact ground states is to verify that $\ket{\Psi_{\pm}}$ are ground states of \emph{each} $\hat{H}_{\vq'}=\hat{A}^{\dag}_{\vq'} \hat{A}_{\vq'}$, even though the terms $\hat{H}_{\vq'}$ do not commute with each other. Therefore the FBI Hamiltonian is an  example of a frustration-free Hamiltonian with nonlocal interactions.

However, there may be other ground states. For instance, any non-vanishing linear combination $a_{+}\ket{\Psi_{+}}+a_{-}\ket{\Psi_{-}}$ is automatically a ground state.
It is in general a difficult task to identify the entire ground state manifold.
Our main result identifies additional assumptions under which $\ket{\Psi_{\pm}}$ are the \emph{unique} states\footnote{With slight abuse of language, we refer to the statement that the ferromagnetic Slater determinant states are the only elements of $\mc{S}$ which are exact ground states of $\hat{H}$ as  ``unique states'', even though technically there are two of them.
} in $\mc{S}$ that are ground states of $\hat{H}$.
\begin{result}[Main result. Informal version of~\cref{thm:hf-gs-unique}]\label{result:main-informal}
  Assume that  the set of matrices $\{ \alpha_{\vk,\vk+\vq'}, \beta_{\vk,\vk+\vq'}\}$ satisfy additional non-degeneracy assumptions (\cref{thm:hf-gs-unique}), then $\ket{\Psi_{\pm}}$ in \cref{eqn:ferromagnetic} are the unique states in $\mc{S}$ that are exact ground states $\hat{H}$.
\end{result}

Both \cref{result:exact-gs} and \cref{result:main-informal} are apply to general FBI Hamiltonians which can be used to describe the interacting electrons in twisted $N$-layer graphene systems. 
The non-degeneracy assumptions can be explicitly verified for specific systems. We verify these conditions for TBG-2, TBG-4, and eTTG-4 systems using the analytic expression for $\{\alpha_{\vk,\vk+\vq'},\beta_{\vk+\vq'}\}$.
This leads to the following results.

\begin{result}[Informal version of \cref{thm:application-tbg-2,,thm:application-tbg-4,,thm:application-ettg-4}]\label{result:unique}
For TBG-2, TBG-4, and eTTG-4, $\ket{\Psi_{\pm}}$ in \cref{eqn:ferromagnetic} are the unique states in $\mc{S}$ that are exact ground states of $\hat{H}$.
\end{result}

While \cref{result:exact-gs} and \cref{result:main-informal} provide a characterization of the ground state, these results alone do not explain why the uniformly half-filled TBG-2 system (and similar systems) are insulators. In~\cref{sec:mb-charge-gap}, we compute the ground state energy of systems with $MN_{\vk}+1$ and  $MN_{\vk}-1$ electrons. Both energies are finite and non-zero in the thermodynamic limit ($N_{\vk}\to \infty$). In other words, it costs a finite amount of energy to add or to remove an electron from the system. Even if the precise nature of the state can be debated, this shows that the system cannot be in a simple metallic state.

\subsection{Discussion and open questions}
The states $\ket{\Psi_{\pm}}$ are sometimes referred to as ``ferromagnetic Slater determinants''. This term is inherited from the physics context where the modes created by $\hat{f}_{n\vk}$ with $n=-M,\ldots, -1$ and with $n=1,\ldots,M$ are vectors of the sublattice symmetry operator $\mc{Z}$ with eigenvalues $+1$ and $-1$, respectively.  These modes can be interpreted as ``pseudo'' up spins and down spins. With this analogy, we can identify $\ket{\Psi_{+}}$ with
$\ket{\uparrow\uparrow\uparrow\cdots}$ and $\ket{\Psi_{-}}$ with $\ket{\downarrow\downarrow\downarrow\cdots}$, which are ground states of a ferromagnet.
The states $\ket{\Psi_{\pm}}$ also have nonvanishing Chern numbers, and are also referred to as (integer) quantum hall states~\cite{BultinckChatterjeeZaletel2020}.

The mechanism for the uniqueness of the ferromagnetic Slater determinant ground states in twisted graphene is rather different from previous uniqueness results in many-body systems (for example, on the Hubbard model on a planar graph \cite{Mielke1991a,Mielke1991,Mielke1992}). 
Unlike these results, the FBI Hamiltonian involves interactions between all momenta and the ground state is not determined by the connectivity of the underlying graph structure.
Instead, the selection of ferromagnetic Slater determinant as the ground states seems to be related to the joint invariant subspaces of the collection of matrices $\{\alpha_{\vk,\vk+\vq'},\beta_{\vk+\vq'}\}$; any state whose one-body reduced density matrix does not lie in an invariant subspace acquires an energy penalty.

These conditions in \cref{result:main-informal} are physically relevant. This paper focuses on the spinless, valleyless (or in physical terms, spin and valley polarized) FBI Hamiltonian. When valley degrees of freedom are taken into account, these assumptions become invalid. This leads to additional ground states in $\mc{S}$, such as the inter-valley coherent (IVC) states~\cite{BultinckKhalafLiuEtAl2020,BernevigSongRegnaultEtAl2021}. The generalization of our result to  FBI Hamiltonians taking into account both spins and valleys will be our future work.

On the other hand, the conditions in \cref{result:main-informal} are difficult to verify. They can be relaxed and explicitly verified for $M=1$ or $2$, as shown in \cref{result:unique}. However, for $M>2$, a generally computationally verifiable method for reformulating these conditions remains unknown.

Even when the conditions are satisfied, \cref{result:main-informal} only establishes uniqueness among uniformly half-filled, translation-invariant Slater determinants.  We conjecture that at half filling, the uniqueness result should extend to all single Slater determinants.
This requires techniques beyond the scope of this initial work and will be our future research.

\subsection{Organization}
Our article is structured as follows:
\begin{itemize}
    \item In Section \ref{sec:FBI} we define the flat-band interacting model for twisted graphene systems.
    \item In Section \ref{sec:revi-twist-bilay} we review the chiral model of twisted graphene sheets.
    \item In Section \ref{sec:requ-symm-gauge}, we outline the symmetries that are relevant for our analysis and fix a gauge of Bloch functions.
\item In Section \ref{sec:many-body-properties}, we review basics on Hartree-Fock theory for FBI Hamiltonians  and prove some of their important many-body properties.
    
    \item In Section \ref{sec:hartree-fock-ground}, we characterize the ground states of the flat band interacting model and state our main result, Theorem \ref{thm:hf-gs-unique}.
    \item In Section \ref{sec:proof-thm-hf-gs}, we give the proof of our main result, Theorem \ref{thm:hf-gs-unique}.
    \item In Section \ref{sec:appl-main-result}, we verify the assumptions of Theorem \ref{thm:hf-gs-unique} for TBG-2, TBG-4, and eTTG-4.
    \item Our article contains three technical appendices, \cref{sec:reform-fock-energy,,sec:proof-full-rank,,sec:real-space-proof} where derive~\cref{eq:fock-energy-proof}, state the proof of \cref{lem:full-rank}, and derive a real space condition for the assumptions of \cref{thm:hf-gs-unique}, respectively.

\end{itemize}

\subsection*{Acknowledgments}
This work was supported by the Simons Targeted Grants in Mathematics and Physical Sciences on Moir\'e Materials Magic (K.D.S.) as well as the SNF Grant PZ00P2 216019 (S.B.). 
This material is based upon work supported by the U.S. Department of Energy, Office of Science, Office of Advanced Scientific Computing Research and Office of Basic Energy Sciences, Scientific Discovery through Advanced Computing (SciDAC) program under Award Number DE-SC0022198 (L.L.). L.L. is a Simons Investigator in Mathematics.  We thank Dumitru Calugaru, Eslam Khalaf, Patrick Ledwith, and Oskar Vafek for their insightful discussions.

\section{The Flat-Band Interacting Hamiltonian for Twisted Graphene}
\label{sec:FBI}
\subsection{Notational Setup for Twisted Graphene}
Before defining the flat-band interacting model for twisted graphene, we introduce a general notation for twisted $N$-layer graphene based on the Bistritzer-MacDonald model.
Twisted bilayer and trilayer graphene correspond to choosing $N = 2$ and $N = 3$ respectively.

After a proper choice of units, the lattice vectors for the moir{\'e} unit cell in the real space are 
\begin{equation}
\label{eq:lattice_vector}
  \vv_1 := -\begin{bmatrix}\frac{2\pi}{\sqrt{3}},\frac{2\pi}{3}\end{bmatrix}^\top \qquad \vv_{2} := \begin{bmatrix}\frac{2\pi}{\sqrt{3}}, - \frac{2\pi}{3}\end{bmatrix}^\top.
\end{equation}
These vectors generate the real space moir{\'e} lattice $\Gamma$:
\begin{equation}
  \Gamma := \vv_{1} \Z + \vv_{2} \Z = \{ a_{1} \vv_{1} + a_{2} \vv_{2} : a_{1}, a_{2} \in \Z  \}.
\end{equation}
The moir{\'e} unit cell in the real space is denoted by $\Omega := \R^{2} / \Gamma$ and can be identified with
\begin{equation}
  \Omega := \left\{ t_1 \vv_1 + t_2 \vv_2 ; t_1,t_2 \in \left[-\frac{1}{2},\frac{1}{2} \right) \right\}.
\end{equation}
The dual lattice (or the reciprocal lattice) of $\Gamma$ is denoted by $\Gamma^{*}$, and is generated by
\begin{equation}
\label{eq:dual_lattice_vector}
  \vg_1 := -\begin{bmatrix}\frac{\sqrt{3}}{2},\frac{3}{2}\end{bmatrix}^\top \qquad  \vg_{2} := \begin{bmatrix}\frac{\sqrt{3}}{2},-  \frac{3}{2}\end{bmatrix}^\top.
\end{equation}
Note that $\vv_{i} \cdot \vg_{j} = 2 \pi \delta_{ij}$. 
The unit cell in the reciprocal space (or the Brillouin zone) is denoted by $\Omega^{*} := \R^{2} / \Gamma^{*}$ and can be identified with
\begin{equation}
  \Omega^* =\left\{ t_1 \vg_1 + t_2 \vg_2 : t_1,t_2 \in \left[-\frac{1}{2},\frac{1}{2}\right)\right\}.
\end{equation}
Throughout the paper, a vector in the reciprocal lattice $\Gamma^*$ is often denoted by $\vG$, while a vector in the Brillouin zone $\Omega^*$ is often denoted by $\vk$ or $\vq$. Note that a generic vector $\vq'\in\RR^2$ can always be uniquely decomposed as $\vq'=\vq+\vG$ for some $\vq\in \Omega^*$ and $\vG\in \Gamma^*$.

In addition to the generating vectors $\vg_{1}$ and $\vg_{2}$, we also identify three special momentum vectors $\vq_{1}, \vq_{2}, \vq_{3}$ which are related to each other by a $\frac{2 \pi}{3}$-counterclockwise rotation $R_{3}$:
\[
  R_{3} := \frac{1}{2}
  \begin{bmatrix}
    -1 & -\sqrt{3} \\
    \sqrt{3} & -1
  \end{bmatrix}.
\]
In particular,
\begin{equation}
  \label{eq:q1q2q3-def}
\vq_{1} := [0, 1]^{\top} \quad \vq_{2} := R_{3} \vq_{1} = [-\sqrt{3}/2, -1/2]^{\top}  \quad \vq_{3} := R_{3} \vq_{2} = [\sqrt{3}/2, -1/2]^{\top}
\end{equation}
Notice that $\vg_{1} = \vq_{2} - \vq_{1}$ and $\vg_{2} = \vq_{3} - \vq_{1}$.

Each moir{\'e} unit cell consists of many atoms located on one of the two sublattices, denoted by $\{ A = 1, B = -1\}$. In the BM model for twisted bilayer and multilayer graphene, all atomistic details in each layer are smeared out except the sublattice information.
The BM Hamiltonian $H$ (called a single particle Hamiltonian) acts on a dense subset of the function space $L^2(\RR^2;\CC^2\times \CC^N)$, where $N$ is the number of layers.
A function in $L^2(\RR^2;\CC^2\times \CC^N)$ can be written as $\psi(\vr;\sigma,\ell)$, where $\vr=[x_1,x_2]^{\top}\in\RR^2$ is the real space coordinate, $\sigma\in\{\pm1\}$ is the sublattice index, and $\ell\in\{1,\ldots,N\}$ is the layer index.
The BM Hamiltonian satisfies certain translation symmetry with respect to the moir{\'e} lattice $\Gamma$. Therefore an eigenfunction of $H$ can be labeled as $\psi_{n\vk}(\vr;\sigma,\ell)$.
Here $n$ is called the band index, $\vk= (k_{1}, k_{2})\in \Omega^*$ is the Bloch vector.
For example, a wavefunction for the Hamiltonian of TBG can be written as
\begin{equation}
   \begin{bmatrix} \psi_{n\vk}(\vr; A, 1) & \psi_{n\vk}(\vr; A, 2) & \psi_{n\vk}(\vr; B, 1) & \psi_{n\vk}(\vr; B, 2) \end{bmatrix}^{\top} \\
\end{equation}
and for TTG:
\begin{equation}
    \begin{bmatrix} \psi_{n\vk}(\vr; A, 1) & \psi_{n\vk}(\vr; A, 2) & \psi_{n\vk}(\vr; A, 3) & \psi_{n\vk}(\vr; B, 1) & \psi_{n\vk}(\vr; B, 2) & \psi_{n\vk}(\vr; B, 3)  \end{bmatrix}^{\top}.
\end{equation}

\subsection{The Flat-Band Interacting (FBI) Hamiltonian}
While we adopt first quantization in expressing the BM Hamiltonian (see \cref{sec:revi-twist-bilay}), the FBI Hamiltonian is most conveniently expressed in second quantization.
The FBI Hamiltonian is defined in terms of the flat-band eigenfunctions of the single particle Hamiltonian $H$; therefore, our first step will be to construct such eigenfunctions.
We make the following mild assumptions on $H$:
\begin{assumption}
  \label{assume:h}
  We assume that $H$ is periodic with respect to the lattice $\Gamma$ and has Bloch eigenpairs $\{ (\epsilon_{n\vk}, \psi_{n\vk}) : n \in \Z\setminus \{0\}, \vk \in \Omega^{*} \}$ so that the eigenvalues $\epsilon_{n\vk}$ are ordered in non-decreasing order and
\begin{equation}
  \begin{split}
  H \psi_{n\vk}(\vr) & = \epsilon_{n\vk} \psi_{n\vk}(\vr), \\
    \psi_{n\vk}(\vr + \va) & = e^{i \va \cdot \vk} \psi_{n\vk}(\vr) \qquad \forall \va \in \Gamma, \\
   \cdots \leq \epsilon_{-1\vk} \leq & ~0 \leq \epsilon_{1\vk} \leq \cdots \qquad \forall \vk \in \Omega^{*}.
  \end{split}
\end{equation}

We furthermore define the Bloch-Floquet transformed Hamiltonian $H_{\vk} := e^{i \vk \cdot \vr} H e^{-i \vk \cdot \vr}$ for all $\vk \in \Omega^*$ and assume that
\begin{enumerate}
  \item The resolvent operator $(H_{\vk} - i)^{-1}$ is compact.
  \item If we identify $\vk \in \Omega^*$ with the complex number $k_{x} + i k_{y} \in \CC$, the operator valued complex map $\vk \mapsto (H_{\vk} - i)^{-1}$ depends analytically on $\vk$. 
  \item There exists a non-empty set of flat bands $\mc{N}$
    \begin{equation}
    \mc{N} := \{ n \in \Z : \forall \vk \in \Omega^{*}, \, \epsilon_{n\vk} = 0 \}.
    \end{equation} 
\end{enumerate}
\end{assumption}

\begin{figure}
  \centering
  \includegraphics[width=7cm]{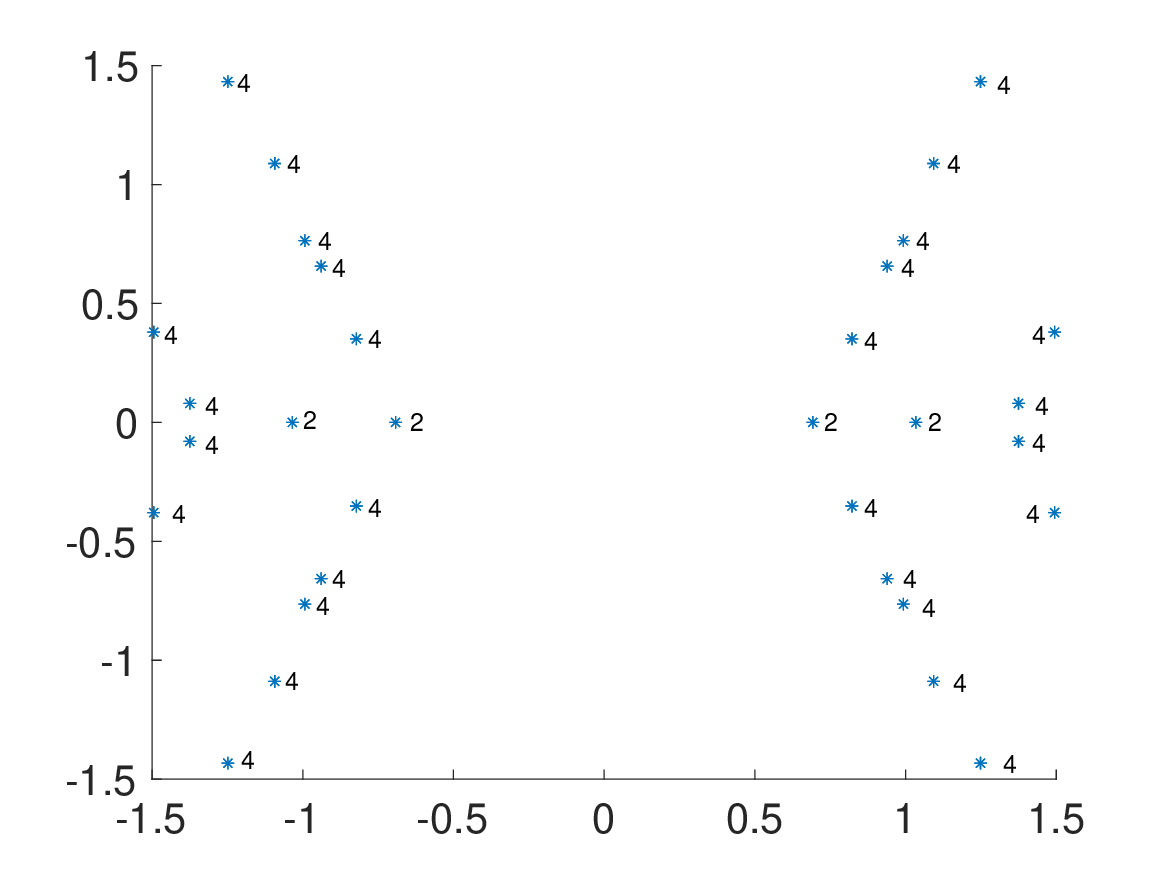}
  \includegraphics[width=7cm]{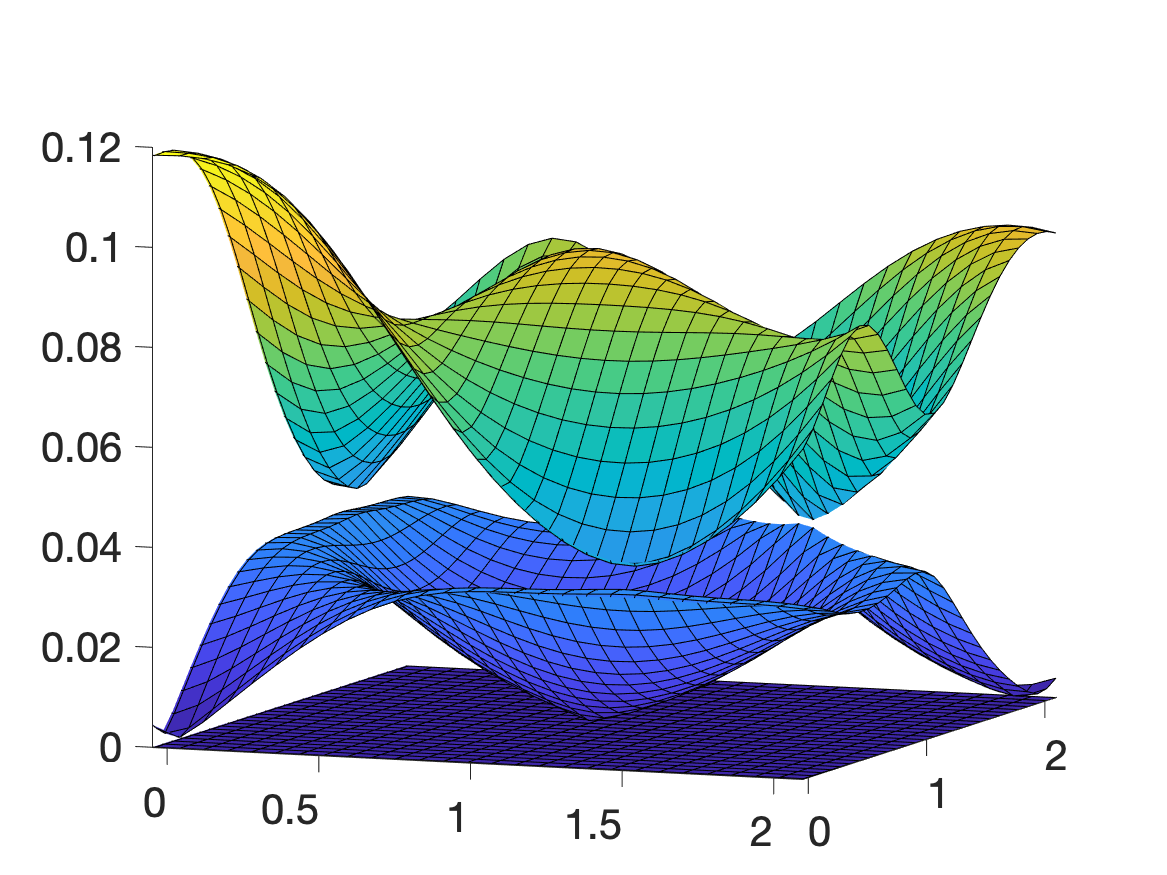}
  \caption{Magic angles with multiplicities in the chiral limit of 7 equally twisted layers of graphene with multiplicities and $U_+=U_0$, see \eqref{eq:ref_pots}. Band configuration for largest magic angle $\alpha = 0.6922$ showing the six lowest bands with positive energy including the four flat ones showing three flat band crossings at high symmetry points.}
  \label{fig:trilayer-magic-angle-bands}
\end{figure}

In some systems, such as twisted trilayer graphene (TTG), there can be energy bands which intersect with the flat bands but are not flat themselves.
This can happen in all $N$-layer systems with $N\ge 3$ (see Figure \ref{fig:trilayer-magic-angle-bands} for an example).
Let us define such the set of momenta where such crossings occur as follows:
\begin{equation}
\label{eq:k-crossing}
\mc{K}_{\rm crossing} = \{ \vk \in \Omega^{*} : \epsilon_{n\vk} = 0, \text{ for some }n \not\in \mc{N} \}.
\end{equation}
Due to the analyticity assumption on $H_{\vk}$, standard results on analytic families of operators \cite[VII.5, Theo 1.10]{Kato} implies that the set of such crossings must have an empty interior.
Furthermore, Rellich's theorem implies that the spectral projectors of $H_{\vk}$ may be chosen analytically throughout the entire Brillioun zone \cite[VII.5, Theo 3.9]{Kato}.
Therefore, we will assume without loss of generality that the flat-band eigenfunctions $\{ \psi_{n\vk} : n \in \mc{N}, \vk \in \mc{K} \}$ have been chosen so that the spectral projection onto these eigenfunctions is analytic in $\vk$.

In order to write down a many-body wavefunction with a finite number of particles, we discretize the Brillouin zone using a uniform grid $\mc{K}$ of size $(n_{k_x}, n_{k_y})$
\begin{equation}
\label{eq:mcK}
\mc{K} := \left\{ \frac{i}{n_{k_{x}}} \vg_1 + \frac{j}{n_{k_{y}}} \vg_2 : i \in \{0, 1, \cdots, n_{k_{x}} - 1\}, j \in \{0, 1, \cdots, n_{k_{y}} - 1 \} \right\} \subseteq \Omega^*.
\end{equation}
Here $\vg_1$, $\vg_2$ are a pair of generating vectors for the moir{\'e} dual lattice $\Gamma^{*}$, see \eqref{eq:dual_lattice_vector}, and we define $N_{\vk} := \# | \mc{K} | = n_{k_{x}} n_{k_{y}}$.

The set $\mc{K}$ is known as a Monkhorst-Pack grid, and the \emph{thermodynamic limit} (TDL) can be reached by taking $n_{k_x},n_{k_y}\to \infty$.
We also make the following technical assumption on the grid $\mc{K}$ in relation to the Hamiltonian $H$:
\begin{assumption}
  \label{assume:grid}
  For each $\vk \in \mc{K}$, let $\Pi(\vk)$ be the orthogonal projector onto the flat-band eigenfunctions at $\vk$.
  We assume that the grid $\mc{K}$ has been chosen so that for all pairs of momenta $\vk, \vk' \in \mc{K}$ there exists a sequence of momenta $\{ \vk_{i} \}_{i=1}^L$ so that $\vk_{1} = \vk$, $\vk_{L} = \vk'$ and $\| \Pi(\vk_{i+1}) - \Pi(\vk_{i}) \| < 1$ for all $i \in \{ 1, \ldots, L \}$.
\end{assumption}
\begin{rem}
  Since $H_{\vk}$ is assumed to be analytic (\cref{assume:h}), the resolvent mapping $\vk \mapsto (H_{\vk}- i)^{-1}$ is Lipschitz in $\vk$.
  Therefore, for \cref{assume:grid} to be true, it is sufficient for $n_{k_{x}} \gtrsim L$ and $n_{k_{y}} \gtrsim L$ where $L$ is the Lipschitz constant for $(H_{\vk} - i)^{-1}$.
\end{rem}

Having fixed the grid $\mc{K}$ and the flat-band eigenfunctions, we may now define the FBI Hamiltonian.
For each flat-band $\psi_{n\vk}(\vr),\vk\in\mc{K}$, we first define the band creation and annihilation operators, $\hat{f}_{n\vk}^\dagger$ and $\hat{f}_{n\vk}$, which create or annihilate a particle in state $\psi_{n\vk}(\vr)$ respectively, which satisfy the canonical anti-commutation relation (CAR):
\begin{equation}
\{\hat{f}^{\dag}_{n\vk},\hat{f}_{n'\vk'}\}
= \delta_{nn'} \delta_{\vk\vk'}, \quad \{\hat{f}^{\dag}_{n\vk},\hat{f}^{\dag}_{n'\vk'}\}
=\{\hat{f}_{n\vk},\hat{f}_{n'\vk'}\}=0.
\end{equation}
The definition of the creation and annihilation operators can be periodically extended outside the Brillouin zone as 
\begin{equation}
  \label{eq:period_f}
\hat f^{\dag}_{n(\vk+\vG)}=\hat f^{\dag}_{n\vk}, \quad \hat f_{n(\vk+\vG)}=\hat f_{n\vk}, \quad \vG\in\Gamma^*.
\end{equation}
Next, we define the flat-band periodic Bloch functions
\begin{equation}
  \label{eq:periodic-bloch}
  u_{n\vk}(\vr) := e^{-i \vk \cdot \vr} \psi_{n\vk}(\vr)
\end{equation}
which are normalized in the unit cell, i.e.,  $\int_{\Omega} |u_{n\vk}(\vr)|^{2} \ud\vr = 1$.
Let $\hat{u}_{n\vk}(\vG)$ denote the Fourier coefficients of $u_{n\vk}(\vr; \sigma, j)$
\begin{equation}
\label{eq:Fourier}
 \hat{u}_{n\vk}(\vG; \sigma, j) := \int_{\Omega} e^{-i \vG \cdot \vr} u_{n\vk}(\vr; \sigma, j)  d\vr.
\end{equation}
and define the \textit{form factor}
\begin{equation}
  \label{eq:form-factor-def}
[\Lambda_{\vk}(\vq + \vG)]_{mn} := \frac{1}{| \Omega |} \sum_{\vG' \in \Gamma^*} \sum_{\sigma,j}\overline{\hat{u}_{m\vk}(\vG'; \sigma, j)} \hat{u}_{n(\vk + \vq +\vG)}(\vG'; \sigma, j).
\end{equation}
Given $\vk,\vq\in \Omega^*,\vG\in\Gamma^*$, $\Lambda_{\vk}(\vq + \vG)$ is a $\# |\mc{N}| \times \# |\mc{N}|$ momentum dependent matrix. Note that $\{\vq+\vG~|~\vq\in \Omega^*,\vG\in\Gamma^*\}=\RR^2$; this notation facilitates later discussions.

The following identities which will be useful at various points in our proof.
\begin{lemm}
  \label{lem:form-factor-identities}
  The form factor matrix $\Lambda_{\vk}(\vq + \vG)$ satisfies the following identities for all $\vk,\vq \in \mc{K}$ and all $\vG, \vG' \in \Gamma^*$
  \begin{align}
   \Lambda_{\vk}(\vq + \vG)^{\dagger} &  = \Lambda_{\vk+\vq}(-\vq -\vG), \label{eq:form-factor-dagger} \\
   \Lambda_{\vk + \vG'}(\vq + \vG) & = \Lambda_{\vk}(\vq + \vG). \label{eq:form-factor-shift}
  \end{align}
\end{lemm}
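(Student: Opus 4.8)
The plan is to reduce both identities to the definition \eqref{eq:form-factor-def} together with the quasiperiodicity of the Bloch functions $\psi_{n\vk}$ and the way the periodic parts $u_{n\vk}$ are extended across the Brillouin zone. First I would record the basic bookkeeping fact that comes from \eqref{eq:periodic-bloch} and \cref{assume:h}: since $\psi_{n(\vk+\vG')}$ and $\psi_{n\vk}$ differ only by the phase convention implicit in the choice of Bloch gauge, the periodic parts satisfy $u_{n(\vk+\vG')}(\vr) = e^{i\vG'\cdot\vr} u_{n\vk}(\vr)$ (this is the standard identification, and the relation $\hat f^{\dag}_{n(\vk+\vG)} = \hat f^{\dag}_{n\vk}$ in \eqref{eq:period_f} is the Fock-space shadow of it). In Fourier coefficients this reads $\hat u_{n(\vk+\vG')}(\vG;\sigma,j) = \hat u_{n\vk}(\vG+\vG';\sigma,j)$, i.e. a shift of the reciprocal-lattice argument. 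I would state this as a preliminary observation, since both identities are essentially immediate consequences of it.

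For \eqref{eq:form-factor-shift}, I would substitute $\vk \mapsto \vk+\vG'$ into \eqref{eq:form-factor-def}. The term $\hat u_{m(\vk+\vG')}(\vG'';\sigma,j)$ becomes $\hat u_{m\vk}(\vG''+\vG';\sigma,j)$ by the preliminary observation, and likewise $\hat u_{n(\vk+\vG'+\vq+\vG)}(\vG'';\sigma,j) = \hat u_{n(\vk+\vq+\vG)}(\vG''+\vG';\sigma,j)$; after reindexing the sum $\vG'' \mapsto \vG'' - \vG'$ over $\Gamma^*$ (a bijection of $\Gamma^*$), one recovers exactly $[\Lambda_{\vk}(\vq+\vG)]_{mn}$. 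This is a one-line change of summation variable once the preliminary observation is in hand.

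For \eqref{eq:form-factor-dagger}, I would take the Hermitian conjugate of \eqref{eq:form-factor-def}: $[\Lambda_{\vk}(\vq+\vG)^{\dagger}]_{mn} = \overline{[\Lambda_{\vk}(\vq+\vG)]_{nm}} = \frac{1}{|\Omega|}\sum_{\vG'}\sum_{\sigma,j} \hat u_{n\vk}(\vG';\sigma,j)\,\overline{\hat u_{m(\vk+\vq+\vG)}(\vG';\sigma,j)}$. The goal is to match this against $[\Lambda_{\vk+\vq}(-\vq-\vG)]_{mn} = \frac{1}{|\Omega|}\sum_{\vG'}\sum_{\sigma,j}\overline{\hat u_{m(\vk+\vq)}(\vG';\sigma,j)}\,\hat u_{n(\vk+\vq-\vq-\vG)}(\vG';\sigma,j)$. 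Here the second argument is $\vk+\vq-\vq-\vG = \vk-\vG$; using the preliminary observation, $\hat u_{n(\vk-\vG)}(\vG';\sigma,j) = \hat u_{n\vk}(\vG'-\vG;\sigma,j)$ and $\hat u_{m(\vk+\vq)}(\vG';\sigma,j)$ stays as is, so after the substitution $\vG'\mapsto \vG'+\vG$ (again a bijection of $\Gamma^*$) the two expressions coincide. The only subtlety is confirming that the ``$-\vq-\vG$'' argument of $\Lambda_{\vk+\vq}$ is interpreted via the same decomposition $\RR^2 = \{\vq'+\vG\}$ used in \eqref{eq:form-factor-def}, which is exactly why the text emphasizes that $\Lambda_{\vk}(\cdot)$ is defined for all of $\RR^2$; I would remark on this explicitly.

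The main (and only) obstacle is purely notational: one must be careful that the Bloch-gauge identification $u_{n(\vk+\vG')} = e^{i\vG'\cdot\vr}u_{n\vk}$ is consistent with whatever gauge fixing was imposed on the flat-band eigenfunctions in \cref{sec:requ-symm-gauge}, and that the periodic extension of the creation operators in \eqref{eq:period_f} matches it. Once that consistency is checked (or taken as the definition of the extension outside $\Omega^*$), both identities are routine reindexings; there is no analytic content beyond the $L^2$-convergence of the Fourier series, which is already guaranteed by $u_{n\vk}\in L^2(\Omega)$.
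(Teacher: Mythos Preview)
Your argument is correct and follows the same route as the paper: both isolate the Fourier-coefficient shift $\hat u_{n(\vk+\vG')}(\vG)=\hat u_{n\vk}(\vG+\vG')$ and then obtain each identity by a reindexing of the $\Gamma^*$-sum. One minor slip: the real-space relation should read $u_{n(\vk+\vG')}(\vr)=e^{-i\vG'\cdot\vr}u_{n\vk}(\vr)$ (with a minus sign, coming from $u_{n\vk}=e^{-i\vk\cdot\vr}\psi_{n\vk}$), which is exactly what produces the Fourier identity you then use.
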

\begin{proof}
  This follows from the following simple calculation
  \begin{equation}
    \begin{split}
      \Lambda_{\vk}(\vq + \vG)^{\dagger}
      & = \frac{1}{| \Omega |} \sum_{\vG' \in \Gamma^*} \sum_{\sigma,j}\hat{u}_{n\vk}(\vG'; \sigma, j) \overline{\hat{u}_{m(\vk + \vq + \vG)}(\vG'; \sigma, j)} \\
      & = \frac{1}{| \Omega |} \sum_{\vG' \in \Gamma^*} \sum_{\sigma,j}\overline{\hat{u}_{m(\vk + \vq + \vG)}(\vG'; \sigma, j)} \hat{u}_{n\vk}(\vG'; \sigma, j) \\
      & = \Lambda_{\vk+\vq}(-\vq - \vG).
    \end{split}
  \end{equation}
  As for the second identity, one easily checks that $\hat{u}_{n(\vk + \vG')}(\vG) =\hat{u}_{n\vk}(\vG + \vG')$ and so
  \begin{equation}
    \begin{split}
      \Lambda_{\vk + \vG'}(\vq + \vG)
      & = \frac{1}{| \Omega |} \sum_{\vG'' \in \Gamma^*} \sum_{\sigma,j} \overline{\hat{u}_{m(\vk + \vG')}(\vG'' + \vG'; \sigma, j)} \hat{u}_{n(\vk + \vq + \vG + \vG')}(\vG''; \sigma, j) \\
      & = \frac{1}{| \Omega |} \sum_{\vG'' \in \Gamma^*} \sum_{\sigma,j} \overline{\hat{u}_{m\vk}(\vG'' + \vG'; \sigma, j)} \hat{u}_{n(\vk + \vq + \vG)}(\vG'' + \vG'; \sigma, j) \\
      & = \Lambda_{\vk}(\vq + \vG).
    \end{split}
  \end{equation}
\end{proof}

We can now define the flat-band interacting (FBI) Hamiltonian for the twisted $N$-layer graphene  \cite{BultinckKhalafLiuEtAl2020,BernevigSongRegnaultEtAl2021}.
When we compare the definition of $\hat{H}_{FBI}$ with that in \cref{eqn:FBI_abstract1,eqn:FBI_abstract2}, we need to show that $\hat{\rho}^{\dag}(\vq')=\hat{\rho}(-\vq')$, and that $\hat{H}_{FBI}$ is a positive semi-definite Hamiltonian. These relations will be verified in \cref{sec:mb-posit-semid}.

\begin{defi}
  Let  $\hat{f}_{m\vk}^{\dagger}, \hat{f}_{m\vk}$ denote the flat-band creation and annihilation operators, and let $\Lambda_{\vk}(\vq + \vG)$ be defined as in \cref{eq:form-factor-def}.
  The flat-band interacting (FBI) Hamiltonian is:
  \begin{equation}
    \label{eq:h-fbi}
    \begin{split}
      \hat{H}_{FBI} & := \frac{1}{N_{\vk} |\Omega|}  \sum_{\vq' \in \mc{K} + \Gamma^*} \hat{V}(\vq') \widehat{\rho}(\vq') \widehat{\rho}(-\vq') \\
      \widehat{\rho}(\vq') & := \sum_{\vk \in \mc{K}} \sum_{m,n \in \mc{N}} [\Lambda_{\vk}(\vq')]_{mn} \left(\hat{f}_{m\vk}^\dagger \hat{f}_{n(\vk+\vq')} - \frac{1}{2} \delta_{mn} \sum_{\vG}\delta_{\vq',\vG}  \right).
    \end{split}
  \end{equation}
  Here, $\hat{V}(\vq') = \hat{V}(|\vq'|)$ is the Fourier transform of a radially symmetric electron-electron potential which satisfies $\hat{V}(\vq') > 0$ for all $\vq \in \R^2$. 
\end{defi}

For instance, for TBG, the double gate-screened Coulomb interaction in Fourier space reads (Note that $\lim_{\vq' \to \vzero} \hat{V}(\vq') = \frac{\pi d}{\varepsilon}$ is well defined)
\begin{equation}
\hat{V}(\vq')=\frac{2\pi}{\epsilon}\frac{\tanh(\abs{\vq'}d/2)}{\abs{\vq'}}.
\end{equation}
Here $\epsilon,d>0$ parametrize the strength and length of the screened Coulomb interaction, respectively (see e.g., \cite[Appendix C]{BernevigSongRegnaultEtAl2021}).

\section{A Review of Twisted Bilayer Graphene and equal Twist Angle Trilayer Graphene}
\label{sec:revi-twist-bilay}

To make the assumptions of \cref{result:main-informal} more concrete, we review the properties of the TBG and eTTG \cite{TarnopolskyKruchkovVishwanath2019,BeckerEmbreeWittstenEtAl2022,PopovTarnopolsky2023,becker2023chiral,WatsonKongMacDonaldEtAl2022,CancesGarrigueGontier2023} in the Bistritzer-MacDonald Hamiltonian at the chiral limit.
The single particle Hamiltonian for both chiral TBG and chiral eTTG depend on a parameter $\alpha$, that is inversely proportional to the twisting angle, and take the form of a matrix-valued differential operator
\begin{equation}
  \label{eq:chiral-ham}
H(\alpha)  = \begin{bmatrix} 0 & D(\alpha)^\dag \\ D(\alpha) & 0 \end{bmatrix},
\end{equation}
For TBG, $D(\alpha)$ acts on $H^{1}(\R^{2}; \CC^{2})$
\begin{equation}
\label{eq:d-tbg}
     D_{\rm TBG}(\alpha)  = \begin{bmatrix} D_{x_1}+i D_{x_2} & \alpha U_+(\vr) \\ \alpha U_-( \vr) & D_{x_1}+i D_{x_2}  \end{bmatrix},
\end{equation}
and for eTTG, $D(\alpha)$ acts on $H^{1}(\R^{2}; \CC^{3})$
\begin{equation}
\label{eq:d-ettg}
D_{\rm eTTG}(\alpha) = \begin{bmatrix} D_{x_1}+i D_{x_2} & \alpha U_+(\vr) & 0 \\ \alpha U_-( \vr) & D_{x_1}+i D_{x_2}  &  \alpha U_+(\vr) \\ 0 &\alpha U_-( \vr) & D_{x_1}+i D_{x_2} \end{bmatrix}.
\end{equation}
Similarly, single particle Hamiltonians for $N$-layer systems can be defined in terms of operators $D(\alpha)$ acting on $H^{1}(\R^{2}; \CC^{N})$ (see, for example, \cite{Yang2023}).

We remark that the specific form of the single particle Hamiltonian will not be used in our proofs of \cref{result:exact-gs,,result:unique,,result:main-informal}, however we will make use of two important properties which hold for Bistritzer-MacDonald-type Hamiltonians at the chiral limit: (1) the existence of a magic angle with exactly flat bands, and (2) the single particle Hamiltonian commutes with a number of symmetry operations (see~\cref{sec:requ-symm-gauge}).

For small twist angles, due to the lattice structure of graphene, the tunneling potentials $U_{\pm}(\vr)$ satisfy the following symmetry properties independent of the number of layers \cite{BistritzerMacDonald2011,TarnopolskyKruchkovVishwanath2019}:
\begin{align}
  U_{\pm}(\vr + \va) & = \overline{\omega}^{\pm (a_1+a_2)} U_{\pm}(\vr), \label{eq:u-translation} \\
  U_{\pm}(R_{3} \vr) & = \omega U_{\pm}(\vr), \label{eq:u-rotation} \\
  U_{\pm}(x_{1}, -x_{2}) & = \overline{U_{\pm}(x_{1}, x_{2})}, \label{eq:u-mirror}
\end{align}
where $\vr = [x_{1}, x_{2}]^{\top}$, $\va = [a_{1}, a_{2}]^{\top} \in \Gamma$, $\omega := e^{2 \pi i / 3}$, and $R_{3}$ is the $\frac{2 \pi}{3}$-counterclockwise rotation.
Potentials $U_{\pm}$ satisfying symmetries \cref{eq:u-translation,eq:u-rotation,eq:u-mirror} are of the general form
\begin{equation}
\label{eq:pot}
U_{\pm}(\vr)=\sum_{n,m\in\Z} c_{nm} e^{\pm i (m \vg_{1} - n \vq_{2} + \vq_{1}) \cdot \vr}
\end{equation}
where we recall the definitions of the dual lattice vectors $\vg_1,\vg_2,$ (\cref{eq:dual_lattice_vector}) and $\vq_{1}$ (\cref{eq:q1q2q3-def}).
The coefficients $c_{nm}$  satisfy, see \cite[Prop.2.1]{becker2023chiral},  
\[\begin{split}
c_{nm}&=\omega c_{(m-n-1)(-n)}=\omega^{2} c_{(-m)(n-m+1)}\text{ and }\\
\overline{c_{nm}}&=c_{(-m)(-n)}=\omega c_{(-n+m-1)m}=\omega^{2} c_{n(n-m+1)}
\text{ for all }n, m \in \Z^2.\end{split}
\]

It is important to note that due to~\cref{eq:u-translation}, both TBG and eTTG as defined above are not periodic with respect to $\Gamma$, however they can both be made into periodic Hamiltonian by performing a unitary transformation (see~\cref{sec:periodicity-tbg-ettg}).
While this non-periodic formulation is more convenient for the analysis of the magic angles; we will use the periodic formulation in our definition of the FBI Hamiltonians (\cref{eq:h-fbi}).

\subsection{Magic Angles in TBG}
\label{sec:magic-angles-tbg}
The TBG Hamiltonian (\cref{eq:chiral-ham} with \cref{eq:d-tbg}) commutes with the following symmetries
\begin{equation}
\label{eq:symmetries2}
  \mc{T}_{\va} \vu(\vr) =
                          \begin{bmatrix}
                            \omega^{a_{1} + a_{2}} &&& \\
                                                   & 1 && \\
                                                   && \omega^{a_{1} + a_{2}} & \\
                                                   &&& 1
                          \end{bmatrix} \vu(\vr+\va), \qquad
  \mc{R} \vu (\vr) = \begin{bmatrix}
                            1 &&& \\
                                                   & 1 && \\
                                                   && \overline{\omega} & \\
                                                   &&& \overline{\omega}
                          \end{bmatrix} \vu(R_{3} \vr).
\end{equation}
Therefore, we can define
\begin{equation}
  L^2_{\ell,p}:= \{ \vu \in L^2_{\text{loc}}(\RR^2;\CC^4); \mc{T}_{\va} \vu  = \omega^{\ell(a_1+a_2)} \vu, \text{ for } \va \in \Gamma \text{ and } \mc{R} \vu  = \overline{\omega}^p \vu\}
\end{equation}
where $\ell,p \in \ZZ_3.$
We may also consider symmetries \eqref{eq:symmetries2} just acting on $\CC^2$-valued spinors by considering merely the first two components.

The $\mc{T}_{\va}$-periodicity of the TBG Hamiltonian in \cref{eq:chiral-ham} allows us to define the Bloch-Floquet transformed TBG Hamiltonian 
\begin{equation}
  \label{eq:Hk}
  H_{\vk}(\alpha)  = \begin{bmatrix} 0 & D_{\rm TBG}(\alpha)^\dag +( k_1-ik_2)  \operatorname{id}_{\mathbb C^2} \\ D_{\rm TBG}(\alpha) + (k_1+ik_2 )\operatorname{id}_{\mathbb C^2}& 0 \end{bmatrix}
\end{equation}
acting on $L^2_{\ell} :=\bigoplus_{p \in \ZZ_3} L^2_{\ell,p}$.

The TBG Hamiltonian additionally has a layer symmetry $\mc{L}$ which acts as follows 
\begin{equation}
\label{eq:Lsymm}
  \mc{L} \vv(\vr):=
  \begin{bmatrix}
    &&& -1 \\
    && 1 & \\
    & -1 && \\
    1 &&& \\
  \end{bmatrix}
  \vv(-\vr).
\end{equation}
One furthermore can check that $\mc{L} : L^{2}_{\ell,p} \to L^{2}_{-\ell+1,p}$.
In particular, for $\ell=2$ the map $\mc{L}$ leaves spaces $L^2_{2,p}$ invariant.
Thus, we shall without loss of generality, consider $H_{\mathbf k}(\alpha)$ on $L^2_{\ell=2}$ in the sequel, since the family of $H_{\mathbf k}(\alpha)$ on $L^2_{\ell}$ are all equivalent. 

One then defines a magic angle of the chiral Hamiltonian as a parameter $\alpha \in \CC$ such that
\begin{equation}
0\in \bigcap_{\vk \in \R^2} \Spec(H_{\vk}(\alpha)),
\end{equation}
where $H_{\vk}(\alpha)$ is defined as in \eqref{eq:Hk}.

For the Hamiltonian \eqref{eq:Hk} the first $2$ components correspond to lattice sites $A$ and the remaining ones to lattice sites $B$. Since $D(\alpha)+(k_1+ik_2)\operatorname{id}_{\mathbb C^2}$ is a Fredholm operator of index $0$, at a magic angle $\alpha$, the null space of the Hamiltonian decomposes into
\begin{equation}
  \ker(H_{\vk}(\alpha)) = \ker(D(\alpha)+(k_1+ik_2)\operatorname{id}_{\mathbb C^2}) \oplus \ker(D(\alpha)^*+(k_1-ik_2)\operatorname{id}_{\mathbb C^2})
\end{equation}
where $\ker(D(\alpha)+(k_1+ik_2)\operatorname{id}_{\mathbb C^2})$ and $\ker(D(\alpha)^*+(k_1-ik_2)\operatorname{id}_{\mathbb C^2})$ are of equal dimension. Elements of $\ker(D(\alpha)+(k_1+ik_2)\operatorname{id}_{\mathbb C^2})$ therefore correspond to states at zero energy that are $A$-lattice polarized. This leads us to make the following definition
\begin{defi}[Multiplicity]
    A magic angle $\alpha$ is $n$\text{-fold degenerate} if the number of zero energy flat bands of the Hamiltonian is $2n$\text{-fold degenerate.} A $1$-fold degenerate magic angle is also called \emph{simple}. 
\end{defi}

In the case of TBG and eTTG we indicate the number $2M$ of flat bands at a magic angle by appending  a number $2M$, e.g. TBG$-2$ for simple magic angles and TBG$-4$ for $2$-fold degenerates ones.

The simplest one-parameter family of potentials satisfying \eqref{eq:pot} is, for $\varphi \in \mathbb R/\mathbb Z$,  given by
\[ U_{\varphi}(\vr) = \cos(2\pi \varphi)  \sum_{i=0}^2 \omega^i e^{-i \vq_i\cdot \vr}  + \sin(2\pi \varphi) \sum_{i=0}^2\omega^i e^{2i \vq_i \cdot \vr}.\]
In twisted bilayer graphene, higher Fourier modes in the tunnelling potential are necessary to describe lattice relaxation effects, see e.g. \cite[Remark $2.4$]{WatsonKongMacDonaldEtAl2022} and references therein.
We then consider special cases
\begin{equation}
\label{eq:ref_pots}
U_0(\vr) = \sum_{i=0}^2 \omega^i e^{-i \vq_i\cdot \vr} \text{ and }U_{7/8}(\vr) =\frac{1}{\sqrt{2}}\Bigg( U_0(\vr) - \sum_{i=0}^2\omega^i e^{2i \vq_i \cdot \vr}\Bigg).
\end{equation}

While both $U_0$ and $U_{7/8}$ satisfy the above symmetries   \eqref{eq:u-translation}, \eqref{eq:u-rotation}, and \eqref{eq:u-mirror}, they give rise to a different number of flat bands for \emph{magic} $\alpha \in \mathbb R$. Numerical experiments suggest, see Figure \ref{fig:many}, that in case of $U=U_0$ and $\alpha \in \RR$ magic, precisely two bands of the chiral Hamiltonian become flat, in the case of $U=U_{7/8}$ and $\alpha \in \mathbb R$ the Hamiltonian exhibits four flat bands at \emph{magic angles}. Thus, the chiral model with the two potentials $U=U_0$ and $U=U_{7/8}$ serve as models for TBG$-2$ and TBG$-4$. While simple and two-fold degenerate magic angles are the only types of magic angles that appear for generic choices of tunnelling potentials in chiral limit TBG, see \cite[Theo. $3$]{BeckerHumbertZworski2023}, they exhibit an almost equidistant spacing for the potentials $U_0$ and $U_{7/8}$, as shown in Table \ref{table:mag_ang}.

\begin{figure}
\includegraphics[width=7.5cm]{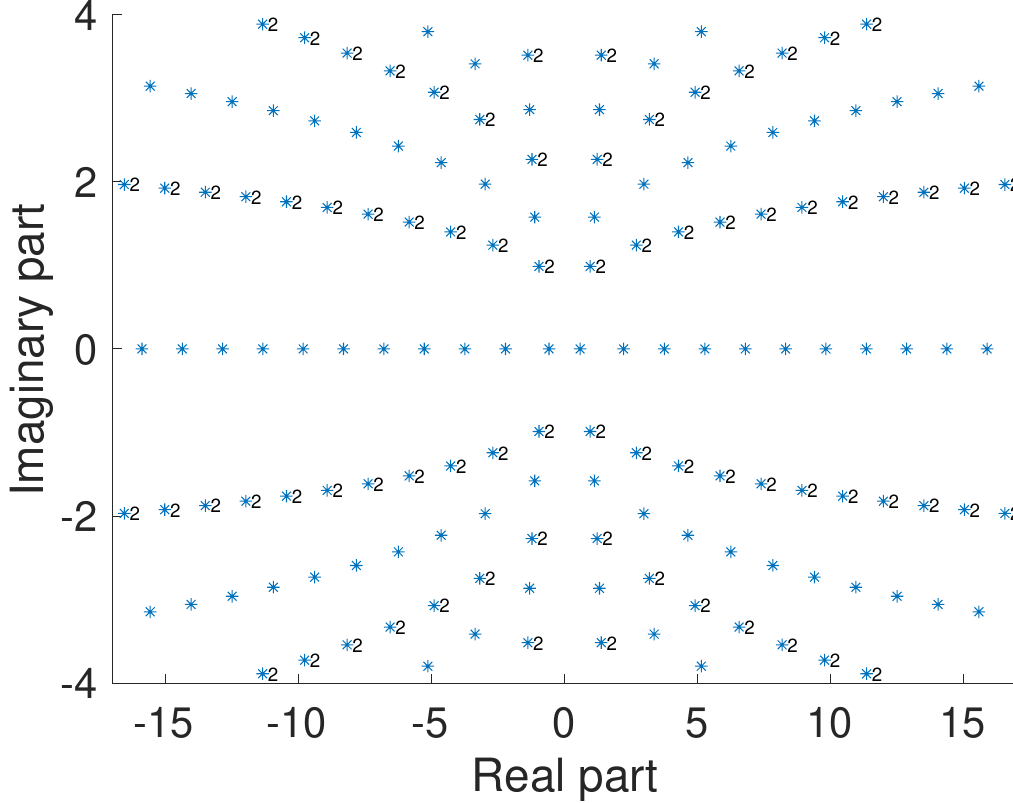}
\includegraphics[width=7.5cm]{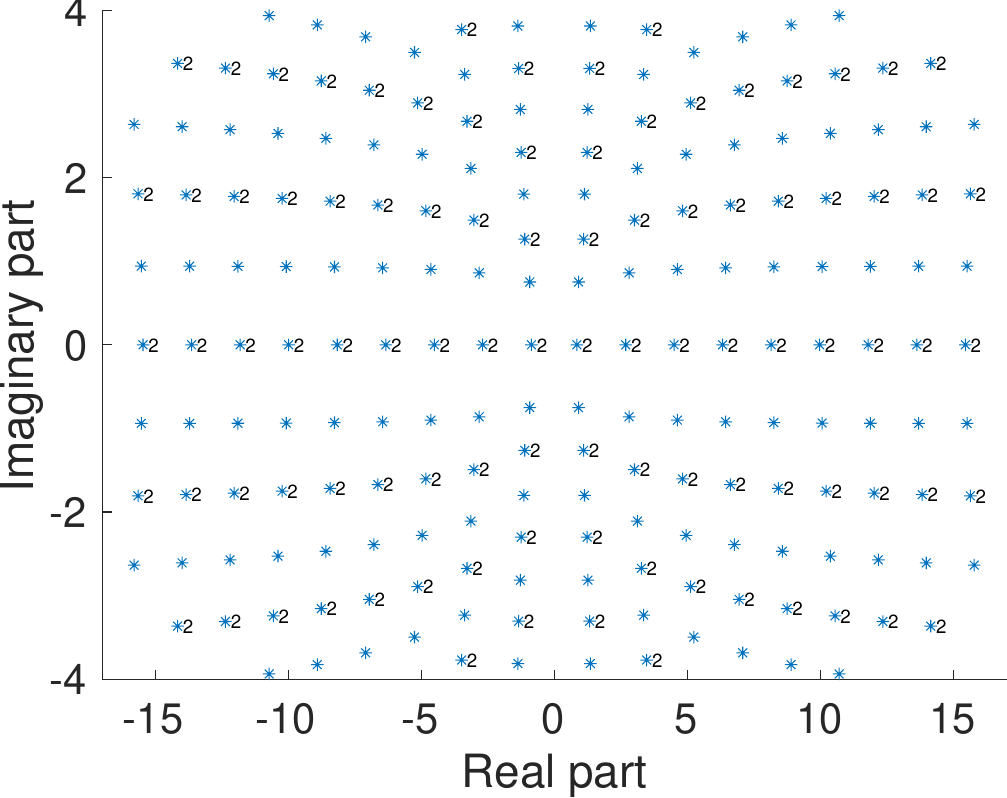}
\caption{\label{fig:many} Magic angles $\alpha$ derived from potentials $U=U_0$ (left) and $U = U_{7/8} $ (right). The dimension of $\operatorname{ker}(D(\alpha)+(k_1+ik_2)\operatorname{id}_{\mathbb C^2})$ is indicated by the numbers in the figure. No number indicates a one-dimensional subspace. The number of flat bands of the Hamiltonian is twice the dimension of $\operatorname{ker}(D(\alpha)+(k_1+ik_2)\operatorname{id}_{\mathbb C^2}).$ }
\end{figure}

\begin{table}[!htb]
\begin{minipage}{.45\linewidth}

\begin{center}
\begin{tabular}{rclcc}
\multicolumn{1}{c}{$k$} & &
\multicolumn{1}{c}{$\alpha_k$} &
 & $\alpha_{k}-\alpha_{k-1}$ \\[2pt] \hline
1  &\ &   \phantom{0}0.58566 &\ &               \\
2  &&     \phantom{0}2.22118  && 1.6355        \\
3  &&     \phantom{0}3.75140 && 1.5302        \\
4  &&     \phantom{0}5.27649   && 1.5251        \\
5  &&     \phantom{0}6.79478   && 1.5183        \\
6  &&     \phantom{0}8.31299    && 1.5182        \\
7  &&     \phantom{0}9.82906    && 1.5161        \\
8  &&    11.34534     && 1.5163        \\
9  &&    12.86061        && 1.5153        \\
10 &&    14.37607         && 1.5155        \\
11 &&    15.89096          && 1.5149        \\
\end{tabular}
\end{center}

\end{minipage}
\begin{minipage}{.45\linewidth}

\begin{center}
\begin{tabular}{rclcc}
\multicolumn{1}{c}{$k$} & &
\multicolumn{1}{c}{$\alpha_k$} &
 & $\alpha_{k}-\alpha_{k-1}$ \\[2pt] \hline
1  &\ &   \phantom{0}0.853799 &\ &               \\
2  &&     \phantom{0}2.691433  && 1.8376       \\
3  &&     \phantom{0}4.507960  && 1.8165        \\
4  &&     \phantom{0}6.332311  && 1.8244        \\
5  &&     \phantom{0}8.157130    && 1.8248        \\
6  &&     \phantom{0}9.983510     && 1.8264       \\
7  &&     11.809376      && 1.8259        \\
8  &&     13.635446      && 1.8261        \\
9  &&     15.460894        && 1.8255        \\
10 &&    17.286231        && 1.8253        \\
11 &&    19.111041 && 1.8248
\end{tabular}
\end{center}

\end{minipage}
\caption{First $11$ real magic angles, rounded to 6 digits, for $U=U_0$ (left) and $U=U_{7/8}$ (right).\label{table:mag_ang}}
\end{table}

\subsection{Magic Angles in eTTG}
\label{sec:magic-angles-ettg}
When studying twisted trilayer graphene (TTG), the case of equal twisting angles (eTTG) is of particular interest.

In a recent paper by Popov and Tarnopolsky \cite{PopovTarnopolsky2023}, the authors demonstrated that the collection of magic parameters $\mathcal A_{\mathrm{eTTG}}$ exhibits a particularly interesting relation, which connects the set of magic angles of chiral limit TBG to the magic angles of equal twisting angle TTG (eTTG)
\begin{equation}
  \label{eq:eTTG}
  \sqrt 2\mathcal A_{\mathrm{TBG}-M}= \mathcal A_{\mathrm{eTTG}-2M},
\end{equation}
see also the top right figure in Figure \ref{fig:TTG}.
 Here $\mathcal A_{\mathrm{eTTG}-2M}$ is the set of magic parameters for eTTG-2M. Moreover, the multiplicity on the right is at least twice the one on the left. The argument provided in  \cite{PopovTarnopolsky2023} constructs protected flat bands for eTTG-2M from flat bands of TBG-M. This construction works for eTTG-2M, however it does not extend to more than three layers nor does not seem to extend beyond the equal twisting angle case, as well.

At the center of this connection, is the map with obvious indices corresponding to $D(\alpha) = D_{\text{TBG}}(\alpha)$ as in \eqref{eq:d-tbg} and $D(\alpha) = D_{\text{eTTG}}(\alpha)$ as in \eqref{eq:d-ettg}
\begin{equation}\label{eqn:ttgmapping}
  \begin{split}
    \times :  \ker_{L^2_{k,j}}( D_{\text{TBG}}(\alpha_0))&\oplus  \ker_{L^2_{k',j'}}(D_{\text{TBG}}(\alpha_0))\to  \ker_{ L^2_{k+k',j+j'}}(D_{\text{eTTG}}(\sqrt 2\alpha_0)) \\
    (v,w) &\mapsto (v_1w_1,2^{-1/2}(v_1w_2 + v_2 w_1), v_2w_2).
  \end{split}
\end{equation}
This construction allows us to construct from any two elements of the nullspaces of $D_{\text{TBG}}$, for some magic $\alpha_0$, with respect to different representations, an element of the nullspace of $D_{\text{eTTG}},$ but for a rescaled magic parameter $\sqrt{2}\alpha_0.$ We observe that the number of zeros doubles by applying the map above. This means that the number of flat bands for eTTG, with $\sqrt{2}\alpha_0$ is (at least) twice the number of flat bands for TBG with $\alpha_0.$

\begin{figure}
 \includegraphics[width=7.5cm]{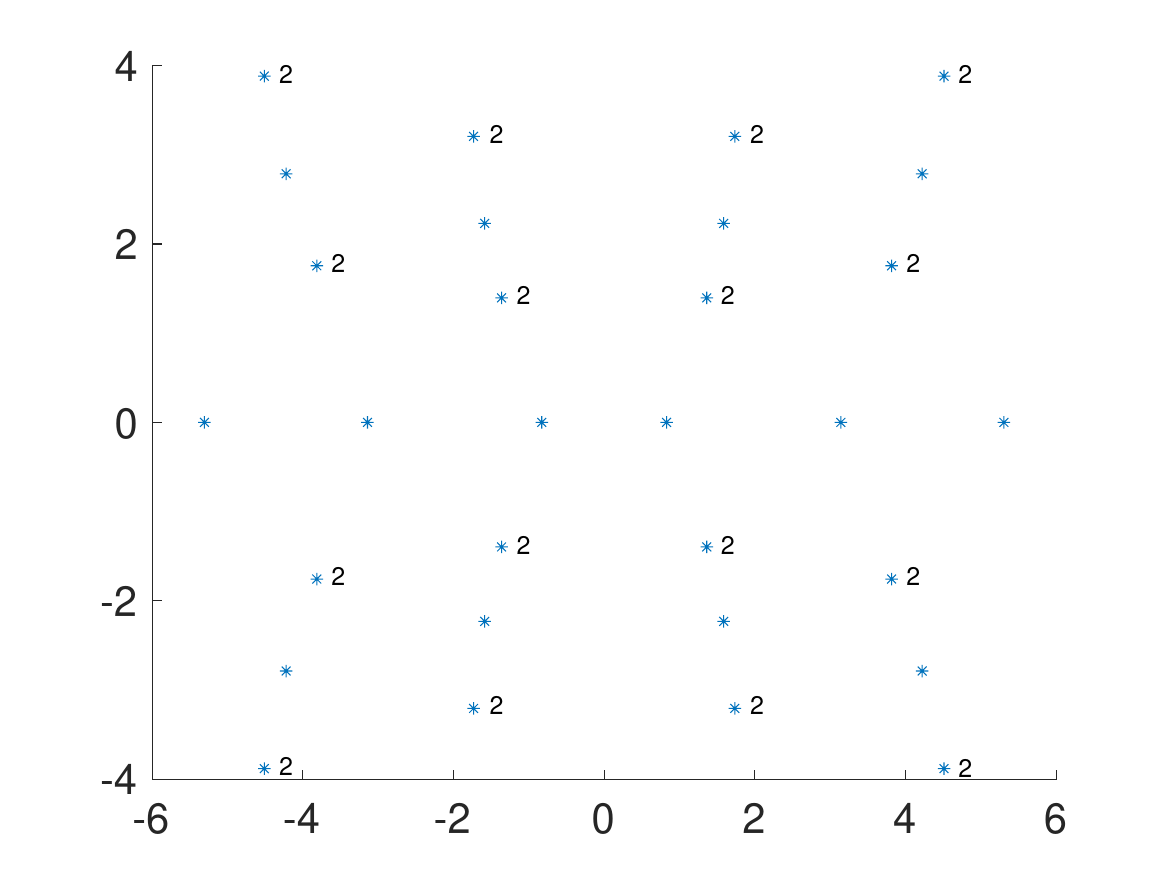}
  \includegraphics[width=7.5cm]{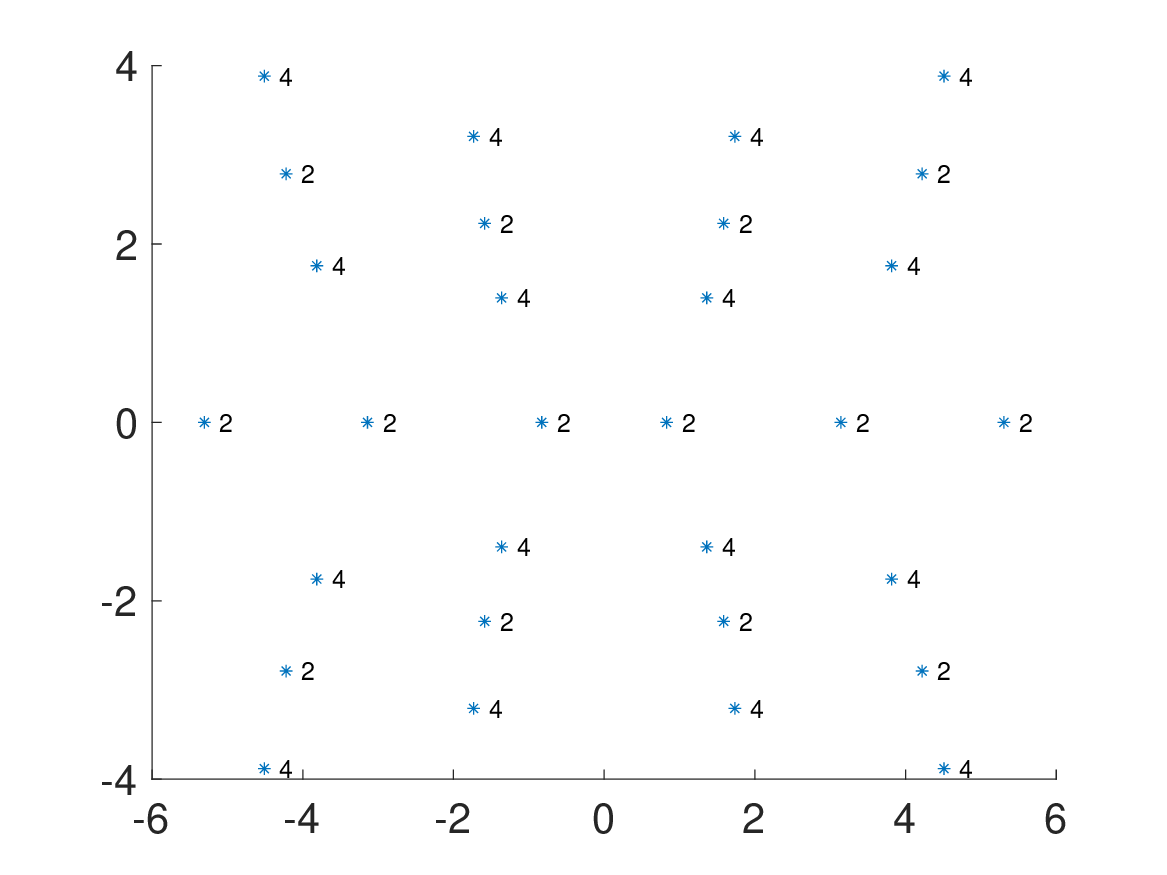}
  \caption{Magic angles of TBG multiplied by $\sqrt{2}$, $\sqrt{2}\mathcal A_{\text{TBG}-2}$ (left) and eTTG $\mathcal A_{\text{eTTG}-4}$ (right) with $U_+=U_0$, see \eqref{eq:ref_pots}. The multiplicity of the flat bands of the Hamiltonian is twice the multiplicity indicated in the figures. }
  \label{fig:TTG}
\end{figure}
\FloatBarrier

\subsection{Periodicity of TBG and eTTG}
\label{sec:periodicity-tbg-ettg}
As discussed previously, the Hamiltonians for TBG and eTTG as defined in the beginning of this section are not periodic with respect to $\Gamma$.
However, by using the general form of the potentials $U_{\pm}(\vr)$ (\cref{eq:pot}), we see that we can turn TBG into a $\Gamma$-periodic Hamiltonian by conjugating it by $V(\vr):=\operatorname{diag}(1,e^{i \vq_1 \cdot \vr},1,e^{i \vq_1 \cdot \vr})$ to obtain the equivalent periodic Hamiltonian
\[
  \mathscr H(\alpha) = \begin{bmatrix} 0 & \mathscr  D(\alpha)^{\dagger} \\ \mathscr  D(\alpha) & 0\end{bmatrix} \text{ with }\mathscr D(\alpha) = \begin{bmatrix}D_{x_1}+i D_{x_2}   & \alpha e^{iq_0 \cdot \vr} U_+(\vr) \\ \alpha e^{-iq_0 \cdot \vr} U_-(\vr) &D_{x_1}+i D_{x_2} + i  \end{bmatrix}.
\]
Similarly, eTTG can be made periodic by conjugating by
\[
V(\vr) = \operatorname{diag}(1,e^{i \vq_1 \cdot \vr}, e^{-i \vq_{1} \cdot \vr},1,e^{i \vq_1 \cdot \vr}, e^{-i \vq_{1} \cdot \vr})
\]
and using the fact that $3 \vq_{1} = (0, 3)^{\top} \in \Gamma^{*}$.

\section{Symmetries and Gauge Fixing}
\label{sec:requ-symm-gauge}

We now discuss the symmetry assumptions required for our main theorem which mirror the symmetries present in TBG and eTTG.
Due to the form of the chiral Hamiltonian~\cref{eq:chiral-ham} and the symmetries of the $U_{\pm}(\vr)$ (\cref{eq:u-translation,eq:u-rotation,eq:u-mirror}), the Hamiltonians for TBG and eTTG satisfy a number of symmetries in addition to the translation $\mc{T}_{\va}$ and rotation $\mc{R}$.
In particular, they satisfy a sublattice symmetry $\mc{Z}$, a layer symmetry $\mc{L}$, and a composite symmetry $\mc{Q}$, whose actions are defined as follows:
\begin{align}
  \mc{Z} \psi(\vr; \sigma, j) & = \sigma \psi(\vr; \sigma,j) \label{eq:z-symm-def} \\
  \mc{L} \psi(\vr; \sigma, j) & = (-1)^{j} \psi(-\vr; -\sigma, N-j). \label{eq:l-symm-def} \\
    \mc{Q} \psi(\vr; \sigma, j) & = \overline{\psi(-\vr; -\sigma,j)} \label{eq:q-symm-def} 
\end{align}
In words, the sublattice symmetry multiplies the $A$ sublattice by $+1$ and states supported on the $B$ sublattice by $-1$.
The layer symmetry $\mc{L}$, reverses the order of the layers, swaps the $A$ and $B$ sublattices, multiplies by an alternating minus sign, and maps $\vr \mapsto -\vr$.
The symmetry $\mc{Q}$ is a composition of a $C_{2z}$ rotation and time reversal $\mc{T}$ and acts by swapping the $A$ and $B$ sublattices, taking a complex conjugate, and mapping $\vr \mapsto -\vr$. The composite symmetry $\mc{Q}$ is often referred to as the $C_{2z}\mc{T}$ symmetry. 
It is easily checked that $\{ \mc{Z}, \mc{Q} \} = 0$, $[ \mc{Z}, \mc{L} ] = 0$, and $[ \mc{Q}, \mc{L} ] = 0$.

We will use these symmetries to fix a specific choice of Bloch eigenbasis for the flat bands; a procedure we refer to as ``gauge fixing'' following the physics terminology.

\begin{assumption}[Symmetry Assumptions]
  \label{assume:symm}
  We assume that the single particle Hamiltonian $H$ commutes with $\mc{Q}$ and anticommutes with $\mc{Z}$ and $\mc{L}$.
\end{assumption}
\begin{rem}
  Since we study the nullspace of $H$, we can consider both symmetries that commute or anticommute with $H$ as both types of symmetries fix the nullspace.
\end{rem}
If $\mc{Z}$, $\mc{Q}$, and $\mc{L}$ are symmetries of a periodic Hamiltonian, it is easily checked that $\mc{Z}$ and $\mc{Q}$ map $\vk$ to $\vk$ and $\mc{L}$ maps $\vk$ to $-\vk$.

Since the zero energy eigenstates are degenerate within the flat band, once we
fix a basis of orthogonal flat-band eigenfunctions we may perform any $U(2M)$
transformation to this basis and get an alternative choice of eigenbasis which
spans the same space. Any such transformation takes the form of the mapping 
\[
\psi_{n\vk}(\vr) \mapsto \sum_{m \in \mc{N}} \psi_{m\vk}(\vr) [U(\vk)]_{mn}
\]
where $U(\vk)$ is a unitary matrix. Under this basis transformation, the creation and annihilation operators likewise transform as
\[
\hat{f}^{\dagger}_{n\vk} \mapsto \sum_{m \in \mc{N}} \hat{f}^{\dagger}_{m\vk} [U(\vk)]_{mn} \qquad \qquad \hat{f}_{n\vk} \mapsto \sum_{m \in \mc{N}} \hat{f}_{m\vk}  [U(\vk)^{*}]_{mn}
\]
and similarly the form factor transforms as
\[
\Lambda_{\vk}(\vq + \vG) \mapsto U(\vk)^{\dagger} \Lambda_{\vk}(\vq + \vG) U(\vk + \vq).
\]
By fixing a specific choice of eigenbasis, which we refer to as ``fixing a gauge'' following physics terminology, we can force the form factor to satisfy certain properties which will be a key part of our characterization of the Hartree-Fock ground state.
Note that the FBI Hamiltonian only depends on the form factor $\Lambda_{\vk}(\vq + \vG)$ which in turn only depends on the periodic Bloch functions, $u_{n\vk}(\vr; \sigma, j)$.
As such we will define our gauge fixing in terms of the periodic Bloch functions, $u_{n\vk}(\vr; \sigma, j)$, instead of the Bloch functions.
\begin{rem}
  When the Hamiltonian $H$ exhibits band crossings at zero energy, the set $\{ u_{n\vk}(\vr) : n \in \mc{N}, \vk \in \Omega^{*} \}$ is still closed under the action of symmetry operations for $\vk \in \mc{K}_{\rm crossing}$ since the basis at the crossing points is defined through continuity.
\end{rem}

Since the zero energy eigenstates $\psi_{n\vk}(\vr)$ are indexed by $n \in \mc{N}$ and $\mc{Z}$ is a symmetry of the Hamiltonian, we may change the gauge and relabel the eigenfunctions so that:
\begin{equation}
  \label{eq:sublattice-symm}
  \begin{split}
    \mc{Z} u_{n\vk}(\vr; \sigma, j) & = (+1) u_{n\vk}(\vr; \sigma, j) \qquad n > 0, \\
    \mc{Z} u_{n\vk}(\vr; \sigma, j) & = (-1) u_{n\vk}(\vr; \sigma, j) \qquad n < 0.
  \end{split}
\end{equation}
In particular, this implies that $u_{n\vk}(\vr)$ is completely supported on the $A$ sublattice if $n > 0$ and completely supported on the $B$ sublattice if $n < 0$.
We can now properly define the uniformly half-filled, translation invariant, ferromagnetic Slater determinant states:
\begin{defi}[Ferromagnetic Slater Determinants]
  \label{def:ferro-sd}
  Suppose that the Bloch eigenbasis has been chosen so that it satisfies \cref{eq:sublattice-symm}. We define two uniformly half-filled, translation invariant ferromagnetic Slater determinants, or ferromagnetic Slater determinants for short, to be many-body states of the form:
  \[
    \ket{\Psi_{+}} = \prod_{\vk \in \mc{K}} \prod_{n > 0, n\in\mc{N}} \hat{f}_{n\vk}^{\dagger} \ket{\vac} \qquad \ket{\Psi_{-}} = \prod_{\vk \in \mc{K}} \prod_{n < 0, n\in\mc{N}} \hat{f}_{n\vk}^{\dagger} \ket{\vac}.
  \]
  That is, $\ket{\Psi_{\pm}}$ fully fills one of the two eigenspaces of $\mc{Z}$.
\end{defi}

Since $\mc{Q}$ is  a symmetry of the Hamiltonian and relates the two different sublattices, after fixing the Bloch eigenfunctions to satisfy~\cref{eq:sublattice-symm}, we may fix the span $\{ u_{n\vk}(\vr) : n > 0\}$, and determine $n < 0$ by the sublattice symmetry $\mc{Q}$. That is, for all $n > 0$ we define
\begin{equation}
  \label{eq:composite-symm}
  u_{(-n)\vk}(\vr; \sigma, j) := \mc{Q} u_{n\vk}(\vr; \sigma, j) = \overline{u_{n\vk}(-\vr; -\sigma, j)}.
\end{equation}

Additionally since $\mc{L}$ is also a symmetry of the Hamiltonian and relates $\vk$ and $-\vk$, after fixing the periodic Bloch functions to satisfy~\cref{eq:sublattice-symm,eq:composite-symm} we may fix the space $\{ u_{n\vk}(\vr) : k_{1} \geq 0 \}$ and determine $k_{1} < 0$ by the layer symmetry $\mc{L}$. That is,
\begin{equation}
  \label{eq:layer-symm}
  u_{n(-k_{1},k_{2})}(\vr; \sigma, j) := \mc{L} u_{n(k_{1},-k_{2})}(\vr; \sigma, j) = (-1)^{j} u_{n(k_{1}, -k_{2})}(-\vr; \sigma, N-j).
\end{equation}

\begin{lemm}
  \label{lem:form-factor-properties}
  Suppose that the gauge has been chosen so that periodic Bloch functions $\{ u_{n\vk}(\vr) : n \in \mc{N}, \vk \in \Omega^{*} \}$ satisfy \cref{eq:sublattice-symm,eq:composite-symm,eq:layer-symm}. Then the form factor $\Lambda_{\vk}(\vq + \vG)$ can be written as
  \begin{equation}
    \label{eq:form-factor-a}
    \Lambda_{\vk}(\vq + \vG) =
    \begin{bmatrix}
      A_{\vk}(\vq + \vG) & \\
                         & \overline{A_{\vk}(\vq + \vG)}
    \end{bmatrix}
  \end{equation}
  where
  \begin{equation}
    \label{eq:form-factor-block}
    [A_{\vk}(\vq + \vG)]_{mn} = \frac{1}{| \Omega |} \sum_{\vG'} \sum_{\sigma,j} \overline{\hat{u}_{m,\vk}(\vG; \sigma, j)} \hat{u}_{n(\vk+\vq)}(\vG + \vG'; \sigma, j) \qquad m, n > 0.
  \end{equation}
  Let $\mc{K}$ be as in~\cref{eq:mcK}, then $A_{\vk}(\vq + \vG)$ additionally satisfies the following \emph{sum rule}: 
  \begin{equation}
    \label{eq:form-factor-sum-rule}
    \sum_{\vk \in \mc{K}} \Im{\tr{(A_{\vk}(\vG) )}} = 0, \quad \vG \in \Gamma^{*}.
  \end{equation}
\end{lemm}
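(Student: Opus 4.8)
The plan is to read off both claims directly from the three gauge conditions \cref{eq:sublattice-symm,eq:composite-symm,eq:layer-symm}, translating each into a statement about the Fourier coefficients $\hat{u}_{n\vk}(\vG;\sigma,j)$ and then substituting into the definition \cref{eq:form-factor-def} of $\Lambda_\vk$. For the block structure I would first record the Fourier-space form of the $\mc{Q}$-gauge: taking Fourier coefficients in \cref{eq:composite-symm} and changing variables $\vr\mapsto-\vr$ gives $\hat{u}_{(-n)\vk}(\vG;\sigma,j)=\overline{\hat{u}_{n\vk}(\vG;-\sigma,j)}$ for $n>0$. In \cref{eq:form-factor-def}, the $\mc{Z}$-gauge \cref{eq:sublattice-symm} says $\hat{u}_{m\vk}$ is supported on the sublattice $\sigma=\sgn m$; hence for $m>0>n$ (or $n>0>m$) every summand vanishes because the two factors live on opposite sublattices, which kills the off-diagonal blocks. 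For $m,n>0$ only the $\sigma=+1$ terms survive, and after using the periodic-extension identity $\hat{u}_{n(\vk+\vq+\vG)}(\vG')=\hat{u}_{n(\vk+\vq)}(\vG'+\vG)$ from the proof of \cref{lem:form-factor-identities} this is by definition $[A_\vk(\vq+\vG)]_{mn}$. For $m,n<0$ I write $m=-m'$, $n=-n'$ with $m',n'>0$, insert the $\mc{Q}$-relation into both factors, pull the overall complex conjugation outside, and relabel $\sigma\mapsto-\sigma$ in the symmetric sum over $\sigma$; what is left is exactly $\overline{[A_\vk(\vq+\vG)]_{m'n'}}$, which gives \cref{eq:form-factor-a,eq:form-factor-block}.

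For the sum rule I would specialize to $\vq=0$ and $\vG\in\Gamma^*$, so that $[A_\vk(\vG)]_{nn}$ unwinds (again via $\hat{u}_{n(\vk+\vG)}(\vG')=\hat{u}_{n\vk}(\vG'+\vG)$) to $\tfrac{1}{|\Omega|}\sum_{\vG'}\sum_j\overline{\hat{u}_{n\vk}(\vG';1,j)}\,\hat{u}_{n\vk}(\vG'+\vG;1,j)$ for $n>0$. Taking Fourier coefficients in the $\mc{L}$-gauge \cref{eq:layer-symm} gives $\hat{u}_{n(-k_1,k_2)}(\vG;\sigma,j)=(-1)^j\hat{u}_{n(k_1,-k_2)}(-\vG;\sigma,N-j)$. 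Substituting this into $[A_{(-k_1,k_2)}(\vG)]_{nn}$ the factor $(-1)^{2j}=1$ drops out; relabelling $\vG'\mapsto-\vG'$, $j\mapsto N-j$, then taking a complex conjugate and relabelling $\vG'\mapsto\vG'+\vG$, summing over $n>0$ yields the identity $\overline{\tr A_{(-k_1,k_2)}(\vG)}=\tr A_{(k_1,-k_2)}(\vG)$. Since $(k_1,-k_2)=-(-k_1,k_2)$, this reads $\overline{\tr A_\vk(\vG)}=\tr A_{-\vk}(\vG)$ for every $\vk$, where $A$ is extended $\Gamma^*$-periodically in $\vk$ using \cref{eq:form-factor-shift}. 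Finally, the grid $\mc{K}$ of \cref{eq:mcK} is invariant under $\vk\mapsto-\vk$ modulo $\Gamma^*$ (in each coordinate $-\tfrac{i}{n}\vg\equiv\tfrac{n-i}{n}\vg$), so summing the last identity over $\vk\in\mc{K}$ and reindexing $-\vk\mapsto\vk$ gives $\sum_{\vk\in\mc{K}}\tr A_\vk(\vG)=\overline{\sum_{\vk\in\mc{K}}\tr A_\vk(\vG)}$; the sum is therefore real, which is \cref{eq:form-factor-sum-rule}.

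I expect no single step to be deep; the difficulty is entirely in the bookkeeping. The delicate points are: keeping the periodic-extension identity $\hat{u}_{n(\vk+\vG')}(\vG)=\hat{u}_{n\vk}(\vG+\vG')$ aligned with every reindexing of the $\Gamma^*$-sums, since a misplaced shift produces $A_\vk(-\vG)$ in place of $A_\vk(\vG)$; verifying that $\mc{K}$ is closed under $\vk\mapsto-\vk$ modulo $\Gamma^*$, which is precisely what legitimizes the final reindexing; and tracking which gauge symmetry preserves the sublattice index ($\mc{Z}$ polarizes it, $\mc{Q}$ swaps it, and \cref{eq:layer-symm} preserves it) so that each substitution lands in the correct block. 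The argument uses \cref{assume:symm} only through the already-fixed gauge \cref{eq:sublattice-symm,eq:composite-symm,eq:layer-symm}, so no further properties of $H$ enter.
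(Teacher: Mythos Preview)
Your proposal is correct and follows essentially the same route as the paper: the block structure comes from the $\mc{Z}$-polarization and the $\mc{Q}$-relation on Fourier coefficients, and the sum rule comes from the $\mc{L}$-induced identity relating $\vk$ and $-\vk$ together with the $\Gamma^*$-periodic invariance of $\mc{K}$ under $\vk\mapsto-\vk$. The only cosmetic difference is that the paper passes through the matrix identity $\Lambda_{\vk}(\vG)=\Lambda_{-\vk}(-\vG)=\Lambda_{-\vk}(\vG)^{\dagger}$ (invoking \cref{eq:form-factor-dagger}) and then partitions $\mc{K}$ into $\{\vzero\}\cup\mc{K}_1\cup\mc{K}_2$, whereas you work directly with $\tr A_{\vk}(\vG)$ and use the involution on $\mc{K}$ without singling out fixed points; your reindexing argument is in fact slightly cleaner, since the paper's claim that $\vzero$ is the unique fixed point of $\vk\mapsto-\vk$ fails when $n_{k_x}$ or $n_{k_y}$ is even.
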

\begin{proof}
  Recall the definition of the form factor
  \begin{equation}
    [\Lambda_{\vk}(\vq + \vG)]_{mn} := \frac{1}{| \Omega |} \sum_{\vG' \in \Gamma^*} \sum_{\sigma,j}\overline{\hat{u}_{m\vk}(\vG'; \sigma, j)} \hat{u}_{n(\vk + \vq + \vG)}(\vG'; \sigma, j).
  \end{equation}
  Since the definition of the form factor involves a sum over the sublattice $\sigma$ and $u_{n\vk}(\vr)$ have disjoint sublattice support for $n > 0$ and $n < 0$ we immediately see that $[\Lambda_{\vk}(\vq + \vG)]_{mn} = 0$ if $m n < 0$ and hence $\Lambda_{\vk}(\vq + \vG)$ can be written as a block diagonal matrix.

  Due to $\mc{Q}$ symmetry (\cref{eq:composite-symm}) we have that
  \begin{equation}
    \begin{split}
      \hat{u}_{(-n)\vk}(\vG; \sigma, j)
      & = \int_{\Omega} e^{-i \vG \cdot \vr} u_{(-n)\vk}(\vr; \sigma, j) \ud\vr \\
      & = \int_{\Omega} e^{-i \vG \cdot \vr} \overline{u_{n\vk}(-\vr; \sigma, j)} \ud\vr \\
      & = \overline{\int_{\Omega} e^{-i \vG \cdot \vr} u_{n\vk}(\vr; \sigma, j) \ud\vr} \\
      & = \overline{\hat{u}_{n\vk}(\vG; \sigma, j)}.
    \end{split}
  \end{equation}
  From this relation we easily see that
  \begin{equation}
    [\Lambda_{\vk}(\vq + \vG)]_{(-m)(-n)} = [\overline{\Lambda_{\vk}(\vq + \vG)}]_{mn}
  \end{equation}
  which together with the block diagonal structure implies \cref{eq:form-factor-a}. Now we turn to prove the sum rule \cref{eq:form-factor-sum-rule}.

  Due to the layer symmetry (\cref{eq:layer-symm}) we have that
  \begin{equation}
    \label{eq:form-factor-layer}
    \begin{split}
      \hat{u}_{n(-\vk)}(\vG; \sigma, j)
      & = \int_{\Omega} e^{-i \vG \cdot \vr} (-1)^{j} u_{n\vk}(-\vr; -\sigma, N - j) \ud\vr \\
      & = (-1)^{j} \hat{u}_{n\vk}(-\vG; -\sigma, N - j).
    \end{split}
  \end{equation}
  Recall that $\Gamma^{*}$ is closed under the map $\vG \mapsto -\vG$ and therefore
  \begin{equation}
    \label{eq:form-factor-sum-rule-1}
    \begin{split}
      \Lambda_{\vk}(\vq + \vG)
      & = \frac{1}{| \Omega |} \sum_{\vG' \in \Gamma^*} \sum_{\sigma,j}\overline{\hat{u}_{m\vk}(\vG'; \sigma, j)} \hat{u}_{n(\vk + \vq + \vG)}(\vG'; \sigma, j) \\
      & = \frac{1}{| \Omega |} \sum_{\vG' \in \Gamma^*} \sum_{\sigma,j}\overline{\hat{u}_{m(-\vk)}(-\vG'; -\sigma, N - j)} \hat{u}_{n(-\vk - \vq - \vG)}(-\vG'; -\sigma, N - j) \\
      & = \Lambda_{-\vk}(-\vq - \vG)
    \end{split}
  \end{equation}
  which for $\vq = \vzero$ reduces to $\Lambda_{\vk}(\vG) = \Lambda_{-\vk}(-\vG) = \Lambda_{-\vk}(\vG)^{\dagger}$ where the last equality is due to~\cref{eq:form-factor-dagger}.
  Note that this immediately implies that $\Lambda_{\vzero}(\vG)$ is Hermitian.

  We recall that elements of $\mc{K}$ can be written as $\vk =\frac{i}{n_{k_{x}}} \vg_{1} + \frac{j}{n_{k_{y}}} \vg_{2}$ for some $i \in [n_{k_{x}}]$ and $j \in [n_{k_{y}}]$.
  For any $\vk$ of this form, we can find a $\vk' \in \mc{K}$ and $\vG' \in \Gamma^{*}$ so that $-\vk = \vk' + \vG'$.
  Therefore, we may partition the momentum grid $\mc{K}$ into three disjoint sets $\{ \vzero \}$, $\mc{K}_{1}$  and $\mc{K}_{2}$.
  The point $\vzero$ is the unique point so that $\vk = -\vk$ and $\mc{K}_{1}$ and $\mc{K}_{2}$ are defined so that for each $\vk \in \mc{K}_{1}$ there exists a $\vk' \in \mc{K}_{2}$ so that $\vk - \vk' \in \Gamma^{*}$.
  Since $A_{\vk}(\vG) = A_{\vk + \vG'}(\vG)$ for all $\vG' \in \Gamma^{*}$ (\cref{eq:form-factor-shift}), we have
  \begin{equation}
    \begin{split}
      \sum_{\vk \in \mc{K}} A_{\vk}(\vG)
      & = A_{\vzero}(\vG) + \sum_{\vk \in \mc{K}_{1}} A_{\vk}(\vG) + \sum_{\vk \in \mc{K}_{2}} A_{\vk}(\vG) \\
      & = A_{\vzero}(\vG) + \sum_{\vk \in \mc{K}_{1}} A_{\vk}(\vG) + \sum_{\vk \in \mc{K}_{1}} A_{-\vk}(\vG) \\
      & = A_{\vzero}(\vG) + \sum_{\vk \in \mc{K}_{1}} (A_{\vk}(\vG) + A_{\vk}(\vG))^{\dagger}) \\
    \end{split}
  \end{equation}
  and hence $\sum_{\vk \in \mc{K}} \Im{ \tr{( A_{\vk}(\vG))}} = 0$. Here we have used $A_{\vzero}(\vG)$ is Hermitian.
\end{proof}

\section{Properties of the Flat-Band Interacting Hamiltonian}
\label{sec:many-body-properties}
In this section, we start by reviewing the basics on Hartree-Fock theory for FBI Hamiltonians (\cref{sec:review-hf-theory}) and then prove some of their important many-body properties.
In particular, we prove that the FBI Hamiltonian (\cref{eq:h-fbi}) is positive semidefinite (\cref{sec:mb-posit-semid}) and therefore any state which has zero energy must be a ground state.
Using this fact, along with the sum rule (\cref{eq:form-factor-sum-rule}), we prove that the ferromagnetic Slater determiant state are ground states (\cref{sec:mb-ferr-slat-determ}).
Finally, we show that the ferromagnetic Slater determinants are insulating in the sense of a charge gap, i.e., both adding and removing an electron costs a finite amount of energy (\cref{sec:mb-charge-gap}).

\subsection{A Review of Hartree-Fock Theory}
\label{sec:review-hf-theory}
Recall that Slater determinants defined by the set $\mc{S}$ in \cref{eqn:slater_S} take the form:
\begin{equation}
\label{eq:hf-sd}
\ket{\Psi_S} = \prod_{i=1}^{M} \prod_{\vk \in \mc{K}} \hat b_{i\vk}^{\dag}\ket{\mathrm{vac}},
\end{equation}
where $\ket{\mathrm{vac}}$ is the vacuum state, $M$ is the half the number of bands, and
\begin{equation}
\label{eq:hf-orbitals}
\hat b_{i\vk}^{\dag}=\sum_{n \in \mc{N}}\hat f_{n\vk}^{\dag} \Xi_{ni}(\vk) \qquad \sum_{n \in \mc{N}} \Xi_{ni}^*(\vk) \Xi_{nj}(\vk) = \delta_{ij}
\end{equation}
defines the creation operator for the Hartree-Fock orbitals for each $\vk\in \mc{K}$.

The Hartree-Fock equations can be expressed in terms of the one-body reduced density matrix (1-RDM). The 1-RDM associated with a given Slater determinant $\ket{\Psi_S}$ can be written as
\begin{equation}
[P(\vk)]_{nm}=\braket{\Psi_S|\hat{f}_{m\vk}^{\dagger}\hat{f}_{n\vk}|\Psi_S}=
\sum_{i=1}^{N_{\rm occ}} \Xi_{ni}(\vk)\Xi_{mi}^*(\vk).
\end{equation}
Due to the orthogonality relation on $\Xi(\vk)$ given in~\cref{eq:hf-orbitals}, we may verify that for each $\vk$ the 1-RDM $P(\vk)$ is an orthogonal projection onto an $M$-dimensional vector space. 
\begin{rem}
\label{rem:fsd-rdm}
    The specific matrix representation of a 1-RDM depends on the choice of basis used to define the creation and annihilation operators $\hat{f}_{m\vk}^\dagger$ and $\hat{f}_{m\vk}$.
    Using the gauge fixing scheme from \cref{sec:requ-symm-gauge}, the two ferromagnetic Slater determinant states have 1-RDMs as follows (where $\eta \in \{ \pm 1 \}$):
    \begin{equation}
        \braket{\Psi_{\eta} | \hat{f}_{m\vk}^{\dagger}\hat{f}_{n\vk} | \Psi_{\eta}} =
        \begin{cases}
            \delta_{mn} & \eta m, \eta n > 0 \\
            0 & \eta m < 0 \text{ or } \eta n < 0
        \end{cases}
    \end{equation}
    That is, the 1-RDM for $\ket{\Psi_{+}}$ is identity for all $\vk$ and all for $m, n > 0$ and zero otherwise (and similarly for $\ket{\Psi_{-}}$).
\end{rem}

Let us define the manifold of all admissible uniformly-filled 1-RDMs:
\[
\mc{M} := \{ P \in \CC^{(2 M) \times (2M) } : P^2 = P,~ P^\dagger = P,~ \tr{(P)} = M \}.
\]
Following the standard derivation of Hartree-Fock theory (see e.g.,~\cite{SzaboOstlund1989,MartinReiningCeperley2016}), the Hartree-Fock energy is given by finding the Slater determinant $\ket{\Psi_{S}}$ which minimizes the energy of the interacting Hamiltonian:
\[
\mathcal{E}^{\rm (HF)} = \min_{\ket{\Psi_S}} \braket{\Psi_S | \hat{H}_{FBI} | \Psi_S}.
\]
Since $\hat{H}_{FBI}$ only involves the electron-electron interaction term, the Hartree-Fock energy can be written as the sum of two functionals acting on the 1-RDM $P \in \mc{M}^{\mc{K}}$. In particular,
\begin{equation}
  \label{eq:hf-energy}
  \mathcal{E}^{\rm (HF)} = \min_{P \in \mc{M}^{\mc{K}}} \Big(J[P] + K[P] \Big)
\end{equation}
where $J[\cdot]$ and $K[\cdot]$ are non-linear functionals in $P$, referred to as the Hartree and Fock energy functionals, respectively.

Due to the specific form of $\hat{H}_{FBI}$, using Wick's theorem, the Hartree and Fock energies (up to a physically irrelevant constant) can be concisely written in terms the matrix $Q(\vk) := 2 P(\vk) - I$ as follows:
\begin{align}
  J[P] & = \frac{1}{| \Omega | N_{\vk}} \sum_{\vk, \vq \in \mc{K}} \sum_{\vG \in \Gamma^*} V(\vG)   \tr{\Big(\Lambda_{\vk}(\vG) Q(\vk) \Big)} \tr{\Big(\Lambda_{\vk + \vq}(\vG)^\dagger Q(\vk + \vq)\Big)},
  \label{eq:hartree-energy} \\
  K[P] & = -\frac{1}{| \Omega | N_{\vk}} \sum_{\vk,\vq \in \mc{K}} \sum_{\vG \in \Gamma^*} V(\vq + \vG)  \tr{\Big( \Lambda_{\vk}(\vq + \vG) Q(\vk + \vq) \Lambda_{\vk}(\vq + \vG)^\dagger Q(\vk)\Big) }.
\label{eq:fock-energy}
\end{align}

\subsection{Positive Semidefiniteness}
\label{sec:mb-posit-semid}
In this section, we prove that the many-body Hamiltonian $\hat{H}_{FBI}$ is positive semidefinite.
We start by proving the following
\begin{lemm}
  For all $\vq' \in \R^{2}$, the operator $\hat{\rho}(\vq')$ satisfies $\widehat{\rho}(\vq')^{\dagger} = \widehat{\rho}(-\vq')$.
\end{lemm}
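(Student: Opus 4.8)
The goal is to show $\widehat{\rho}(\vq')^{\dagger} = \widehat{\rho}(-\vq')$ directly from the definition in \cref{eq:h-fbi}. First I would write out $\widehat{\rho}(\vq')^{\dagger}$ term by term. The constant piece $-\tfrac{1}{2}\delta_{mn}\sum_{\vG}\delta_{\vq',\vG}$ is real, so under conjugation it contributes $-\tfrac{1}{2}\overline{[\Lambda_{\vk}(\vq')]_{mn}}\,\delta_{mn}\sum_{\vG}\delta_{\vq',\vG}$; since $\sum_{\vG}\delta_{\vq',\vG}$ is nonzero only when $\vq' \in \Gamma^{*}$, in which case $-\vq' \in \Gamma^{*}$ as well, this matches (after reindexing) the constant piece of $\widehat{\rho}(-\vq')$ provided the form-factor identity \cref{eq:form-factor-dagger} is invoked on the diagonal. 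For the operator piece, $(\hat{f}_{m\vk}^{\dagger}\hat{f}_{n(\vk+\vq')})^{\dagger} = \hat{f}_{n(\vk+\vq')}^{\dagger}\hat{f}_{m\vk}$, so
\begin{equation}
\widehat{\rho}(\vq')^{\dagger} = \sum_{\vk \in \mc{K}} \sum_{m,n \in \mc{N}} \overline{[\Lambda_{\vk}(\vq')]_{mn}} \left( \hat{f}_{n(\vk+\vq')}^{\dagger}\hat{f}_{m\vk} - \tfrac{1}{2}\delta_{mn}\sum_{\vG}\delta_{\vq',\vG} \right).
\end{equation}

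Next I would change summation variables to bring this into the standard form of $\widehat{\rho}(-\vq')$. Set $\vk'' := \vk + \vq'$, so $\vk = \vk'' - \vq'$ and the operator becomes $\hat{f}_{n\vk''}^{\dagger}\hat{f}_{m(\vk'' - \vq')}$; swap the dummy labels $m \leftrightarrow n$. The coefficient is then $\overline{[\Lambda_{\vk'' - \vq'}(\vq')]_{nm}}$, which by definition of Hermitian conjugation equals $[\Lambda_{\vk'' - \vq'}(\vq')^{\dagger}]_{mn}$. Applying \cref{eq:form-factor-dagger} with base momentum $\vk'' - \vq'$ and shift $\vq'$ gives $\Lambda_{\vk'' - \vq'}(\vq')^{\dagger} = \Lambda_{\vk''}(-\vq')$, so the coefficient is exactly $[\Lambda_{\vk''}(-\vq')]_{mn}$, matching the definition of $\widehat{\rho}(-\vq')$. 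One subtlety: the sum over $\vk$ runs over $\mc{K}$, and after the shift $\vk'' = \vk + \vq'$ ranges over $\mc{K} + \vq'$ rather than $\mc{K}$; since $\vq' \in \mc{K} + \Gamma^{*}$, write $\vq' = \vq_0 + \vG_0$ with $\vq_0 \in \mc{K}$, use periodicity of the creation/annihilation operators \cref{eq:period_f} and the shift identity \cref{eq:form-factor-shift} for $\Lambda$ to see that the sum over $\mc{K} + \vq'$ can be folded back to a sum over $\mc{K}$. I should also handle the edge case $\vk'' - \vq' \in \mc{K}$ reducing mod $\Gamma^{*}$ consistently, but this is the same bookkeeping.

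Finally I would assemble the two pieces, confirming the constant term transforms correctly using \cref{eq:form-factor-dagger} restricted to $\vq' = \vG$ (where $\Lambda_{\vk}(\vG)^{\dagger} = \Lambda_{\vk}(-\vG)$ on the diagonal gives the needed matching after relabeling), and conclude $\widehat{\rho}(\vq')^{\dagger} = \widehat{\rho}(-\vq')$.

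\textbf{Main obstacle.} The calculation itself is routine; the one place that requires care is the bookkeeping of the momentum grid under the shift $\vk \mapsto \vk + \vq'$, ensuring that periodicity of $\hat{f}$, $\hat{f}^{\dagger}$ (\cref{eq:period_f}) and the form-factor shift identity (\cref{eq:form-factor-shift}) are applied consistently so that the reindexed sum genuinely reproduces $\widehat{\rho}(-\vq')$ as defined, and checking that the constant (number-subtraction) term is consistent between the two expressions. These are mechanical but easy to get slightly wrong.
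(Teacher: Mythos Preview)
Your proposal is correct and follows essentially the same route as the paper: take the adjoint term by term, swap the band indices, shift the momentum summation variable by $\vq'$ (folding back into $\mc{K}$ via \cref{eq:period_f} and \cref{eq:form-factor-shift}), and then invoke \cref{eq:form-factor-dagger} to identify the coefficient with $[\Lambda_{\vk''}(-\vq')]_{mn}$. Your explicit treatment of the constant subtraction term and your identification of the grid-folding bookkeeping as the only delicate point are both accurate and match the paper's argument.
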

\begin{proof}
For any fixed $\vq \in \mc{K}$ and $\vG \in \Gamma^{*}$, we calculate
\begin{equation}
  \begin{split}
    \widehat{\rho}(\vq + \vG)^{\dagger}
    & = \sum_{\vk \in \mc{K}} \sum_{m,n \in \mc{N}} [\Lambda_{\vk}(\vq + \vG)^{*}]_{mn} \left(\hat{f}_{n(\vk+\vq)}^{\dagger} \hat{f}_{m\vk} - \frac{1}{2} \delta_{\vq,0} \delta_{mn} \right) \\
    & = \sum_{\vk \in \mc{K}} \sum_{m,n \in \mc{N}} [\Lambda_{\vk}(\vq + \vG)^{*}]_{nm} \left(\hat{f}_{m(\vk+\vq)}^{\dagger} \hat{f}_{n\vk} - \frac{1}{2} \delta_{\vq,0} \delta_{mn} \right).
  \end{split}
\end{equation}
Now we would like to make the change of variables $\vk \mapsto \vk - \vq$.
Unfortunately, in general  $\vk, \vq \in \mc{K}$ does not imply that $\vk - \vq \in \mc{K}$.
However, due to the definition of $\mc{K}$, we may always find a $\vG_{\vk,\vq} \in \Gamma^{*}$ so that
\begin{equation}
 \vk - \vq = \widetilde{\vk - \vq} + \vG_{\vk,\vq} \quad \text{where} \quad \widetilde{\vk - \vq} \in \mc{K}.
\end{equation}
Under the change of variables $\vk \mapsto \widetilde{\vk - \vq}$ we have that
\begin{equation}
  \begin{split}
    \widehat{\rho}(\vq + \vG)^{\dagger}
    & = \sum_{\vk \in \mc{K}} \sum_{m,n \in \mc{N}} [\Lambda_{\vk - \vq + \vG_{\vk,\vq}}(\vq + \vG)^{*}]_{nm} \left(\hat{f}_{m(\vk+\vG_{\vk,\vq})}^{\dagger} \hat{f}_{n(\vk - \vq + \vG_{\vk,\vq})} - \frac{1}{2} \delta_{\vq,0} \delta_{mn} \right) \\
    & = \sum_{\vk \in \mc{K}} \sum_{m,n \in \mc{N}} [\Lambda_{\vk - \vq}(\vq + \vG)^{*}]_{nm} \left(\hat{f}_{m\vk}^{\dagger} \hat{f}_{n(\vk - \vq)} - \frac{1}{2} \delta_{\vq,0} \delta_{mn} \right) \\
    & = \sum_{\vk \in \mc{K}} \sum_{m,n \in \mc{N}} [\Lambda_{\vk}(-\vq - \vG)]_{mn} \left(\hat{f}_{m\vk}^{\dagger} \hat{f}_{n(\vk - \vq)} - \frac{1}{2} \delta_{\vq,0} \delta_{mn} \right) = \widehat{\rho}(- \vq - \vG)
  \end{split}
\end{equation}
where in the second line we have used~\cref{eq:period_f} and~\cref{eq:form-factor-shift}.
\end{proof}
Since $\hat{H}_{FBI}$ takes the form
\begin{equation}
  \hat{H}_{FBI} = \sum_{\vq'} \hat{V}(\vq') \widehat{\rho}(\vq') \widehat{\rho}(-\vq') = \sum_{\vq'} \hat{V}(\vq') \widehat{\rho}^{\dag}(\vq') \widehat{\rho}(\vq')
\end{equation}
and $\hat{V}(\vq')=\hat{V}(-\vq') > 0$, and immediate corollary is that \textit{any} many-body state $\ket{\Psi}$ so that $\hat{H}_{FBI} \ket{\Psi} = 0$ must be a ground state.

\subsection{Ferromagnetic Slater Determinants are Ground States}
\label{sec:mb-ferr-slat-determ}
We can now check that the ferromagnetic Slater determinant states are ground states by verifying they are zero energy eigenstates.

\begin{prop}[Proof of \cref{result:exact-gs}]
  \label{prop:hf-gs}
  Suppose that the single particle Hamiltonian $H$ satisfies the symmetry assumption (\cref{assume:symm}). Then the two ferromagnetic Slater determinant states (\cref{def:ferro-sd}) are exact many-body ground states of the interacting model \cref{eq:h-fbi}.
\end{prop}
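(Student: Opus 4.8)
The plan is to prove the stronger statement that $\widehat{\rho}(\vq')\ket{\Psi_{\pm}} = 0$ for every $\vq'$ in the momentum set appearing in \cref{eq:h-fbi}. Granting this, since $\hat{H}_{FBI} \propto \sum_{\vq'}\hat{V}(\vq')\widehat{\rho}(\vq')\widehat{\rho}(-\vq')$ and $\widehat{\rho}(\vq')^\dagger = \widehat{\rho}(-\vq')$ by the lemma just proved in \cref{sec:mb-posit-semid}, this yields $\hat{H}_{FBI}\ket{\Psi_\pm} = 0$; combined with the positive semidefiniteness of $\hat{H}_{FBI}$ established in the same subsection, it forces $\ket{\Psi_\pm}$ to be ground states. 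I would carry out the computation for $\ket{\Psi_+}$ only, the case of $\ket{\Psi_-}$ being symmetric under exchanging the roles of the $n>0$ and $n<0$ bands (which, under the gauge of \cref{eq:sublattice-symm}, are the $A$- and $B$-sublattice sectors).

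I would then split into two cases. \textbf{Case $\vq' \notin \Gamma^*$.} Writing $\vq' = \vq + \vG$ with $\vq \in \Omega^*\setminus\{\vzero\}$, the subtraction term $-\tfrac12\delta_{mn}\delta_{\vq',\vG}$ in \cref{eq:h-fbi} is absent, so $\widehat{\rho}(\vq')\ket{\Psi_+} = \sum_{\vk}\sum_{m,n}[\Lambda_\vk(\vq')]_{mn}\hat{f}^\dagger_{m\vk}\hat{f}_{n(\vk+\vq')}\ket{\Psi_+}$. Because $\ket{\Psi_+}$ occupies precisely the bands $n>0$, the factor $\hat{f}_{n(\vk+\vq')}$ annihilates $\ket{\Psi_+}$ unless $n>0$; and then, since $\vq\ne\vzero$ rules out $m\vk=n(\vk+\vq')$, the factor $\hat{f}^\dagger_{m\vk}$ produces a nonzero state only if $m<0$. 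But \cref{lem:form-factor-properties} tells us $\Lambda_\vk(\vq')$ is block-diagonal in the sublattice grading, so $[\Lambda_\vk(\vq')]_{mn}=0$ whenever $mn<0$; hence every surviving term carries a zero coefficient, and $\widehat{\rho}(\vq')\ket{\Psi_+}=0$.

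\textbf{Case $\vq' = \vG \in \Gamma^*$.} Here $\widehat{\rho}(\vG)\ket{\Psi_+} = \sum_\vk\big(\sum_{m,n}[\Lambda_\vk(\vG)]_{mn}\hat{f}^\dagger_{m\vk}\hat{f}_{n\vk} - \tfrac12\sum_n[\Lambda_\vk(\vG)]_{nn}\big)\ket{\Psi_+}$, and $\hat{f}^\dagger_{m\vk}\hat{f}_{n\vk}\ket{\Psi_+}$ equals $\ket{\Psi_+}$ when $m=n>0$ and is otherwise either zero (empty orbital annihilated, or occupied orbital re-created) or carries a vanishing form-factor entry. Using \cref{eq:form-factor-a} to identify $\sum_{n>0}[\Lambda_\vk(\vG)]_{nn} = \tr A_\vk(\vG)$ and $\sum_{n\in\mc{N}}[\Lambda_\vk(\vG)]_{nn} = \tr A_\vk(\vG)+\overline{\tr A_\vk(\vG)} = 2\Re\tr A_\vk(\vG)$, I obtain $\widehat{\rho}(\vG)\ket{\Psi_+} = i\big(\sum_{\vk\in\mc{K}}\Im\tr A_\vk(\vG)\big)\ket{\Psi_+}$, which vanishes by the sum rule \cref{eq:form-factor-sum-rule}.

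The main obstacle — really the only nontrivial point — is the $\vq'\in\Gamma^*$ case: there the ostensible ``diagonal'' contribution does not cancel term by term, and one is compelled to invoke the global identity $\sum_{\vk}\Im\tr A_\vk(\vG)=0$. This is precisely why \cref{lem:form-factor-properties}, and in particular the sum rule, was established beforehand; the remainder is routine fermionic bookkeeping, with only minor care needed when reducing shifted momenta $\vk+\vq'$ back into $\mc{K}$ via the $\Gamma^*$-periodicity \cref{eq:period_f} of the creation and annihilation operators.
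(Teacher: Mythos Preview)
Your proof is correct and follows essentially the same approach as the paper: you establish $\widehat{\rho}(\vq')\ket{\Psi_{\pm}}=0$ for all $\vq'$, invoking the block-diagonal structure of $\Lambda_{\vk}$ from \cref{lem:form-factor-properties} and the sum rule \cref{eq:form-factor-sum-rule} at exactly the same points. The only stylistic difference is that the paper compresses both cases into a single chain of equalities with a $\delta_{\vq,\vzero}$ factor, whereas you treat $\vq'\notin\Gamma^*$ and $\vq'\in\Gamma^*$ separately---your version is in fact slightly more explicit about why the off-diagonal (in momentum) terms die.
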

\begin{proof}
  We show that for all $\vq' = \vq + \vG$, $\widehat{\rho}(\vq') \ket{\Psi_{\pm}} = 0$ where $\ket{\Psi_{\pm}}$ is a ferromagnetic Slater determinant.
  While we only consider $\ket{\Psi_{+}}$, the calculation for $\ket{\Psi_{-}}$ follows similar steps.
  \begin{equation}
    \begin{split}
      \widehat{\rho}(\vq + \vG) \ket{\Psi_{+}}
      & = \sum_{\vk \in \mc{K}} \left[ \sum_{m,n} [\Lambda_{\vk}(\vq + \vG)]_{mn} \left(\hat{f}_{m\vk}^\dagger \hat{f}_{n(\vk+\vq)} - \frac{1}{2} \delta_{mn} \delta_{\vq,\vzero}  \right) \right] \ket{\Psi_{+}} \\
      & = \delta_{\vq, \vzero} \sum_{\vk \in \mc{K}} \left[ \sum_{m,n} [\Lambda_{\vk}(\vG)]_{mn} \left(\hat{f}_{m\vk}^\dagger \hat{f}_{n\vk} - \frac{1}{2} \delta_{mn}  \right) \right] \ket{\Psi_{+}} \\
      & = \delta_{\vq, \vzero} \sum_{\vk \in \mc{K}} \left[ (\sum_{m > 0} [\Lambda_{\vk}(\vG)]_{mm}) - \frac{1}{2} (\sum_{m} [\Lambda_{\vk}(\vG)]_{mm})  \right] \ket{\Psi_{+}} \\
      & = \frac{1}{2}\delta_{\vq, \vzero} \left[\sum_{\vk \in \mc{K}} (\sum_{m > 0} [\Lambda_{\vk}(\vG)]_{mm} - [\Lambda_{\vk}(\vG)^{*}]_{mm})  \right] \ket{\Psi_{+}} \\
      & = \frac{1}{2}\delta_{\vq, \vzero} \left[\sum_{\vk \in \mc{K}} \Im \tr(A_{\vk}(\vG)) \right] \ket{\Psi_{+}} = 0
    \end{split}
  \end{equation}
  where in the second line, we have use the fact that $\Lambda_{\vk}(\vq + \vG)$ is block diagonal and the ferromagnetic Slater determinant state fully fills the $n > 0$ states.
  The last expression vanishes due to the sum rule in \cref{eq:form-factor-sum-rule}.
\end{proof}

\subsection{Charge Gap at the Thermodynamic Limit}
\label{sec:mb-charge-gap}

As we will see, adding or removing an electron to a ferromagnetic Slater determinant always costs a non-zero amount of energy.
\begin{prop}[Charge Gap of Ferromagnetic Slater Determinants]
Fix $\eta \in \{ \pm 1 \}$ and consider an arbitrary single electron excitation of the ferromagnetic Slater determinant $\ket{\Psi_{\eta}}$.
All such excitations can be written in terms of creation and annihilation operators of the following form:
  \begin{equation}
      \hat{c}^\dagger = \sum_{\vk \in \mc{K}} \sum_{\eta \ell < 0} \hat{f}^\dagger_{\ell\vk} c_{\ell\vk} \qquad       \hat{c} = \sum_{\vk \in \mc{K}} \sum_{\eta \ell > 0} \hat{f}_{\ell\vk} \overline{c_{\ell\vk}} \qquad \sum_{\vk \in \mc{K}} \sum_{\eta \ell > 0} | c_{\ell,\vk} |^2 = 1.
  \end{equation}
  where $\eta \ell < 0$ denotes summation over the set $\{ \ell : \eta \ell < 0 \}$ and similarly for $\eta \ell > 0$.
  
  For such excitations, we have 
  \begin{equation}
  \begin{split}
      \braket{\Psi_{\eta} | \hat{c}^\dagger \hat{H}_{FBI} \hat{c} | \Psi_{\eta}} & = \frac{1}{N_{\vk} | \Omega |} \sum_{\vq'} \hat{V}(\vq') \sum_{\vk \in \mc{K}} \sum_{\eta \ell > 0} \sum_{\eta \ell' > 0} c_{\ell'\vk} [ \Lambda_{\vk}(\vq') \Lambda_{\vk}(\vq')^\dagger]_{\ell',\ell} \overline{c_{\ell\vk}}, \\
    \braket{\Psi_{\eta} | \hat{c} \hat{H}_{FBI} \hat{c}^\dagger | \Psi_{\eta}} & = \frac{1}{N_{\vk} | \Omega |} \sum_{\vq'} \hat{V}(\vq') \sum_{\vk \in \mc{K}} \sum_{\eta \ell < 0} \sum_{\eta \ell' < 0} \overline{c_{\ell'\vk}} [ \Lambda_{\vk}(-\vq')^\dagger \Lambda_{\vk}(-\vq')]_{\ell',\ell} c_{\ell\vk}.
  \end{split}
  \end{equation}
 Hence, since for all $\vk$, $[\Lambda_{\vk}(\vzero)] = \delta_{mn}$, the energy of $\hat{c}^\dagger \ket{\Psi_{\eta}}$ and $\hat{c} \ket{\Psi_{\eta}}$ is positive for all choices of $\hat{c}^\dagger$, $\hat{c}$.
\end{prop}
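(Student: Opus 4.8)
The plan is to exploit that $\ket{\Psi_{\eta}}$ is a zero-energy eigenstate. By \cref{prop:hf-gs} we have $\widehat{\rho}(\vq')\ket{\Psi_{\eta}}=0$ for every $\vq'$, so after commuting one density operator through the excitation operator the energy collapses into a manifestly nonnegative sum of squared norms. Concretely, writing $\hat{H}_{FBI}=\frac{1}{N_{\vk}|\Omega|}\sum_{\vq'\in\mc{K}+\Gamma^{*}}\hat{V}(\vq')\,\widehat{\rho}(\vq')\widehat{\rho}(-\vq')$, using $\widehat{\rho}(-\vq')=\widehat{\rho}(\vq')^{\dagger}$ (see \cref{sec:mb-posit-semid}) together with $\bra{\Psi_{\eta}}\hat{c}^{\dagger}\widehat{\rho}(\vq')=(\widehat{\rho}(-\vq')\hat{c}\ket{\Psi_{\eta}})^{\dagger}$ and $\bra{\Psi_{\eta}}\hat{c}\,\widehat{\rho}(\vq')=(\widehat{\rho}(-\vq')\hat{c}^{\dagger}\ket{\Psi_{\eta}})^{\dagger}$, one gets $\braket{\Psi_{\eta}|\hat{c}^{\dagger}\hat{H}_{FBI}\hat{c}|\Psi_{\eta}}=\frac{1}{N_{\vk}|\Omega|}\sum_{\vq'}\hat{V}(\vq')\,\norm{\widehat{\rho}(-\vq')\hat{c}\ket{\Psi_{\eta}}}^{2}$ and, identically, $\braket{\Psi_{\eta}|\hat{c}\,\hat{H}_{FBI}\,\hat{c}^{\dagger}|\Psi_{\eta}}=\frac{1}{N_{\vk}|\Omega|}\sum_{\vq'}\hat{V}(\vq')\,\norm{\widehat{\rho}(-\vq')\hat{c}^{\dagger}\ket{\Psi_{\eta}}}^{2}$. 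Since $\widehat{\rho}(-\vq')\ket{\Psi_{\eta}}=0$, the vector $\widehat{\rho}(-\vq')\hat{c}\ket{\Psi_{\eta}}$ equals $[\widehat{\rho}(-\vq'),\hat{c}]\ket{\Psi_{\eta}}$, and likewise with $\hat{c}^{\dagger}$.

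Next I would evaluate these commutators from the CAR; the additive $c$-number in $\widehat{\rho}$ commutes away, leaving for instance
\[
[\widehat{\rho}(-\vq'),\hat{c}]=-\sum_{\vk\in\mc{K}}\sum_{\eta\ell>0}\sum_{n\in\mc{N}}\overline{c_{\ell\vk}}\,[\Lambda_{\vk}(-\vq')]_{\ell n}\,\hat{f}_{n(\vk-\vq')},
\]
where $\vk-\vq'$ is reduced into $\mc{K}$ using the $\Gamma^{*}$-periodicity of $\hat{f}$ (\cref{eq:period_f}) and the shift identity \eqref{eq:form-factor-shift}; by the block structure of $\Lambda_{\vk}$ (\cref{lem:form-factor-properties}) only $n$ with $\eta n>0$ contribute. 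Acting on $\ket{\Psi_{\eta}}$, the states $\hat{f}_{n(\vk-\vq')}\ket{\Psi_{\eta}}$ with $\eta n>0$ are orthogonal and normalized, up to periodic identification, because the $1$-RDM of $\ket{\Psi_{\eta}}$ is the identity on the $\eta n>0$ block (\cref{rem:fsd-rdm}). Expanding $\norm{\cdot}^{2}$ and contracting the index $n$ — which may be extended to all of $\mc{N}$ by block-diagonality — produces the quadratic form $\sum_{\vk}\sum_{\eta\ell,\eta\ell'>0}c_{\ell\vk}\,[\Lambda_{\vk}(-\vq')\Lambda_{\vk}(-\vq')^{\dagger}]_{\ell'\ell}\,\overline{c_{\ell'\vk}}$; replacing $\vq'\mapsto-\vq'$ in the sum over $\vq'\in\mc{K}+\Gamma^{*}$, which is invariant under negation, and using $\hat{V}(-\vq')=\hat{V}(\vq')$, gives the stated identity for the removal energy. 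The addition energy follows in the same way from $[\widehat{\rho}(-\vq'),\hat{c}^{\dagger}]$: the commutator now creates operators $\hat{f}^{\dagger}_{m(\vk+\vq')}$ on empty modes, the form-factor identities \eqref{eq:form-factor-dagger} and \eqref{eq:form-factor-shift} (giving $\Lambda_{\vk}(\vq')^{\dagger}=\Lambda_{\vk+\vq'}(-\vq')$) rewrite the shifted form factor, the $1$-RDM of the empty block is again the identity, and one arrives at the stated positive-semidefinite quadratic form built from the form factor at $\mp\vq'$.

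For the positivity claim, each of the two identities has the shape $\frac{1}{N_{\vk}|\Omega|}\sum_{\vq'}\hat{V}(\vq')\sum_{\vk\in\mc{K}}\vc_{\vk}^{\dagger}G_{\vk}(\vq')\vc_{\vk}$, where $(\vc_{\vk})_{\ell}=c_{\ell\vk}$ runs over the relevant sign block and $G_{\vk}(\vq')$ is of the form $\Lambda\Lambda^{\dagger}$ or $\Lambda^{\dagger}\Lambda$, hence positive semidefinite, while $\hat{V}(\vq')>0$; therefore every term of the $\vq'$-sum is $\ge 0$. Isolating the term $\vq'=\vzero\in\mc{K}+\Gamma^{*}$ and using $[\Lambda_{\vk}(\vzero)]_{mn}=\delta_{mn}$, the matrix $G_{\vk}(\vzero)$ is the identity on its block, so this single term already contributes $\frac{\hat{V}(\vzero)}{N_{\vk}|\Omega|}\sum_{\vk}\sum_{\eta\ell>0}|c_{\ell\vk}|^{2}=\frac{\hat{V}(\vzero)}{N_{\vk}|\Omega|}>0$ for $\hat{c}$ (and the analogous strictly positive quantity with $\eta\ell<0$ for $\hat{c}^{\dagger}$), by the normalization of the excitation. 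Since all remaining terms are nonnegative, both $\braket{\Psi_{\eta}|\hat{c}^{\dagger}\hat{H}_{FBI}\hat{c}|\Psi_{\eta}}$ and $\braket{\Psi_{\eta}|\hat{c}\,\hat{H}_{FBI}\,\hat{c}^{\dagger}|\Psi_{\eta}}$ are strictly positive.

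I expect the only real difficulty to be notational bookkeeping: keeping the $\Gamma^{*}$-reductions of $\vk\pm\vq'$ consistent with the form-factor identities of \cref{lem:form-factor-identities}, and checking that the Gram matrix of $\{\hat{f}^{(\dagger)}_{n(\vk\pm\vq')}\ket{\Psi_{\eta}}\}$ on the filled (resp. empty) block reduces to the identity, which is precisely \cref{rem:fsd-rdm}. Given \cref{prop:hf-gs}, \cref{lem:form-factor-properties} and \cref{rem:fsd-rdm}, everything else is mechanical.
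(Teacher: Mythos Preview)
Your proposal is correct and follows essentially the same route as the paper: both use $\widehat{\rho}(\pm\vq')\ket{\Psi_{\eta}}=0$ to replace the density operators acting on the excited state by commutators $[\widehat{\rho},\hat{c}^{(\dagger)}]$, evaluate these commutators via the CAR, and contract against the diagonal $1$-RDM of $\ket{\Psi_{\eta}}$ to obtain the quadratic form in $\Lambda_{\vk}(\vq')\Lambda_{\vk}(\vq')^{\dagger}$ (resp.\ $\Lambda^{\dagger}\Lambda$), with positivity following from the $\vq'=\vzero$ term. Your only cosmetic difference is that you first package $\widehat{\rho}(\vq')\widehat{\rho}(-\vq')$ as $\widehat{\rho}(-\vq')^{\dagger}\widehat{\rho}(-\vq')$ and write the energy as $\sum_{\vq'}\hat{V}(\vq')\norm{\widehat{\rho}(-\vq')\hat{c}^{(\dagger)}\ket{\Psi_{\eta}}}^{2}$ before commuting, whereas the paper commutes $\widehat{\rho}(\vq')$ through $\hat{f}^{\dagger}_{\ell\vk'}$ on the left and $\widehat{\rho}(-\vq')$ through $\hat{f}_{\ell'\vk''}$ on the right separately; the two are equivalent and yield the same intermediate expressions.
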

Due to the constraint $\sum_{\vk} \sum_{\ell} | c_{\ell,\vk} |^2 = 1$,  in the thermodynamic limit ($N_{\vk} \rightarrow \infty$) the energy of the single electron excitations converges to
\begin{equation}
\begin{split}
        \bra{\Psi_{\eta}} & \hat{c}^{\dag} \hat{H}_{FBI} \hat{c}\ket{\Psi_{\eta}} \\ & = \int_{\R^2} \int_{\Omega} \hat{V}(\vq') \sum_{\eta \ell > 0} \sum_{\eta \ell' > 0} c_{\ell'}(\vk) [ \Lambda_{\vk}(\vq') \Lambda_{\vk}(\vq')^\dagger]_{\ell',\ell} \overline{c_{\ell}(\vk)} \ud{\vk} \ud{\vq'}, \\
    \bra{\Psi_{\eta}} & \hat{c} \hat{H}_{FBI} \hat{c}^\dagger\ket{\Psi_{\eta}} \\ & = \int_{\R^2} \int_{\Omega} \hat{V}(\vq') \sum_{\eta \ell < 0} \sum_{\eta \ell' < 0} \overline{c_{\ell'}(\vk)} [ \Lambda_{\vk}(-\vq')^\dagger \Lambda_{\vk}(-\vq')]_{\ell',\ell} c_{\ell}(\vk) \ud{\vk} \ud{\vq'}. 
\end{split}
\end{equation}
which is again strictly positive and so the charge gap does not vanish in the thermodynamic limit.
We now proceed to the proof.

\begin{proof}
From the canonical anticommutation relations, we have that for all  $\vq' = \vq + \vG$
\begin{equation}
  \begin{split}
    [ \hat{f}_{n\vk}^\dagger \hat{f}_{m(\vk + \vq')}, \hat{f}_{\ell,\vk'}^{\dagger} ] & = \delta_{m\ell} \delta_{\vk',(\vk + \vq')} \hat{f}_{n\vk}^{\dagger}, \\
    [ \hat{f}_{n\vk}^\dagger \hat{f}_{m(\vk + \vq')}, \hat{f}_{\ell,\vk'} ] & = -\delta_{n\ell} \delta_{\vk',\vk} \hat{f}_{m(\vk+\vq')}.
  \end{split}
\end{equation}
Therefore,
\begin{equation}
\label{eq:creation-comm}
\begin{split}
    [ \widehat{\rho}(\vq'), \hat{f}_{\ell,\vk'}^\dagger ]
    & = \sum_{\vk} \sum_{mn} [\Lambda_{\vk}(\vq')]_{mn} \delta_{n\ell} \delta_{\vk',(\vk + \vq')} \hat{f}_{n\vk}^\dagger,
      = \sum_{n} [\Lambda_{\vk' - \vq'}(\vq')]_{\ell n} \hat{f}_{n(\vk'-\vq')}^\dagger, \\
    [ \widehat{\rho}(\vq'), \hat{f}_{\ell,\vk'} ]
    & = -\sum_{\vk} \sum_{mn} [\Lambda_{\vk}(\vq')]_{mn} \delta_{n\ell} \delta_{\vk',\vk} \hat{f}_{m(\vk + \vq')}
      = - \sum_{m} [\Lambda_{\vk'}(\vq')]_{m\ell} \hat{f}_{m(\vk'+\vq')}. \\
\end{split}
\end{equation}
Since the ferromagnetic Slater determinant state $\ket{\Psi_{\eta}}$ fully fills either the positive or negative bands, $\hat{f}_{\ell'\vk''} \ket{\Psi_{\eta}} \neq 0$ if and only if $\eta \ell' > 0$.
Therefore, for a fixed $\eta$, we only consider $\ell'$ so that $\eta \ell' > 0$.
For the ferromagnetic Slater determinant state $\ket{\Psi_{\eta}}$ for all $\vq'$ we have
\begin{equation}
    \begin{split}
        \bra{\Psi_{\eta}} & \hat{f}_{\ell\vk'}^\dagger \widehat{\rho}(\vq') \widehat{\rho}(-\vq') \hat{f}_{\ell'\vk''} \ket{\Psi_{\eta}} = - \braket{\Psi_{\eta} | [ \widehat{\rho}(\vq'), \hat{f}_{\ell\vk'}^\dagger] [\widehat{\rho}(-\vq'), \hat{f}_{\ell'\vk''} ] | \Psi_{\eta}} \\[1ex]
        & = \sum_{mn} [\Lambda_{\vk'-\vq'}(\vq')]_{m\ell} [\Lambda_{\vk''}(-\vq')]_{\ell'n} \braket{\Psi_{\eta} | \hat{f}_{m(\vk'-\vq')}^\dagger \hat{f}_{n(\vk''-\vq')} | \Psi_{\eta} } \\
        & = \sum_{\eta m > 0} \sum_{\eta n > 0} [\Lambda_{\vk'-\vq'}(\vq')]_{m\ell} [\Lambda_{\vk'}(-\vq')]_{\ell'n} \delta_{mn} \delta_{\vk'\vk''} \\
        & = [\Lambda_{\vk'}(-\vq') \Lambda_{\vk'}(-\vq')^\dagger]_{\ell',\ell} \delta_{\vk'\vk''} 
    \end{split}
\end{equation}
where in the last line we have used the identity $\Lambda_{\vk - \vq'}(\vq') = \Lambda_{\vk}(-\vq')^\dagger$ \cref{eq:form-factor-dagger}.

Similar calculations show for $\eta \ell' < 0$
\begin{equation}
    \bra{\Psi_{\eta}} \hat{f}_{\ell\vk'} \widehat{\rho}(\vq') \widehat{\rho}(\vq') \hat{f}_{\ell'\vk''}^\dagger \ket{\Psi_{\eta}} = [\Lambda_{\vk'}(\vq')^\dagger \Lambda_{\vk'}(\vq')]_{\ell',\ell} \delta_{\vk'\vk''}. 
\end{equation}
Therefore, by linearity 
\begin{equation}
    \begin{split}
        \braket{\Psi_{\eta} | \hat{c}^\dagger \widehat{\rho}(\vq') \widehat{\rho}(-\vq') \hat{c} | \Psi_{\eta}} & = \sum_{\vk \in \mc{K}} \sum_{\eta \ell > 0} \sum_{\eta \ell' > 0} c_{\ell'\vk} [ \Lambda_{\vk}(\vq') \Lambda_{\vk}(\vq')^\dagger]_{\ell',\ell} \overline{c_{\ell\vk}} \\
        \braket{\Psi_{\eta} | \hat{c} \widehat{\rho}(\vq') \widehat{\rho}(-\vq') \hat{c}^\dagger | \Psi_{\eta}} & = \sum_{\vk \in \mc{K}} \sum_{\eta \ell < 0} \sum_{\eta \ell' < 0} \overline{c_{\ell'\vk}} [ \Lambda_{\vk}(-\vq')^\dagger \Lambda_{\vk}(-\vq')]_{\ell',\ell} c_{\ell\vk}
    \end{split}
\end{equation}
which implies the proposition.
\end{proof}

\section{The Hartree-Fock Ground States of the Flat-Band Interacting Hamiltonian}
\label{sec:hartree-fock-ground}

\subsection{Rigorous Statement of \cref{result:main-informal}}

We can now state our main result rigorously:
\begin{theo}[Main theorem]
  \label{thm:hf-gs-unique}
  Suppose that the single particle Hamiltonian $H$ satisfies~\cref{assume:h,assume:symm}.
  Suppose further that the Monkhorst-Pack grid $\mc{K}$ has been chosen to satisfy~\cref{assume:grid}.
  If there exists a $\vk \in \mc{K}$ so that
  \begin{enumerate}
    \item For some $\vG$, $\Im \tr{(A_{\vk}(\vG))} \neq 0$
    \item For all non-trivial orthogonal projectors $\Pi$ (i.e. $\Pi$ is not zero or identity), there exists a $\vG'$ so that
          \begin{equation}
            \| (I - \Pi) A_{\vk}(\vG') \Pi \| > 0
          \end{equation}
  \end{enumerate}
  then the two ferromagnetic Slater determinants are the unique
  translation-invariant Hartree-Fock ground states of $\hat{H}_{FBI}$ in ~\cref{eq:h-fbi}.
\end{theo}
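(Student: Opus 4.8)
The plan is as follows. Since $\hat{H}_{FBI}$ is positive semidefinite and the ferromagnetic Slater determinants have energy zero (\cref{prop:hf-gs}), the Hartree--Fock ground state energy is $0$, so a Slater determinant $\ket{\Psi_S}\in\mc{S}$ is a ground state if and only if $\widehat{\rho}(\vq')\ket{\Psi_S}=0$ for every $\vq'\in\mc{K}+\Gamma^*$ (using $\widehat{\rho}(\vq')^{\dagger}=\widehat{\rho}(-\vq')$ and $\widehat{V}>0$). The first step is to translate this into a condition on the associated one-body density matrices $\{P(\vk)\}_{\vk\in\mc{K}}$. Writing $\ket{\Psi_S}$ as a tensor product over momentum sectors and expanding $\widehat{\rho}(\vq')\ket{\Psi_S}$ into the reference determinant together with particle--hole excitations labeled by the sector that gains an electron, the sector that loses one, and the occupied/virtual orbitals involved, one checks that all these contributions are mutually orthogonal, hence each must vanish separately. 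This produces the ``scalar'' sum rule $\sum_{\vk}\tr(\Lambda_\vk(\vG)Q(\vk))=0$ for all $\vG\in\Gamma^*$ (from the $\vq=\vzero$ terms) together with the ``block'' identities $(I-P(\vk))\Lambda_\vk(\vq')P(\widetilde{\vk+\vq})=0$ for all $\vk\in\mc{K}$, $\vq'\in\mc{K}+\Gamma^*$. Using $\Lambda_\vk(\vq')^{\dagger}=\Lambda_{\vk+\vq}(-\vq')$ and $\Lambda_{\vk+\vG'}(\cdot)=\Lambda_\vk(\cdot)$ from \cref{lem:form-factor-identities}, the block identities are equivalent to the covariance relation
\begin{equation}
  P(\vk)\,\Lambda_\vk(\vq') = \Lambda_\vk(\vq')\,P(\widetilde{\vk+\vq}),\qquad \vk\in\mc{K},\ \vq'\in\mc{K}+\Gamma^*.
  \tag{$\star$}
\end{equation}
All of these are, as expected, satisfied by $\ket{\Psi_\pm}$ (the sum rule by \eqref{eq:form-factor-sum-rule}, and $(\star)$ by the block structure \eqref{eq:form-factor-a} with \cref{rem:fsd-rdm}); in the uniqueness argument only $(\star)$ will actually be used.

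Next I would use the non-degeneracy hypotheses to pin down $P$ at the special momentum $\vk_0\in\mc{K}$ supplied by the theorem. Taking $\vq'=\vG\in\Gamma^*$ in $(\star)$ gives $[P(\vk_0),\Lambda_{\vk_0}(\vG)]=0$, i.e. $[P(\vk_0),\diag(A_{\vk_0}(\vG),\overline{A_{\vk_0}(\vG)})]=0$, for all $\vG$. By hypothesis~(2) the matrices $\{A_{\vk_0}(\vG)\}_{\vG}$ have no nontrivial common invariant subspace, so by Burnside's theorem they generate all of $\CC^{M\times M}$ (and likewise for $\{\overline{A_{\vk_0}(\vG)}\}$). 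Writing $P(\vk_0)$ in $M\times M$ blocks, commutation forces the diagonal blocks to be scalar, $\lambda I_M$ and $(1-\lambda)I_M$ (the trace being $M$), while the off-diagonal block $X$ must intertwine $\overline{A_{\vk_0}(\vG)}$ with $A_{\vk_0}(\vG)$. Then $\ker X$ is a common invariant subspace of $\{\overline{A_{\vk_0}(\vG)}\}$, so $X$ is either $0$ or invertible; if $X$ were invertible, $A_{\vk_0}(\vG)$ and $\overline{A_{\vk_0}(\vG)}$ would be similar for every $\vG$, forcing $\tr A_{\vk_0}(\vG)\in\R$ and contradicting hypothesis~(1). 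Hence $X=0$ and $P(\vk_0)=\diag(\lambda I_M,(1-\lambda)I_M)$ is a rank-$M$ projector, so $\lambda\in\{0,1\}$: $P(\vk_0)$ is the one-body density matrix of $\ket{\Psi_+}$ or of $\ket{\Psi_-}$.

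To finish I would propagate this across the grid. Fix the sign found at $\vk_0$, say $P(\vk_0)=\diag(I_M,0)$. For any $\vk\in\mc{K}$, \cref{assume:grid} gives a chain of grid points joining $\vk_0$ to $\vk$ along which consecutive flat-band projectors satisfy $\|\Pi(\vk_{i+1})-\Pi(\vk_i)\|<1$; one may arrange each step to be a genuine small displacement (avoiding wrap-around on the Brillouin torus), so that $\vq':=\vk_{i+1}-\vk_i\in\mc{K}+\Gamma^*$ and $[\Lambda_{\vk_i}(\vq')]_{mn}=\langle u_{m\vk_i},u_{n\vk_{i+1}}\rangle_{L^2(\Omega)}$ is exactly the overlap matrix between the two flat-band eigenbases. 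Since the two rank-$2M$ projectors lie within distance $<1$, this overlap matrix is invertible, and hence so is $A_{\vk_i}(\vq')$. Now $(\star)$ gives $P(\vk_{i+1})=\Lambda_{\vk_i}(\vq')^{-1}P(\vk_i)\Lambda_{\vk_i}(\vq')$, and since $\Lambda_{\vk_i}(\vq')=\diag(A_{\vk_i}(\vq'),\overline{A_{\vk_i}(\vq')})$ is block diagonal, this conjugation preserves the form $\diag(I_M,0)$. Inducting along the chain gives $P(\vk)=\diag(I_M,0)$ for all $\vk\in\mc{K}$, i.e. $\ket{\Psi_S}$ shares its one-body density matrix with $\ket{\Psi_+}$ and therefore equals it up to a phase; the other sign gives $\ket{\Psi_-}$. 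Together with \cref{prop:hf-gs} this shows $\ket{\Psi_\pm}$ are exactly the ground states of $\hat{H}_{FBI}$ in $\mc{S}$.

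The main obstacle is the very first step: rigorously establishing the dictionary between $\widehat{\rho}(\vq')\ket{\Psi_S}=0$ and the matrix conditions. This requires careful bookkeeping of fermionic signs when commuting annihilation and creation operators through the orbital product defining $\ket{\Psi_S}$, and a clean verification that the excited determinants attached to distinct momentum sectors (and to distinct occupied/virtual orbital pairs within a sector) are genuinely orthonormal, so that no cancellation among them is possible. A secondary technical point, in the propagation step, is the identification of the form factor with a subspace-overlap matrix and the resulting equivalence between $\|\Pi(\vk_i)-\Pi(\vk_{i+1})\|<1$ and invertibility of $A_{\vk_i}(\vk_{i+1}-\vk_i)$, together with the care needed to keep momentum differences from wrapping around the grid.
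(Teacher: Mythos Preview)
Your proof is correct and takes a genuinely different route from the paper's. The paper works directly with the Hartree--Fock energy: it parametrizes each $P(\vk)$ via the CS decomposition with angles $\theta_i(\vk)$ and unitaries $U_1(\vk),U_2(\vk)$, derives an explicit formula for the Fock energy as a sum of terms $|[B^{(1)}\pm B^{(2)}]_{ij}|^2\cos(\theta_i(\vk)\mp\theta_j(\vk'))$, and then argues by induction on the index set that any minimizer must have all $\theta_i(\vk_*)$ equal to the same element of $\{0,\pi\}$---at each inductive step hypothesis~(2) furnishes a nonzero matrix entry that couples a new index to those already pinned. You instead exploit the frustration-free structure to convert ``zero energy'' into the covariance relation $(\star)$ and then use representation theory: hypothesis~(2) is exactly irreducibility of $\{A_{\vk_0}(\vG)\}_\vG$, Burnside forces the diagonal blocks of $P(\vk_0)$ to be scalar, and a Schur-type intertwiner argument (with hypothesis~(1) eliminating the invertible case) kills the off-diagonal block. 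Both proofs finish with the same propagation mechanism via invertibility of the overlap form factor when $\|\Pi(\vk)-\Pi(\vk')\|<1$ (the paper's \cref{lem:full-rank}). Your route is cleaner and more conceptual, sidestepping the CS-decomposition bookkeeping entirely; the paper's approach, on the other hand, produces an explicit energy formula that carries quantitative information about how much a non-ferromagnetic Slater determinant pays, which your purely algebraic argument does not directly yield.
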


The conditions in \cref{thm:hf-gs-unique} involves checking all orthogonal
projectors and are difficult to verify computationally.  For the special case
of $2$ and $4$ flat bands, these conditions can be significantly simplified.
\begin{corr}[Two Band Case]
  \label{corr:two-band-gs}
  Suppose that the single particle Hamiltonian $H$ satisfies~\cref{assume:h,assume:symm}, and has two flat bands.
  Suppose further that the Monkhorst-Pack grid $\mc{K}$ has been chosen to satisfy~\cref{assume:grid}.
  If there exists a $\vk \in \mc{K}$ and $\vG$ so that $\Im (A_{\vk}(\vG)) \neq
  0$ then the two ferromagnetic Slater determinants are the unique
  translation-invariant Hartree-Fock ground states of~\cref{eq:h-fbi}.
\end{corr}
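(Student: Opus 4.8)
The plan is to deduce \cref{corr:two-band-gs} directly from \cref{thm:hf-gs-unique}, the point being that the second hypothesis of the main theorem --- the one that is hard to check in general --- becomes vacuous when there are only two flat bands.

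First I would recall from \cref{lem:form-factor-properties} that, after the gauge fixing of \cref{sec:requ-symm-gauge}, the form factor is block diagonal, $\Lambda_{\vk}(\vq+\vG) = \operatorname{diag}(A_{\vk}(\vq+\vG), \overline{A_{\vk}(\vq+\vG)})$, with each block of size $M \times M$. Having exactly two flat bands means $M = 1$, so every $A_{\vk}(\vq+\vG)$ is a scalar in $\CC$ and $\tr(A_{\vk}(\vG)) = A_{\vk}(\vG)$; in particular the corollary's hypothesis ``$\Im(A_{\vk}(\vG)) \neq 0$'' is literally hypothesis (1) of \cref{thm:hf-gs-unique}.

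Next I would check hypothesis (2) of \cref{thm:hf-gs-unique} in this case. There the projectors $\Pi$ act on $\CC^{M}$ (the space on which the block $A_{\vk}(\vG')$ acts), and when $M = 1$ the only orthogonal projectors on $\CC^{1}$ are $\Pi = 0$ and $\Pi = I$, both of which are excluded as trivial. Hence there is no non-trivial $\Pi$ to test and hypothesis (2) is satisfied vacuously --- at every $\vk \in \mc{K}$, and in particular at the $\vk$ supplied by the corollary's assumption.

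With both hypotheses of \cref{thm:hf-gs-unique} in force, its conclusion gives exactly the statement of \cref{corr:two-band-gs}: the two ferromagnetic Slater determinants of \cref{def:ferro-sd} are the unique translation-invariant Hartree--Fock ground states of $\hat{H}_{FBI}$ in \cref{eq:h-fbi}. There is no real obstacle at the level of the corollary itself; all the work has been front-loaded into the proof of the main theorem (\cref{sec:proof-thm-hf-gs}). The only thing to be careful about is the dimension bookkeeping --- that the projectors in hypothesis (2) live on $\CC^{M}$ rather than on $\CC^{2M}$ --- since it is precisely the collapse $M = 1$ that trivializes that condition.
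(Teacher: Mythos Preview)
Your proposal is correct and matches the paper's own proof essentially verbatim: the paper also observes that for two flat bands $A_{\vk}(\vG)$ is a $1\times 1$ scalar, so hypothesis (1) of \cref{thm:hf-gs-unique} is the corollary's assumption and hypothesis (2) is vacuous since there are no non-trivial projectors on $\CC^{1}$. Your extra remark about the projectors living on $\CC^{M}$ rather than $\CC^{2M}$ is a useful clarification that the paper leaves implicit.
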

\begin{proof}
  This follows immediately from~\cref{thm:hf-gs-unique} since in this case $A_{\vk}(\vG)$ is a scalar ($1 \times 1$ matrix) and there are no nontrivial orthogonal projectors for scalars.
\end{proof}
\begin{corr}[Four Band Case]
  \label{corr:four-band-gs}
  Suppose that the single particle Hamiltonian $H$ satisfies~\cref{assume:h,assume:symm}, and has four flat bands.
  Suppose further that the Monkhorst-Pack grid $\mc{K}$ has been chosen to satisfy~\cref{assume:grid}.
  If there exists a $\vk \in \mc{K}$ so that
  \begin{enumerate}
    \item For some $\vG$, $\Im \tr{(A_{\vk}(\vG))} \neq 0$,
    \item For some $\vG', \vG''$, $[A_{\vk}(\vG'), A_{\vk}(\vG'')] \neq 0,$
  \end{enumerate}
  then the two ferromagnetic Slater determinants are the unique translation-invariant Hartree-Fock ground states of~\cref{eq:h-fbi}.
\end{corr}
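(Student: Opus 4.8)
The plan is to deduce \cref{corr:four-band-gs} directly from \cref{thm:hf-gs-unique} by checking that, at the distinguished momentum $\vk$, hypotheses (1) and (2) of the corollary are equivalent to hypotheses (1) and (2) of the main theorem. Hypothesis (1) is literally the same statement in both, so all of the work is in matching hypothesis (2). Since $H$ has four flat bands we have $M = 2$, so by \eqref{eq:form-factor-a} each block $A_{\vk}(\vG)$ is a $2 \times 2$ matrix and the relevant orthogonal projectors $\Pi$ act on $\CC^{M} = \CC^{2}$; the only nontrivial ones are rank one. For a rank-one $\Pi$ one has $(I - \Pi) A_{\vk}(\vG') \Pi = 0$ if and only if $\range \Pi$ is an invariant subspace of $A_{\vk}(\vG')$, so hypothesis (2) of \cref{thm:hf-gs-unique} is precisely the assertion that the family $\mc{A}_{\vk} := \{ A_{\vk}(\vG) : \vG \in \Gamma^{*} \}$ has no common one-dimensional invariant subspace.

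The key structural ingredient is that $\mc{A}_{\vk}$ is closed under Hermitian conjugation. Indeed, setting $\vq = \vzero$ in \eqref{eq:form-factor-dagger} gives $\Lambda_{\vk}(\vG)^{\dagger} = \Lambda_{\vk}(-\vG)$, and comparing with the block form \eqref{eq:form-factor-a} yields $A_{\vk}(-\vG) = A_{\vk}(\vG)^{\dagger}$; since $-\vG$ ranges over $\Gamma^{*}$ as $\vG$ does, $\mc{A}_{\vk}^{\dagger} = \mc{A}_{\vk}$. I would then invoke the following elementary fact: \emph{if $\mc{A}$ is a set of $2 \times 2$ complex matrices with $\mc{A}^{\dagger} = \mc{A}$ that admits a common one-dimensional invariant subspace $W = \CC v$, then every $A \in \mc{A}$ is diagonal in a fixed orthonormal basis $\{ v, v' \}$ with $v' \perp v$, hence the matrices in $\mc{A}$ pairwise commute.} The verification is short: for $A \in \mc{A}$ we have $A v \in W$; since $A^{\dagger} \in \mc{A}$ we also have $A^{\dagger} v \in W$, and then for any $w \perp v$, $\langle v, A w \rangle = \langle A^{\dagger} v, w \rangle = 0$, so $W^{\perp}$ is $A$-invariant as well, making $A$ diagonal in $\{ v, v' \}$.

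Taking the contrapositive, if there exist $\vG', \vG'' \in \Gamma^{*}$ with $[A_{\vk}(\vG'), A_{\vk}(\vG'')] \neq 0$ — which is exactly hypothesis (2) of \cref{corr:four-band-gs} at $\vk$ — then $\mc{A}_{\vk}$ has no common one-dimensional invariant subspace, i.e.\ hypothesis (2) of \cref{thm:hf-gs-unique} holds at $\vk$. Together with hypothesis (1), \cref{thm:hf-gs-unique} applies and gives that $\ket{\Psi_{\pm}}$ are the unique translation-invariant Hartree-Fock ground states of $\hat{H}_{FBI}$ in \eqref{eq:h-fbi}, which is the claim.

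The only point carrying real content is the Hermitian-closure identity $A_{\vk}(-\vG) = A_{\vk}(\vG)^{\dagger}$: without $\dagger$-closure the implication ``common one-dimensional invariant subspace $\Rightarrow$ pairwise commuting'' fails for $2 \times 2$ matrices (for instance $\bigl(\begin{smallmatrix} 1 & 1 \\ 0 & 2 \end{smallmatrix}\bigr)$ and $\bigl(\begin{smallmatrix} 1 & 0 \\ 0 & 3 \end{smallmatrix}\bigr)$ share the eigenvector $e_{1}$ but do not commute), and the equivalence between the corollary and theorem hypotheses would collapse. Everything else is bookkeeping about subspaces of $\CC^{2}$.
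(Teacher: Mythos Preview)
Your proof is correct and uses the same essential ingredient as the paper's: the Hermitian-closure identity $A_{\vk}(-\vG) = A_{\vk}(\vG)^{\dagger}$ from \eqref{eq:form-factor-dagger}, which forces any common one-dimensional invariant subspace of $\mc{A}_{\vk}$ to have an invariant orthocomplement, hence simultaneous diagonalizability. The paper establishes the same implication by a direct three-case check (given $v$, locate a $\vG$ with $\langle v^{\perp} | A_{\vk}(\vG) | v \rangle \neq 0$), whereas you run the contrapositive via the invariant-subspace lemma---a slightly cleaner packaging of the identical argument.
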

\begin{proof}
  The first condition is the same as~\cref{thm:hf-gs-unique} so we only need to show the commutator condition implies the condition on projectors.

  Since for four bands, $A_{\vk}(\vG)$ is a $2 \times 2$ matrix, the only non-trivial projectors are rank one projectors.
  Therefore, we only need to show that for all $\ket{v} \in \CC^{2}$ with $\| v \| = 1$, there exists a $\vG$ so that $\braket{v^{\perp} | A_{\vk}(\vG) | v} \neq 0$ where $\ket{v^{\perp}}$ is a unit vector orthogonal to $\ket{v}$.

  Suppose that $\vG', \vG''$ are so that $[A_{\vk}(\vG'), A_{\vk}(\vG'')] \neq 0$ and pick some $\ket{v} \in \CC^{2}$.
  We have three cases:
  \begin{description}[itemsep=2ex]
    \item[Case 1 $\braket{v^{\perp} | A_{\vk}(\vG') | v} \neq 0$] Take $\vG = \vG'$
    \item[Case 2 $\braket{v^{\perp} | A_{\vk}(\vG') | v} = 0$ but $\braket{v | A_{\vk}(\vG') | v^{\perp}} \neq 0$] Observe that
          \begin{equation}
            \overline{\braket{v | A_{\vk}(\vG') | v^{\perp}}} = \braket{v^{\perp} | A_{\vk}(\vG')^{\dagger} | v} \neq 0
          \end{equation}
          but by~\cref{eq:form-factor-dagger} $A_{\vk}(\vG')^{\dagger} = A_{\vk}(-\vG')$ so we take $\vG = -\vG'.$
    \item[Case 3 $\braket{v^{\perp} | A_{\vk}(\vG') | v} = 0$ and $\braket{v | A_{\vk}(\vG') | v^{\perp}} = 0$] Since $\ket{v}$ and $\ket{v^{\perp}}$ are an orthogonal basis for $\CC^{2}$, the assumptions in this case implies that $\ket{v}$ and $\ket{v^{\perp}}$ are eigenvectors of $A_{\vk}(\vG')$.
          Therefore, we can orthogonally diagonalize $A_{\vk}(\vG')$ as  $A_{\vk}(\vG') = V \Sigma V^{\dagger}$ where $V$ is an orthogonal matrix whose columns are $\ket{v}$, $\ket{v^{\perp}}$ and $\Sigma$ is a diagonal matrix of eigenvalues.

          Since $[A_{\vk}(\vG'), A_{\vk}(\vG'')] \neq 0$, using the eigendecomposition of $A_{\vk}(\vG')$ we have that $V \Sigma V^{\dagger} A_{\vk}(\vG'') - A_{\vk}(\vG'') V \Sigma V^{\dagger} \neq 0$ which implies
          \begin{equation}
            \label{eq:form-factor-diag}
               \Sigma \Big(V^{\dagger} A_{\vk}(\vG'') V \big) - \Big(V^{\dagger} A_{\vk}(\vG'') V \Big) \Sigma \neq 0
          \end{equation}
          If $V^{\dagger} A_{\vk}(\vG'') V$ were a diagonal matrix then \cref{eq:form-factor-diag} would be zero so it must be that either $\braket{v^{\perp} | A_{\vk}(\vG'') | v} \neq 0$ or $\braket{v | A_{\vk}(\vG'') | v^{\perp}} \neq 0$.
          If $\braket{v^{\perp} | A_{\vk}(\vG'') | v} \neq 0$ we can take $\vG = \vG''$.
          If $\braket{v | A_{\vk}(\vG'') | v^{\perp}} \neq 0$, appealing to~\cref{eq:form-factor-dagger}, we can take $\vG = -\vG''$.
  \end{description}
\end{proof}
\section{Proof of {\cref{thm:hf-gs-unique}}}
\label{sec:proof-thm-hf-gs}

As we saw in~\cref{sec:requ-symm-gauge}, with a proper choice of gauge, the sublattice symmetry $\mc{Z}$ implies we can partition the set of flat bands $\mc{N}$ into two sets $n > 0$ and $n < 0$ whose basis functions are supported on the $A$ or $B$ sublattices respectively.
The composite symmetry $\mc{Q}$ further implies these two sets are of equal size and are related by an antiunitary transformation.

To make use of this observation, suppose that we have a 1-RDM $P(\vk)$ for a model with $2 M$ flat bands. If this state is uniformly half-filled, then for each $\vk \in \mc{K}$, $P(\vk)$ can be expressed as a $(2 M) \times (2 M)$ projection matrix with rank $M$.
Hence, we can write $P(\vk) = \Phi(\vk) \Phi(\vk)^{\dagger}$ where $\Phi(\vk)$ is a $(2 M) \times M$ matrix with orthogonal columns.

Since $\Phi(\vk)$ has orthogonal columns, we may apply the cosine-sine (CS) decomposition \cite{VanLoan1985a} to decompose $\Phi(\vk)$ so that it respects the decomposition into $n > 0$ and $n < 0$:
\begin{equation}
  \label{eq:cs-decomp-psi}
  \Phi(\vk)
  =
  \begin{bmatrix}
    U_1(\vk) & \\
             & U_2(\vk)
  \end{bmatrix}
  \begin{bmatrix}
    \tilde{c}(\vk) & -\tilde{s}(\vk) \\
    \tilde{s}(\vk) & \tilde{c}(\vk)
  \end{bmatrix}
  \begin{bmatrix}
    V(\vk)^\dagger \\
    0
  \end{bmatrix}
\end{equation}
where
\begin{equation}
  \begin{split}
    &\tilde{c}(\vk) = \diag(
        \cos{(\theta_1(\vk) / 2)} , \cos{(\theta_2(\vk) / 2)} , \cdots ,\cos{(\theta_{M}(\vk) / 2)}) \\
    &\tilde{s}(\vk) = \diag(
        \sin{(\theta_1(\vk) / 2)}, \sin{(\theta_2(\vk) / 2)} ,
        \cdots ,
        \sin{(\theta_{M}(\vk) / 2)})
  \end{split}
\end{equation}
and $U_1(\vk), U_2(\vk), V(\vk)$ are $M \times M$ unitary matrices.

Using this decomposition for $\Phi(\vk)$, we can express the Hartree-Fock energy (\cref{eq:hartree-energy,eq:fock-energy}) in terms of the quantities $\theta_{i}(\vk), U_{1}(\vk), U_{2}(\vk)$ and the blocks of the form factor $A_{\vk}(\vq + \vG)$ (see \cref{lem:form-factor-properties} for the definition of $A_{\vk}(\vq + \vG)$).

Since the Hartree-Fock energy is written in terms of the matrix $Q(\vk)$, we begin writing $Q(\vk)$ in terms of the CS decomposition \cref{eq:cs-decomp-psi}.
By definition we have
\begin{equation}
  \begin{split}
    P(\vk) & = \Phi(\vk) \Phi(\vk)^\dagger \\
           & =
             \begin{bmatrix}
               U_1(\vk) & \\
                        & U_2(\vk)
             \end{bmatrix}
             \begin{bmatrix}
               \tilde{c}(\vk)^2 & \tilde{c}(\vk)\tilde{s}(\vk) \\
               \tilde{c}(\vk)\tilde{s}(\vk) & \tilde{s}(\vk)^2
             \end{bmatrix}
             \begin{bmatrix}
               U_1(\vk)^\dagger & \\
                                & U_2(\vk)^\dagger
             \end{bmatrix}.
  \end{split}
\end{equation}
Due to our decomposition of $\Phi(\vk)$, $P(\vk)$ is a ferromagnetic Slater determinant if and only if one of the following holds:
\begin{itemize}
  \item For all $i \in \{ 1, \cdots, M\}$ and all $\vk$, $\theta_{i}(\vk) = 0$, or
  \item For all $i \in \{ 1, \cdots, M\}$ and all $\vk$, $\theta_{i}(\vk) = \pi$
\end{itemize}
These two conditions imply that for the ferromagnetic Slater determinant states either $\tilde{c}(\vk)^{2} = I$ or $\tilde{s}(\vk)^{2} = I$.

Since $Q(\vk) = 2 P(\vk) - I$, we have 
\begin{equation}
  Q(\vk)
  =
  \begin{bmatrix}
    U_1(\vk) & \\
             & U_2(\vk)
  \end{bmatrix}
  \begin{bmatrix}
    c(\vk) & s(\vk)  \\
    s(\vk) & -c(\vk)
  \end{bmatrix}
  \begin{bmatrix}
    U_1(\vk)^\dagger & \\
                     & U_2(\vk)^\dagger
  \end{bmatrix}
\end{equation}
where
\begin{equation}
  \begin{split}
    &c(\vk) = \diag(
        \cos{(\theta_1(\vk))}, \cos{(\theta_2(\vk))},  \cdots, \cos{(\theta_{M}(\vk))}) \\
    &s(\vk) = \diag(
        \sin{(\theta_1(\vk))}, \sin{(\theta_2(\vk))}, \cdots ,\sin{(\theta_{M}(\vk))}.
  \end{split}
\end{equation}
Using trigonometric identities, we see that the ferromagnetic Slater states correspond to having $c(\vk) = \pm I$.

To prove uniqueness of the ferromagnetic Slater determinant states, we will show that they are the unique states in $\mc{S}$ which achieve the minimum value for both the Hartree and the Fock energies \emph{simultaneously}.
We will first show that ferromagnetic Slater determinants are minimizers of the Hartree energy in~\cref{sec:minim-hartr-energy}.
Then in~\cref{sec:minim-fock-energy}, we will show that the assumptions of~\cref{thm:hf-gs-unique} imply that the ferromagnetic Slater determinant is the unique minimizer of the Fock energy.

\subsection{Minimizing the Hartree energy}
\label{sec:minim-hartr-energy}
Since $Q(\vk) = Q(\vk)^{\dagger}$,
\begin{equation}
\overline{\sum_{\vk \in \mc{K}} \tr{(\Lambda_{\vk}(\vG) Q(\vk))}} = \sum_{\vk \in \mc{K}} \tr{(\Lambda_{\vk'}(\vG)^{\dagger} Q(\vk'))}.
\end{equation}
If $\vk, \vq \in \mc{K}$, then $\vq - \vk \in \mc{K} + \Gamma^{*}$. Since $\Lambda_{\vk + \vG'}(\vG) = \Lambda_{\vk}(\vG)$ (\cref{eq:form-factor-shift}) we can perform the change of variables $(\vk, \vq) \rightarrow (\vk, \vq - \vk)$ and write the Hartree energy as
\begin{equation}
J(P) = \frac{1}{|\Omega| N_{\vk}} \sum_{\vG} V(\vG) \left| \sum_{\vk \in \mc{K}} \tr{(\Lambda_{\vk}(\vG) Q(\vk))} \right|^{2}.
\end{equation}
Note that since $V(\vG) > 0$, necessarily $J[P] \geq 0$.

Now recall that, due to the symmetry assumptions on the form factor (\cref{lem:form-factor-properties}) we can write
\begin{equation}
  \Lambda_{\vk}(\vq + \vG)
  =
  \begin{bmatrix}
    A_{\vk}(\vq + \vG) & \\
                       & \overline{A_{\vk}(\vq + \vG)}
  \end{bmatrix}.
\end{equation}
Since we will multiply the form factor by $Q(\vk)$, it will be convenient to define
\begin{equation}
  \begin{split}
    B^{(1)}_{\vk}(\vG) & = U_{1}(\vk) A_{\vk}(\vG) U_{1}(\vk)^{\dagger} \\
    B^{(2)}_{\vk}(\vG) & = U_{2}(\vk) \overline{A_{\vk}(\vG)} U_{2}(\vk)^{\dagger}.
  \end{split}
\end{equation}
The trace in the Hartree energy can then be written in terms of $B^{(1)}_{\vk}(\vG)$ and $B^{(2)}_{\vk}(\vG)$ as follows:
\begin{equation}
  \begin{split}
    \tr{(\Lambda_{\vk}(\vG) Q(\vk) )}
    & = \tr{\Big(
      \begin{bmatrix}
        B^{(1)}_{\vk}(\vG) & \\
        & B^{(2)}_{\vk}(\vG)
      \end{bmatrix}
      \begin{bmatrix}
        c(\vk) & s(\vk) \\
        s(\vk) & -c(\vk)
      \end{bmatrix}
      \Big)} \\
    & = \tr{\Big( (B^{(1)}(\vG) - B^{(2)}(\vG) ) c(\vk) \Big)}.
  \end{split}
\end{equation}

For the ferromagnetic Slater determinant state, $c(\vk) = \pm I$ and hence
\begin{equation}
    \tr{(\Lambda_{\vk}(\vG) Q(\vk) )} = \pm \tr{\Big( (B^{(1)}_{\vk}(\vG) - B^{(2)}_{\vk}(\vG) ) \Big)} = \pm 2 i \Im{\tr{( A_{\vk}(\vG) )}}
\end{equation}
\begin{equation}
J(P) = \frac{2}{| \Omega | N_{\vk}}  \sum_{\vG} V(\vG) \left|\sum_{\vk} \Im{\tr{( A_{\vk}(\vG) )}} \right|^{2} = 0
\end{equation}
where the last equality is due to the sum rule~\cref{eq:form-factor-sum-rule}.
Therefore, the ferromagnetic Slater determinants minimize the Hartree energy.

\subsection{Minimizing the Fock energy}
\label{sec:minim-fock-energy}

For these calculations, we will adopt the shorthand $\vk' := \vk + \vq$ and $\vq' = \vq + \vG$ and generalize the definitions of $B^{(1)}_{\vk}(\vG)$ and $B^{(2)}_{\vk}(\vG)$ from the previous section:
\begin{equation}
  \label{eq:rotated-form-factor}
  \begin{split}
    B^{(1)}_{\vk}(\vq + \vG) & = B^{(1)}_{\vk}(\vq') = U_{1}(\vk) A_{\vk}(\vq') U_{1}(\vk')^{\dagger} \\
    B^{(2)}_{\vk}(\vq + \vG) & = B^{(2)}_{\vk}(\vq') = U_{2}(\vk) \overline{A_{\vk}(\vq')} U_{2}(\vk')^{\dagger}.
  \end{split}
\end{equation}
After some lengthy computations (\cref{sec:reform-fock-energy}) it can be shown that
\begin{equation}
  \begin{split}
  \label{eq:fock-energy-proof}
    K[P] = -\frac{1}{8 | \Omega | N_{\vk}} \sum_{\vk} V(\vq') \sum_{ij} \bigg\{ & | [B^{(1)}_{\vk}(\vq') + B^{(2)}_{\vk}(\vq')]_{ij} |^2 \cos{(\theta_i(\vk) - \theta_j(\vk'))} \\
    & + | [B^{(1)}_{\vk}(\vq') - B^{(2)}_{\vk}(\vq')]_{ij} |^2  \cos{(\theta_i(\vk) + \theta_j(\vk'))} \bigg\}.
  \end{split}
\end{equation}
Let's take a closer look at each of terms in~\cref{eq:fock-energy-proof}
\begin{equation}
  \label{eq:fock-energy-term}
  \begin{split}
    -|& [B^{(1)}_{\vk}(\vq') + B^{(2)}_{\vk}(\vq')]_{ij} |^2 \cos{(\theta_i(\vk) - \theta_j(\vk'))} \\
    & - | [B^{(1)}_{\vk}(\vq') - B^{(2)}_{\vk}(\vq')]_{ij} |^2  \cos{(\theta_i(\vk) + \theta_j(\vk'))}
  \end{split}
\end{equation}
From \cref{eq:fock-energy-term}, we can understand the fundamental mechanism which forces the ground state to be a ferromagnetic Slater determinant. Since $\hat{V}(\vq') > 0$, if we assume that
\begin{equation}
  | [B^{(1)}_{\vk}(\vq') + B^{(2)}_{\vk}(\vq')]_{ij} | > 0 \qquad | [B^{(1)}_{\vk}(\vq') - B^{(2)}_{\vk}(\vq')]_{ij} | > 0
\end{equation}
then for a state to minimize the Fock energy it must be that
\begin{equation}
  \begin{split}
    \cos{(\theta_{i}(\vk) - \theta_{j}(\vk') )} = 1 & \Rightarrow \theta_{i}(\vk) = \theta_{j}(\vk') \pmod{2 \pi} \\
    \cos{(\theta_{i}(\vk) + \theta_{j}(\vk') )} = 1 & \Rightarrow \theta_{i}(\vk) = -\theta_{j}(\vk') \pmod{2 \pi}.
  \end{split}
\end{equation}
as if this weren't the case then we could decrease the energy further.

The first constraint forces $\theta_{i}(\vk)$ to be constant (independent of $i$ and $\vk$) and the second constraint forces $\theta_{i}(\vk) \in \{ 0, \pi\}$. These two facts combined show that the ferromagnetic Slater determinants minimize the Fock energy and suggest a strategy for proving these states are the unique Hartree-Fock minimizers.

By definition
\begin{equation}
  [B^{(1)}_{\vk}(\vq') \pm B^{(2)}_{\vk}(\vq')]_{ij} = [U_{1}(\vk) A_{\vk}(\vq') U_{1}(\vk')^{\dagger} \pm  U_{2}(\vk) \overline{A_{\vk}(\vq')} U_{2}(\vk')^{\dagger}]_{ij}
\end{equation}
While generically, it may be true that that the above quantity does not vanish, since $U_{1}(\vk)$ and $U_{2}(\vk)$ are arbitrary unitaries, for any $i,j$ we can always find specific choices of $U_{1}(\vk)$, $U_{2}(\vk)$ so that the above vanishes.
The assumptions of~\cref{thm:hf-gs-unique} guarantee that enough of these terms do not vanish for every choice of $U_{1}(\vk)$, $U_{2}(\vk)$ to force the ferromagnetic Slater determinant to be the unique minimizer of the Fock energy.

To prove the assumptions of~\cref{thm:hf-gs-unique} are sufficient, we proceed in two steps.
First we show that for one special $\vk$-point, $\vk_{*}$, enough of the entries of $B_{\pm,\vk_{*}}(\vG)$ do not vanish to force $\theta_{i}(\vk_{*}) = \theta_{j}(\vk_{*}) = 0$ or $\theta_{i}(\vk_{*}) = \theta_{j}(\vk_{*}) = \pi$ for all $i,j \in \{ 1, \cdots, M\}$.
This implies that the Fock energy maximizing 1-RDM at $\vk_{*}$ is a ferromagnetic Slater determinant.
After showing this, we use the fact that the grid $\mc{K}$ has been chosen sufficiently finely so that \cref{assume:grid} holds.
Once this is the case, for any $\vk \in \mc{K}$, we can find a path connecting $\vk_{*}$ and $\vk$ and we will show that for all momenta along this path, the minimizing 1-RDM must be the same ferromagnetic Slater determinant as at $\vk_{*}$.

\subsubsection{Local Uniqueness of Ground State}
\label{sec:local-uniq-ground}
We focus on the point $\vk_{*}$, take $\vq = \vzero$ and $i = j$.
In this case,~\cref{eq:fock-energy-term} reduces to
\begin{equation}
    \frac{1}{2} | [B^{(1)}_{\vk_{*}}(\vG) + B^{(2)}_{\vk_{*}}(\vG)]_{ii} |^{2} + \frac{1}{2} | [B^{(1)}_{\vk_{*}}(\vG) - B^{(2)}_{\vk_{*}}(\vG)]_{ii} |^2  \cos{(2 \theta_{i}(\vk_{*}) )}.
\end{equation}
Now notice that
\begin{equation}
  \begin{split}
    \Im{\tr{(A_{\vk_{*}}(\vG))}}
    & = \frac{1}{2i} \tr{(A_{\vk_{*}}(\vG) - \overline{A_{\vk_{*}}(\vG)})} \\
    & = \frac{1}{2i} \tr{\Big(U_{1}(\vk_{*}) A_{\vk_{*}}(\vG) U_{1}(\vk_{*})^{\dagger} - U_{2}(\vk_{*}) \overline{A_{\vk_{*}}(\vG)} U_{2}(\vk_{*})^{\dagger}\Big)} \\
    & = \frac{1}{2i} \tr{(B^{(1)}_{\vk_{*}}(\vG) - B^{(2)}_{\vk_{*}}(\vG))}.
  \end{split}
\end{equation}
Since by assumption $\Im{\tr{(A_{\vk_{*}}(\vG))}} \neq 0$, it must be there exists an $m$ so that $[B^{(1)}_{\vk_{*}}(\vG) - B^{(2)}_{\vk_{*}}(\vG)]_{mm} \neq 0$.
Therefore, to be an optimizer $\cos{(2 \theta_{m}(\vk_{*}) )} = 1$ which implies $\theta_{m}(\vk_{*}) \in \{ 0, \pi\}$.
Now we show that $\theta_{j}(\vk_{*}) = \theta_{m}(\vk_{*})$ for all $j$.

For this part of the proof, we fix the unitary $U_{1}(\vk_{*})$ and show that for this choice of $U_{1}(\vk_{*})$ all of the $\theta_{j}(\vk_{*})$ must agree.
For this let $\{ \ket{i} : i \in \{ 1, \cdots, M\} \}$ denote the standard basis for $\CC^{M}$ so that for any matrix $A_{ij} = \braket{i | A |j}$.

We will prove this result by induction. Let $m_{1} = m$ and suppose that we have already shown that the angles $\{ \theta_{m_{i}}(\vk_{*}): i \in \{ 1, \cdots, n\} \}$ are all equal to $\theta_{m_{1}}(\vk_{*}) \in \{ 0, \pi \}$.
For fixed $U_{1}(\vk_{*})$, consider the orthogonal projector
\begin{equation}
  \Pi = U_{1}(\vk_{*})^{\dagger} \left( \sum_{i=1}^{n} \ket{m_{i}}\bra{m_{i}} \right) U_{1}(\vk_{*}).
\end{equation}

For any $\vG$ we have
\begin{equation}
  \begin{split}
    (I &- \Pi) A_{\vk_{*}}(\vG) \Pi \\
    & = \left(I - U_{1}(\vk_{*})^{\dagger} \left( \sum_{i=1}^{n} \ket{m_{i}}\bra{m_{i}} \right) U_{1}(\vk_{*})\right) A_{\vk_{*}}(\vG) \left( U_{1}(\vk_{*})^{\dagger} \left( \sum_{i=1}^{n} \ket{m_{i}}\bra{m_{i}} \right) U_{1}(\vk_{*}) \right) \\[1ex]
    & = U_{1}(\vk_{*})^{\dagger}  \left( I - \sum_{i=1}^{n} \ket{m_{i}}\bra{m_{i}} \right)   U_{1}(\vk_{*}) A_{\vk_{*}}(\vG) U_{1}(\vk_{*})^{\dagger}  \left( \sum_{i=1}^{n} \ket{m_{i}}\bra{m_{i}} \right) U_{1}(\vk_{*}).
  \end{split}
\end{equation}
But since the spectral norm is unitarily invariant, the second assumption of~\cref{thm:hf-gs-unique} implies that there exists a $\vG'$ so that
\begin{equation}
  \label{eq:spec-norm}
  \left\| \left( I - \sum_{i=1}^{n} \ket{m_{i}}\bra{m_{i}} \right)   U_{1}(\vk_{*}) A_{\vk_{*}}(\vG') U_{1}(\vk_{*})^{\dagger}  \left( \sum_{i=1}^{n} \ket{m_{i}}\bra{m_{i}} \right) \right\| > 0.
\end{equation}
Let $v, w$ be the top right/left singular vectors of the above operator.
Since $\sum_{i=1}^{n} \ket{m_{i}}\bra{m_{i}}$ is an orthogonal projection and $\{ \ket{i} : i \in \{1, \cdots, b\}\}$ forms a complete basis we can write
\begin{equation}
    v = \sum_{i=1}^{n} \alpha_{i} \ket{m_{i}}, \qquad
    w = \sum_{m \not\in \{ m_{i} : i \in \{ 1, \cdots, n \} \}} \beta_{m} \ket{m}
\end{equation}
for some constants $\alpha_{i}, \beta_{i} \in \CC$. By definition of $v,w$ we know that
\begin{equation}
  \begin{split}
    0 & < \braket{w | U_{1}(\vk_{*}) A_{\vk_{*}}(\vG') U_{1}(\vk_{*})^{\dagger}| v} \\
    & = \sum_{i=1}^{n} \sum_{m \not\in \{ m_{i} : i \in \{ 1, \cdots, n \} \}} \alpha_{i} \overline{\beta_{m}} \braket{m | U_{1}(\vk_{*}) A_{\vk_{*}}(\vG') U_{1}(\vk_{*})^{\dagger} | m_{i } }.
  \end{split}
\end{equation}
Hence, there must exist an $m' \not\in \{ m_{i} : i \in \{ 1, \cdots, n \} \}$ and an $m_{i}$ so that
\begin{equation}
  \label{eq:non-vanishing}
  [U_{1}(\vk_{*}) A_{\vk_{*}}(\vG') U_{1}(\vk_{*})^{\dagger}]_{m',m_{i}} \neq 0.
\end{equation}
Define $m_{n+1} := m'$, since
\begin{equation}
  \Big(B^{(1)}_{\vk_{*}}(\vG') + B^{(2)}_{\vk_{*}}(\vG')\Big) + \Big(B^{(1)}_{\vk_{*}}(\vG') - B^{(2)}_{\vk_{*}}(\vG')\Big) = 2 U_{1}(\vk_{*}) A_{\vk_{*}}(\vG') U_{1}(\vk_{*})^{\dagger}.
\end{equation}
\Cref{eq:non-vanishing} implies that either $[B^{(1)}_{\vk_{*}}(\vG') + B^{(2)}_{\vk_{*}}(\vG')]_{m_{n+1},m_{i}} \neq 0$ or $[B^{(1)}_{\vk_{*}}(\vG') - B^{(2)}_{\vk_{*}}(\vG')]_{m_{n+1},m_{i}} \neq 0$.
For simplicity of discussion suppose $[B^{(1)}_{\vk_{*}}(\vG') + B^{(2)}_{\vk_{*}}(\vG')]_{m_{n+1},m_{i}} \neq 0$, the other case follows similarly.
Now recall the terms in the Fock energy:
\begin{equation}
  \begin{split}
    | [B^{(1)}_{\vk_{*}}(\vG') & + B^{(2)}_{\vk_{*}}(\vG')]_{m_{n+1},m_{i}} |^2 \cos{(\theta_{m_{n+1}}(\vk_{*}) - \theta_{m_{i}}(\vk_{*}))} \\
    & + | [B^{(1)}_{\vk_{*}}(\vG') - B^{(2)}_{\vk_{*}}(\vG')]_{m_{n+1},m_{i}} |^2  \cos{(\theta_{m_{n+1}}(\vk_{*}) + \theta_{m_{i}}(\vk_{*}))}.
  \end{split}
\end{equation}
Since $\theta_{m_{i}}(\vk_{*}) = \theta_{m_{1}}(\vk_{*})$, $\theta_{m_{1}} \in \{ 0, \pi \}$, and $[B_{-,\vk_{*}}(\vG')]_{m_{n+1},m_{i}} \neq 0$, to be a minimizer it must be that $\theta_{m_{n+1}}(\vk_{*}) = \theta_{m_{1}}(\vk_{*})$.
Hence by induction, $\theta_{i}(\vk_{*}) \in \{ 0, \pi \}$ and $\theta_{i}(\vk_{*}) = \theta_{j}(\vk_{*})$ for all $i,j$.

\subsubsection{Global Uniqueness of Ground State}
\label{sec:glob-uniq-ground}
For simplicity, let us assume that $\theta_{j}(\vk_{*}) = 0$ for all $j$. The case $\theta_{j}(\vk_{*}) = \pi$ follows similarly.
In this case, for any $\vq$ we have that
\begin{equation}
\begin{split}
  \frac{1}{2} & | [B^{(1)}_{\vk_{*}}(\vq + \vG) + B^{(2)}_{\vk_{*}}(\vq + \vG)]_{ij} |^2 \cos{(\theta_{j}(\vk_{*} + \vq))} \\
  & + \frac{1}{2} | [B^{(1)}_{\vk_{*}}(\vq + \vG) - B^{(2)}_{\vk_{*}}(\vq + \vG)]_{ij} |^2  \cos{(\theta_j(\vk_{*} + \vq))}.
\end{split}
\end{equation}

Therefore, to conclude $\theta_j(\vk_{*} + \vq) = 0$, it is enough to show that for each $r \in \{1, \cdots, M\}$
\begin{equation}
  \label{eq:b-pm-maxrow}
  \text{one of} \quad
  \begin{cases}
    \max_{m} |[B^{(1)}_{\vk_{*}}(\vq + \vG) + B^{(2)}_{\vk_{*}}(\vq + \vG)]_{rm}| > 0  &  \\
    \max_{m} |[B^{(1)}_{\vk_{*}}(\vq + \vG) - B^{(2)}_{\vk_{*}}(\vq + \vG)]_{rm}| > 0  &  \\
  \end{cases}
  \quad
  \text{holds}.
\end{equation}
That is, for each row, we can find a non-zero entry in one of $B^{(1)}_{\vk_{*}}(\vq + \vG) \pm B^{(2)}_{\vk_{*}}(\vq + \vG)$

Now observe that
\begin{equation}
  B_{+,\vk_{*}}(\vq + \vG) + B_{-,\vk_{*}}(\vq + \vG) = U_{1}(\vk_{*}) A_{\vk_{*}}(\vq + \vG) U_{1}(\vk_{*} + \vq)^{\dagger}
\end{equation}
and recall that a full rank matrix must have a non-zero entry in every row.
Since unitary transformations cannot change the rank of a matrix, if we can show that $A_{\vk_{*}}(\vq + \vG)$ is full rank for some $\vG$, then~\cref{eq:b-pm-maxrow} must be true and hence the fact that the minimizing 1-RDM at $\vk_{*}$ is an FSD state implies that the minimizing 1-RDM at $\vk_{*} + \vq$ is the same FSD state.

The following key fact, along with~\cref{assume:grid} and the above calculation, allows us to propagate the ferromagnetic Slater determinant from $\vk_{*}$ to the entire Brillouin zone:
\begin{lemm}
  \label{lem:full-rank}
  Let $\vk, \vk' \in \mc{K}$ and let $\Pi(\vk)$ denote the flat-band projection at $\vk$.
  If $\| \Pi(\vk) - \Pi(\vk') \| < 1$ then there exists a $\vG$ so that $A_{\vk}((\vk' - \vk) + \vG)$ is full rank.
\end{lemm}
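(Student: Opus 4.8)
The plan is to prove the slightly sharper statement that the single choice $\vG=\vzero$ already works: the form-factor block $A_\vk(\vk'-\vk)$ is full rank as soon as $\|\Pi(\vk)-\Pi(\vk')\|<1$, so that no genuine search over $\Gamma^{*}$ is required. The first step is to unwind \cref{eq:form-factor-def} at the argument $\vk'-\vk\in\RR^{2}$. Since $\vk+(\vk'-\vk)=\vk'\in\mc{K}\subseteq\Omega^{*}$, no shift of the periodic Bloch functions enters, and Parseval on $\Omega$ turns the sum over $\vG'$ in \cref{eq:form-factor-def} into an $L^{2}(\Omega)$-inner product, $[\Lambda_\vk(\vk'-\vk)]_{mn}=\langle u_{m\vk},u_{n\vk'}\rangle_{L^{2}(\Omega;\CC^{2N})}$ (the sums over $\sigma$ and $j$ being absorbed into the inner product). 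By \cref{lem:form-factor-properties} the off-diagonal ($mn<0$) blocks vanish, so
\[
  A_\vk(\vk'-\vk)=\big[\,\langle u_{m\vk},u_{n\vk'}\rangle_{L^{2}}\,\big]_{m,n>0}.
\]
Letting $\Pi_{+}(\vk)$ denote the orthogonal projection onto $\operatorname{span}\{u_{n\vk}:n>0\}$ and $\Pi_{-}(\vk)$ the analogous projection for $n<0$ (so $\Pi(\vk)=\Pi_{+}(\vk)+\Pi_{-}(\vk)$ and $\dim\range\Pi_{\pm}(\vk)=M$), this is precisely the matrix, in the orthonormal bases $\{u_{n\vk'}\}_{n>0}$ and $\{u_{m\vk}\}_{m>0}$, of the operator $\Pi_{+}(\vk)\big|_{\range\Pi_{+}(\vk')}:\range\Pi_{+}(\vk')\to\range\Pi_{+}(\vk)$.

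The second step transfers the distance hypothesis from $\Pi$ to $\Pi_{+}$ and then concludes. By the gauge fixing \cref{eq:sublattice-symm} the $n>0$ flat bands are supported on the $A$ sublattice and the $n<0$ flat bands on the $B$ sublattice; hence $\range\Pi_{+}(\vk)$ and $\range\Pi_{-}(\vk)$ lie in the two mutually orthogonal sublattice subspaces, $\Pi(\vk)-\Pi(\vk')$ is block-diagonal with respect to that orthogonal splitting, and therefore $\|\Pi_{+}(\vk)-\Pi_{+}(\vk')\|\le\|\Pi(\vk)-\Pi(\vk')\|<1$. Now the standard ``no-gap'' property of orthogonal projections applies: if $0\neq\psi\in\ker\Pi_{+}(\vk)\cap\range\Pi_{+}(\vk')$ then $\|(\Pi_{+}(\vk)-\Pi_{+}(\vk'))\psi\|=\|{-\psi}\|=\|\psi\|$, contradicting $\|\Pi_{+}(\vk)-\Pi_{+}(\vk')\|<1$. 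Hence $\Pi_{+}(\vk)\big|_{\range\Pi_{+}(\vk')}$ is injective, and being an injective linear map between two spaces of equal finite dimension $M$ it is bijective. Consequently $A_\vk(\vk'-\vk)$ has rank $M$, i.e. is full rank, which proves the lemma with $\vG=\vzero$.

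The main point requiring care — and, such as it is, the only obstacle — is the bookkeeping in the Parseval step, since the whole argument hinges on the clean identification of $A_\vk(\vk'-\vk)$ with the subspace-overlap (Gram) matrix $[\langle u_{m\vk},u_{n\vk'}\rangle]_{m,n>0}$: one must keep the $1/|\Omega|$ and the $\vG'$-sum straight, invoke \cref{lem:form-factor-properties} correctly for the block structure, and note that the existential ``$\vG$'' of the statement is witnessed concretely by $\vzero$. I would also remark that the argument goes through verbatim at band-crossing momenta $\vk\in\mc{K}_{\mathrm{crossing}}$, where the flat-band basis — and with it the projections $\Pi_{\pm}(\vk)$ — is defined by continuity but still consists of $M$ orthonormal $A$-supported and $M$ orthonormal $B$-supported functions.
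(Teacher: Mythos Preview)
Your proof is correct and follows essentially the same route as the paper's: both take $\vG=\vzero$, identify the form factor with the Gram/overlap matrix between the flat-band bases at $\vk$ and $\vk'$, and then use the hypothesis $\|\Pi(\vk)-\Pi(\vk')\|<1$ to force full rank. The only cosmetic difference is that the paper works with the full $2M\times 2M$ block $\Lambda_\vk(\vq)=\Phi(\vk)^\dagger\Phi(\vk')$ and a direct norm estimate $\|\Phi(\vk')^\dagger\Phi(\vk)v\|^2\geq 1-\|\Pi(\vk)-\Pi(\vk')\|$, whereas you first pass to the sublattice block $\Pi_{+}$ and then invoke the no-gap property; both are equivalent one-line applications of the same projection inequality.
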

\begin{proof}
  Proven in \cref{sec:proof-full-rank}.
\end{proof}
Assuming~\cref{lem:full-rank} is true, the argument is as follows.
By~\cref{assume:grid}, for any $\vk \in \Omega^{*}$ we may construct a path $\{ \vk_{i} \}_{i=1}^{L}$ connecting $\vk_{*}$ and $\vk$ so that $\| \Pi(\vk_{i}) - \Pi(\vk_{i+1}) \| < 1$.
By the above calculation, since $\vk_{1} = \vk_{*}$ and the projector condition holds between $\vk_{1}$ and $\vk_{2}$, it must be that the minimizing 1-RDM at $\vk_{2}$ is the same FSD state as at $\vk_{*}$.
Since the minimizing 1-RDM at $\vk_{2}$ is a FSD state, we may repeat the same argument to conclude the 1-RDM at $\vk_{3}$ \textit{also} the same FSD state.
Continuing down the path, we conclude that the minimizing 1-RDM at $\vk_{i}$ must be the same FSD state as at $\vk_{*}$ for all $i \in \{1, \cdots, L\}$.
Since the choice of $\vk$ was arbitrary, we conclude that minimizing 1-RDM must agree with $\vk_{*}$ throughout the whole Brillouin zone completing the proof.

\section{Applications of \cref{thm:hf-gs-unique} to TBG and eTTG}
\label{sec:appl-main-result}
We begin by translating the conditions of~\cref{thm:hf-gs-unique} from momentum space to real space in~\cref{sec:cond-main-real}.
We then prove~\cref{result:unique} by verifying these real space conditions hold for TBG-2 (\cref{sec:application-tbg-2}), TBG-4  (\cref{sec:application-tbg-4}), and eTTG-4 (\cref{sec:application-ettg-4}).

\subsection{Conditions of~\cref{thm:hf-gs-unique} in Real Space}
\label{sec:cond-main-real}
While the proof of~\cref{thm:hf-gs-unique} is stated in momentum space, the conditions also have natural analogs in real space.
We begin by recalling the definition of $A_{\vk}(\vq + \vG)$
\begin{equation}
  [A_{\vk}(\vq + \vG)]_{mn} = \frac{1}{|\Omega|} \sum_{\vG'} \sum_{\sigma,j} \overline{\hat{u}_{m\vk}(\vG'; \sigma, j)} \hat{u}_{n(\vk + \vq)}(\vG + \vG'; \sigma, j).
\end{equation}
We can substitute in the definition of the Fourier transform to conclude
\begin{equation}
  \begin{split}
    [A_{\vk}&(\vq + \vG)]_{mn} \\
    & = \frac{1}{|\Omega|} \int_{\Omega} \int_{\Omega} \sum_{\vG'} \sum_{\sigma,j} e^{i \vG' \cdot \vr} e^{-i(\vG + \vG') \cdot \vr'} \overline{u_{m\vk}(\vr; \sigma, j)} u_{n(\vk + \vq)}(\vr'; \sigma, j) \ud\vr \ud\vr' \\
    & = \frac{1}{|\Omega|} \int_{\Omega} \int_{\Omega} \sum_{\vG'} \sum_{\sigma,j} e^{i \vG' \cdot ( \vr - \vr') } e^{-i \vG \cdot \vr'} \overline{u_{m\vk}(\vr; \sigma, j)} u_{n(\vk + \vq)}(\vr'; \sigma, j) \ud\vr \ud\vr' \\
    & = \int_{\Omega} e^{-i \vG \cdot \vr}  \sum_{\sigma,j} \overline{u_{m\vk}(\vr; \sigma, j)} u_{n(\vk + \vq)}(\vr; \sigma, j) \ud\vr.
  \end{split}
\end{equation}
We now define the pair product $\rho_{\vk,\vk+\vq}(\vr)$
\begin{equation}
    [\rho_{\vk,\vk+\vq}(\vr)]_{mn} := \sum_{\sigma,j}  \overline{u_{m\vk}(\vr; \sigma, j)} u_{n(\vk + \vq)}(\vr; \sigma, j).
\end{equation}
Since $\rho_{\vk,\vk+\vq}$ is periodic with respect to $\Gamma$ (since $u_{n\vk}$ are periodic) we can view the form factor $A_{\vk}(\vq + \vG)$ as the Fourier series coefficients of the pair product.
Since the Fourier transform is isometric up to scaling, we can equivalently state conditions for~\cref{thm:hf-gs-unique} in real space.
\begin{prop}[\Cref{thm:hf-gs-unique} in Real Space]
  \label{prop:hf-gs-unique-real}
  Suppose that the single particle Hamiltonian $H$ satisfies~\cref{assume:h,assume:symm}.
  Suppose further that the Monkhorst-Pack grid $\mc{K}$ has been chosen to satisfy~\cref{assume:grid}.
  If there exists a $\vk \in \mc{K}$ so that
  \begin{enumerate}
    \item For some $\vr$, $\tr{(\rho_{\vk,\vk}(\vr))} \neq \overline{\tr{(\rho_{\vk,\vk}(-\vr))}}$
    \item For all non-trivial projections $\Pi$, there exists a $\vr'$ so that
          \begin{equation}
           \| (I - \Pi) \rho_{\vk,\vk}(\vr') \Pi \| > 0
          \end{equation}
  \end{enumerate}
  then the two ferromagnetic Slater determinants are the unique Hartree-Fock ground states of~\cref{eq:h-fbi}.
\end{prop}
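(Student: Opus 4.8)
The plan is to recognize that \cref{prop:hf-gs-unique-real} is not a new result but a restatement of \cref{thm:hf-gs-unique} under the correspondence between a periodic function and its Fourier coefficients. The key observation --- already essentially contained in the computation displayed just before the proposition --- is that, taking $\vq = \vzero$, the block form factor $A_{\vk}(\vG)$ is exactly the $\vG$-th Fourier coefficient of the $\Gamma$-periodic matrix-valued function $\vr \mapsto \rho_{\vk,\vk}(\vr)$, i.e. $A_{\vk}(\vG) = \int_{\Omega} e^{-i \vG \cdot \vr}\, \rho_{\vk,\vk}(\vr)\, \ud\vr$. (Only $\vq = \vzero$ enters the hypotheses of \cref{thm:hf-gs-unique}, so no general-$\vq$ statement is needed.) Because the flat-band Bloch functions $u_{n\vk}$ are smooth under the analyticity hypothesis of \cref{assume:h}, $\rho_{\vk,\vk}$ is a smooth $\Gamma$-periodic function, and the correspondence between periodic functions and their Fourier coefficients is a bijection; in particular a periodic matrix-valued function vanishes identically if and only if all of its Fourier coefficients vanish. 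It therefore suffices to show that, for each fixed $\vk \in \mc{K}$, the two numbered conditions of \cref{prop:hf-gs-unique-real} are equivalent, respectively, to the two numbered conditions of \cref{thm:hf-gs-unique}; the conclusion is then immediate from \cref{thm:hf-gs-unique}.

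For condition (1), I would apply the Fourier-coefficient map to the scalar periodic function $f(\vr) := \tr(\rho_{\vk,\vk}(\vr))$, whose $\vG$-th coefficient is $\tr(A_{\vk}(\vG))$. A change of variables $\vr \mapsto -\vr$, using $\Gamma$-periodicity of the integrand, shows that the function $\vr \mapsto \overline{f(-\vr)}$ has $\vG$-th coefficient $\overline{\tr(A_{\vk}(\vG))}$. Hence $\Im \tr(A_{\vk}(\vG)) = 0$ for every $\vG$ if and only if $f$ and $\overline{f(-\,\cdot\,)}$ have identical Fourier coefficients, i.e. (by injectivity) if and only if $\tr(\rho_{\vk,\vk}(\vr)) = \overline{\tr(\rho_{\vk,\vk}(-\vr))}$ for every $\vr$. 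Negating both sides gives the equivalence of condition (1).

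For condition (2), I would fix a non-trivial orthogonal projector $\Pi$ and consider the periodic matrix-valued function $g_{\Pi}(\vr) := (I - \Pi)\, \rho_{\vk,\vk}(\vr)\, \Pi$. By linearity its $\vG$-th Fourier coefficient is $(I - \Pi)\, A_{\vk}(\vG)\, \Pi$. By bijectivity of the Fourier correspondence, $g_{\Pi}$ vanishes identically --- equivalently $\|(I - \Pi) \rho_{\vk,\vk}(\vr) \Pi\| = 0$ for every $\vr$ --- if and only if $(I - \Pi) A_{\vk}(\vG) \Pi = 0$ for every $\vG$, i.e. $\|(I - \Pi) A_{\vk}(\vG) \Pi\| = 0$ for every $\vG$. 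Taking contrapositives and then quantifying over all non-trivial $\Pi$ matches condition (2) of \cref{prop:hf-gs-unique-real} with condition (2) of \cref{thm:hf-gs-unique}. Combined with the previous paragraph, the hypotheses of the two statements coincide, and \cref{thm:hf-gs-unique} delivers the conclusion.

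I do not expect a genuine obstacle here; the step requiring the most care is bookkeeping rather than difficulty --- keeping the conjugation-and-reflection straight in condition (1), and verifying that $\rho_{\vk,\vk}$ is regular enough for uniqueness of Fourier coefficients to apply, which it is since it is a finite sum of products of smooth $\Gamma$-periodic functions. No new estimates or structural input beyond \cref{thm:hf-gs-unique} and elementary Fourier analysis are required.
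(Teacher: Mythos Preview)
Your proposal is correct and matches the paper's own approach: the paper also observes that $A_{\vk}(\vG)$ is the Fourier coefficient of $\rho_{\vk,\vk}$, uses a short lemma (\cref{lemm:symmetry}) giving exactly your argument for condition~(1), and establishes condition~(2) via the same Fourier injectivity applied to $(I-\Pi)\rho_{\vk,\vk}(\vr)\Pi$. The only cosmetic difference is that the paper invokes continuity rather than smoothness for the uniqueness of Fourier coefficients, which of course suffices.
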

We also have the following simple corollaries:
\begin{corr}[Two Bands Case in Real Space]
  \label{corr:two-band-gs-real}
  Suppose that the single particle Hamiltonian $H$ satisfies~\cref{assume:h,assume:symm} and has two flat bands.
  Suppose further that the Monkhorst-Pack grid $\mc{K}$ has been chosen to satisfy~\cref{assume:grid}.
  If there exists a $\vk \in \mc{K}$ and $\vr \in \Omega$ so that $\rho_{\vk,\vk}(\vr) \neq \overline{\rho_{\vk,\vk}(-\vr)}$ then the two ferromagnetic Slater determinants are the unique Hartree-Fock ground states of~\cref{eq:h-fbi}.
\end{corr}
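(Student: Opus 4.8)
The plan is to deduce this directly from~\cref{prop:hf-gs-unique-real} by specializing to the case $M=1$. When $H$ has exactly two flat bands, the index set $\mc{N}$ consists of a single positive band and a single negative band, so the block $A_{\vk}(\vq+\vG)$ from~\cref{lem:form-factor-properties}, and correspondingly the pair-product matrix $\rho_{\vk,\vk+\vq}(\vr)$ (whose Fourier coefficients are the $A_{\vk}(\vq+\vG)$), is indexed only by $m,n>0$ and hence is a $1\times 1$ matrix, i.e.\ a scalar-valued function of $\vr$. In particular $\tr(\rho_{\vk,\vk}(\vr))$ is just $\rho_{\vk,\vk}(\vr)$ itself under the harmless identification of a $1\times1$ matrix with its entry.

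First I would observe that the second hypothesis of~\cref{prop:hf-gs-unique-real} is automatically satisfied: on $\CC^{M}=\CC^{1}$ the only orthogonal projectors are $0$ and the identity, so there are no non-trivial orthogonal projectors $\Pi$ and the condition holds vacuously. Next I would note that the first hypothesis of~\cref{prop:hf-gs-unique-real}, that for some $\vk\in\mc{K}$ and some $\vr$ one has $\tr(\rho_{\vk,\vk}(\vr))\neq\overline{\tr(\rho_{\vk,\vk}(-\vr))}$, becomes, after the scalar identification above, exactly the stated hypothesis $\rho_{\vk,\vk}(\vr)\neq\overline{\rho_{\vk,\vk}(-\vr)}$. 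With both hypotheses of~\cref{prop:hf-gs-unique-real} in force, its conclusion yields that the two ferromagnetic Slater determinants are the unique Hartree-Fock ground states of~\cref{eq:h-fbi}, which is the claim.

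There is essentially no obstacle here: the corollary is a pure specialization of~\cref{prop:hf-gs-unique-real}, exactly parallel to~\cref{corr:two-band-gs} in momentum space, and the only point requiring (minimal) care is the notational identification of scalars with $1\times1$ matrices and of traces thereof with the single matrix entry.
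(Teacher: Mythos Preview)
Your proposal is correct and follows essentially the same approach as the paper: it is the real-space analogue of the paper's proof of \cref{corr:two-band-gs}, specializing \cref{prop:hf-gs-unique-real} to $M=1$ where the projector condition is vacuous and the trace condition reduces to the scalar hypothesis. The paper does not write out a separate proof of this corollary but refers to \cref{sec:real-space-proof}, where the equivalence of the real-space and momentum-space conditions is established; once \cref{prop:hf-gs-unique-real} is in hand, your one-line specialization is exactly what is intended.
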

\begin{corr}[Four Band Case in Real Space]
  \label{corr:four-band-gs-real}
  Suppose that the single particle Hamiltonian $H$ satisfies~\cref{assume:h,assume:symm} and has four flat bands.
  Suppose further that the Monkhorst-Pack grid $\mc{K}$ has been chosen to satisfy~\cref{assume:grid}.
  If there exists a $\vk \in \mc{K}$ so that
  \begin{enumerate}
    \item For some $\vr$, $\tr{(\rho_{\vk,\vk}(\vr))} \neq \overline{\tr{(\rho_{\vk,\vk}(-\vr))}}$,
    \item For some $\vr', \vr''$
          \begin{equation}
            \det{
              \begin{bmatrix}
                \| u_{1\vk}(\vr)\|^2 - \| u_{2\vk}(\vr)\|^2  & \braket{u_{1\vk}(\vr), u_{2\vk}(\vr)} \\
                \| u_{1\vk}(\vr')\|^2 - \| u_{2\vk}(\vr')\|^2 & \braket{u_{1\vk}(\vr'), u_{2\vk}(\vr')}
              \end{bmatrix}}
          \end{equation}
  \end{enumerate}
  then the two ferromagnetic Slater determinants are the unique Hartree-Fock ground states of~\cref{eq:h-fbi}.
\end{corr}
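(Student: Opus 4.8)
The plan is to derive \cref{corr:four-band-gs-real} from \cref{prop:hf-gs-unique-real} (the general real-space criterion), in exact parallel to how \cref{corr:four-band-gs} is derived from \cref{thm:hf-gs-unique}. Condition 1 of \cref{corr:four-band-gs-real} is verbatim condition 1 of \cref{prop:hf-gs-unique-real}, so nothing is needed there; the content is to show that condition 2 of \cref{corr:four-band-gs-real} implies condition 2 of \cref{prop:hf-gs-unique-real}, namely that for every nontrivial orthogonal projector $\Pi$ there is an $\vr'$ with $\|(I-\Pi)\rho_{\vk,\vk}(\vr')\Pi\|>0$.

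First I would note that for $M=2$ the pair product
\[
  \rho_{\vk,\vk}(\vr)=
  \begin{bmatrix}
    \|u_{1\vk}(\vr)\|^2 & \braket{u_{1\vk}(\vr),u_{2\vk}(\vr)}\\
    \overline{\braket{u_{1\vk}(\vr),u_{2\vk}(\vr)}} & \|u_{2\vk}(\vr)\|^2
  \end{bmatrix}
\]
is a $2\times 2$ Hermitian matrix (the norm and inner product being over the $(\sigma,j)$ indices), and the only nontrivial projectors are rank-one, $\Pi=vv^\dagger$ for a unit $v\in\CC^2$. If condition 2 of \cref{prop:hf-gs-unique-real} failed there would be a unit $v$ with $\braket{v^{\perp}|\rho_{\vk,\vk}(\vr')|v}=0$ for all $\vr'$; by Hermiticity also $\braket{v|\rho_{\vk,\vk}(\vr')|v^{\perp}}=0$, so every $\rho_{\vk,\vk}(\vr')$ is diagonal in the orthonormal basis $\{v,v^{\perp}\}$ and hence all of them commute, in particular $[\rho_{\vk,\vk}(\vr),\rho_{\vk,\vk}(\vr')]=0$ for every pair $(\vr,\vr')$. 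So it suffices to show the determinant hypothesis of \cref{corr:four-band-gs-real} is incompatible with all the $\rho_{\vk,\vk}(\vr)$ commuting.

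This is a short $2\times 2$ computation. Decompose $\rho_{\vk,\vk}(\vr)=a_0(\vr)I+a_1(\vr)\sigma_x+a_2(\vr)\sigma_y+a_3(\vr)\sigma_z$ in the Pauli basis; then the first column of the matrix in condition 2 equals $\|u_{1\vk}(\vr)\|^2-\|u_{2\vk}(\vr)\|^2=2a_3(\vr)$ and the second column equals $\braket{u_{1\vk}(\vr),u_{2\vk}(\vr)}=a_1(\vr)-ia_2(\vr)$. Two Hermitian $2\times 2$ matrices commute iff the Pauli vectors of their traceless parts are parallel; so if all $\rho_{\vk,\vk}(\vr)$ commute, then for any $\vr,\vr'$ the real triples $(a_1,a_2,a_3)(\vr)$ and $(a_1,a_2,a_3)(\vr')$ are parallel (including the case where one vanishes), which makes the two rows $\bigl(2a_3(\vr),a_1(\vr)-ia_2(\vr)\bigr)$ and $\bigl(2a_3(\vr'),a_1(\vr')-ia_2(\vr')\bigr)$ proportional and hence the $2\times 2$ determinant zero. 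Contrapositively, a single pair $(\vr,\vr')$ with nonzero determinant rules out common commutativity, establishing condition 2 of \cref{prop:hf-gs-unique-real}, so \cref{prop:hf-gs-unique-real} yields the corollary. (One may equally finish through \cref{corr:four-band-gs}: a nonzero determinant at $(\vr,\vr')$ gives $[\rho_{\vk,\vk}(\vr),\rho_{\vk,\vk}(\vr')]\neq 0$, and by uniqueness of the Fourier coefficients of $[\rho_{\vk,\vk}(\cdot),\rho_{\vk,\vk}(\cdot)]$ on $\Omega\times\Omega$ this forces $[A_{\vk}(\vG'),A_{\vk}(\vG'')]\neq 0$ for some $\vG',\vG''$, which is condition 2 of \cref{corr:four-band-gs}.)

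The main obstacle is not conceptual; the only thing requiring attention is the bookkeeping in the linear-algebra step, namely tracking that the single complex off-diagonal entry $\braket{u_{1\vk}(\vr),u_{2\vk}(\vr)}$ packages two real Pauli components $(a_1,-a_2)$, so that the scalar complex $2\times 2$ determinant genuinely detects collinearity of the full three-dimensional Pauli vectors. It is worth recording that, exactly as for \cref{corr:four-band-gs}, this real-space condition 2 is only sufficient: vanishing of all such $2\times 2$ determinants does not force the $A_{\vk}(\vG)$ to commute. This is harmless since the corollary asserts only sufficiency.
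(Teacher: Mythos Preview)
Your proposal is correct and follows essentially the same route as the paper: reduce to \cref{prop:hf-gs-unique-real}, note that in the $2\times 2$ Hermitian case failure of the projector condition is equivalent to simultaneous diagonalizability (hence mutual commutativity) of all $\rho_{\vk,\vk}(\vr)$, and then show the determinant hypothesis rules this out. The only difference is in the last computational step: the paper computes the $(1,2)$ entry of $[\rho_{\vk,\vk}(\vr),\rho_{\vk,\vk}(\vr')]$ directly and identifies it \emph{exactly} with the determinant in the statement, whereas you argue via the Pauli decomposition that commutativity forces the Pauli vectors to be collinear and hence the rows of the $2\times 2$ matrix to be proportional. Both are valid and equally short; the paper's version has the slight advantage of yielding the sharper identity (the determinant \emph{is} the off-diagonal commutator entry), while your Pauli argument is perhaps more conceptual.
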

For the proofs of \cref{prop:hf-gs-unique-real,corr:two-band-gs-real,corr:four-band-gs-real} we refer the reader to~\cref{sec:real-space-proof}.

\subsection{Application of Main Theorem to TBG-2}
\label{sec:application-tbg-2}
For our next two propositions, we will use the Jacobi $\theta$ function
\begin{equation}
\label{eq:theta}
\begin{gathered}
 \theta_{1} ( \zeta | \omega )
 := - \sum_{ n \in \mathbb Z } \exp ( \pi i (n+\tfrac12) ^2 \omega+ 2 \pi i ( n + \tfrac12 ) (\zeta + \tfrac
12 )  ) ,
\end{gathered}
\end{equation}
which satisfies
\[ \theta_{1} ( \zeta + m | \omega  ) = (-1)^m \theta_{1}  ( \zeta| \omega ) , \ \ \theta_{1} ( \zeta + n \omega| \omega) = (-1)^n e^{ - \pi i n^2 \omega -
2 \pi i \zeta  n } \theta_{1}  ( \zeta |\omega ) .
\]
Furthermore, $ \theta_1(\bullet\vert \omega) $ has simple zeros at $ \Lambda $ (and no other zeros) -- see \cite{MumfordMusili2007}. In the sequel we shall just write $\theta(\zeta):=\theta_1(\zeta\vert \omega)$ to simplify the notation.

We also use the $\wp$ function $\wp(z):=\wp ( z ; \tfrac 43 \pi i \omega,
\tfrac 43 \pi i \omega^2 )$. Here $ \wp ( z; \omega_1 , \omega_2 ) $ is the Weierstrass
$\wp$-function -- see \cite[\S I.6]{MumfordMusili2007}. It is periodic with respect to $ \ZZ \omega_1 + \ZZ \omega_2 $ and its derivative has a pole of order
$ 3 $ at $ z = 0 $. It has the property that
\begin{equation}
\label{eq:property}
\wp(\omega z) = \omega \wp(z).
\end{equation}

We now turn to verifying the conditions of having a unique many-body ground state.
\begin{prop}[Simple magic angle]
\label{prop:simple}
Let $\alpha \in \CC$ be a simple magic angle of TBG-2, then for all $\vk \notin \Gamma^*$ there is $\vG \in \Gamma^*$ such that
\[ \Im \tr(A_{\vk}(\vG))\neq 0\]
while for $\vk \in \Gamma^*$ one has
\begin{equation}
\label{eq:not0}
\Im \tr(A_{\vk}(\vG))=0 \text{ for all }\vG \in \Gamma^*.
\end{equation}
\end{prop}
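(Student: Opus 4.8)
The plan is to reduce the statement, via the two-band case of the main theorem, to a question about the $A$-sublattice flat-band density, and then to feed in the explicit $\theta$-function description of the chiral flat band. Since TBG-2 has two flat bands we have $M=1$, so $\mc N=\{-1,1\}$ and each $A_\vk(\vq+\vG)$ is a scalar; write $\rho_\vk(\vr):=\sum_{\sigma,j}|u_{1\vk}(\vr;\sigma,j)|^2$ for the (real, nonnegative, $\Gamma$-periodic) density of the $A$-polarized flat-band Bloch function. By the real-space form of the form factor derived in \cref{sec:cond-main-real}, $A_\vk(\vG)=\int_\Omega e^{-i\vG\cdot\vr}\rho_\vk(\vr)\ud\vr$, and since $\rho_\vk$ is real $\overline{A_\vk(\vG)}=A_\vk(-\vG)$; hence $\Im\tr(A_\vk(\vG))=\Im A_\vk(\vG)=0$ for all $\vG\in\Gamma^*$ if and only if $\rho_\vk$ is an even function of $\vr$. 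So the proposition amounts to: $\rho_\vk$ is even precisely when $\vk\in\Gamma^*$.

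The direction \eqref{eq:not0} is almost immediate. The computation around \eqref{eq:form-factor-sum-rule-1} in \cref{lem:form-factor-properties} shows $\Lambda_\vzero(\vG)$ is Hermitian; for $M=1$ this says exactly that $A_\vzero(\vG)\in\R$, so $\Im\tr(A_\vzero(\vG))=0$, and by the shift identity \eqref{eq:form-factor-shift} the same holds for every $\vk\in\Gamma^*$. Equivalently, $\rho_\vzero$ is even. The substance is therefore the converse.

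For $\vk\notin\Gamma^*$ I would use the explicit description of the magic-angle flat band of the chiral TBG-2 operator from \cite{TarnopolskyKruchkovVishwanath2019} (see also \cite{BeckerEmbreeWittstenEtAl2022,becker2023chiral}). Identifying $\vr=(x_1,x_2)$ with $z=x_1+ix_2$, the fibre $\ker(D(\alpha)+(k_1+ik_2))$ is one-dimensional, spanned by a function of the form $u_{1\vk}(\vr)=c_\vk\,e^{\ell_\vk(\vr)}\,\theta(z-\zeta_\vk)\,\theta(z-\zeta_\vzero)^{-1}\,u_{1\vzero}(\vr)$, where $\ell_\vk$ is real-affine in $\vr$, $c_\vk\in\CC$, and $\vk\mapsto\zeta_\vk$ is affine and descends to a bijection $\Omega^*\to\Omega$, so that $\zeta_\vk\equiv\zeta_\vzero\bmod\Gamma$ iff $\vk\in\Gamma^*$; here $\ell_\vk$ and $\zeta_\vk$ are determined by demanding that $u_{1\vk}$ be Bloch with momentum $\vk$. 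Then $\rho_\vk(\vr)=|c_\vk|^2e^{2\ell_\vk(\vr)}|\theta(z-\zeta_\vk)|^2|\theta(z-\zeta_\vzero)|^{-2}\rho_\vzero(\vr)$, and since $\rho_\vzero$ is even and vanishes only on a discrete set, $\rho_\vk$ is even iff the positive $\Gamma$-periodic function $h_\vk(\vr):=e^{2\ell_\vk(\vr)}|\theta(z-\zeta_\vk)|^2|\theta(z-\zeta_\vzero)|^{-2}$ is even. Now $h_\vk$ has a double zero on $\zeta_\vk+\Gamma$ and a double pole on $\zeta_\vzero+\Gamma$; evenness would force both sets to be invariant under $\vr\mapsto-\vr$, hence $2\zeta_\vk\in\Gamma$, which via the bijection $\vk\mapsto\zeta_\vk-\zeta_\vzero$ occurs only for the four momenta with $\zeta_\vk-\zeta_\vzero\in\tfrac12\Gamma$: namely $\vk\in\Gamma^*$ together with the three half-period (``$M$'') points of $\Omega^*$. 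For all $\vk$ outside these four, $h_\vk$—and hence $\rho_\vk$—is not even, so the proposition holds there.

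The three $M$-points are the crux, and I expect them to be the main obstacle: there the zero and pole sets of $h_\vk$ are symmetric, so any failure of evenness must come from the exponential prefactor $e^{2\ell_\vk}$. The plan is to use that $\Gamma$-periodicity of $h_\vk$ determines the linear part of $\ell_\vk$ in terms of $\zeta_\vk-\zeta_\vzero$ through the automorphy factors of $\theta$, and that every nonzero half-period of $\Gamma$—which is exactly what $\zeta_\vk-\zeta_\vzero$ equals at an $M$-point—has nonzero imaginary part, forcing $\ell_\vk$ to be genuinely non-constant; a parallel accounting of the automorphy factors in $|\theta(z+\zeta_\vk)|^2|\theta(z+\zeta_\vzero)|^{-2}$ (via $\theta(-\xi)=-\theta(\xi)$) then shows the remaining factor cannot compensate. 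To organize this bookkeeping I would bring in the scaling identity $\wp(\omega z)=\omega\wp(z)$ from \eqref{eq:property} for the Weierstrass function of $\Gamma$ and the $C_3$-symmetry $\mc R$, which cyclically permutes the three $M$-points and so reduces the computation to a single one of them. Together these two cases give the claim for all $\vk\notin\Gamma^*$.
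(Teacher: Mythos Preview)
Your approach---reduce to parity of $\rho_\vk=\|\vu_\vk\|^2$ via \cref{corr:two-band-gs-real}, then use the $\theta$-function ansatz to locate the unique zero of $\vu_\vk$---is exactly what the paper does. You are also right that the bare zero-location argument is inconclusive at the three $M$-points of $\Omega^*$, where the zero $\vr_\vk$ lands on a nonzero half-period and hence $\vr_\vk\equiv-\vr_\vk\pmod\Gamma$; the paper glosses over this case.

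However, your plan to salvage the $M$-points cannot succeed, because the assertion of the proposition is in fact \emph{false} there. The same symmetry you invoke at $\vk\in\Gamma^*$ applies at every $2$-torsion point of $\Omega^*$: combining \cref{eq:form-factor-sum-rule-1} with \cref{eq:form-factor-dagger} gives $A_\vk(\vG)=A_{-\vk}(\vG)^\dagger$, which for $M=1$ reads $A_\vk(\vG)=\overline{A_{-\vk}(\vG)}$; whenever $-\vk\equiv\vk\pmod{\Gamma^*}$ this forces $A_\vk(\vG)\in\R$ for all $\vG$. Equivalently, the layer-type involution on two-spinors intertwines $\ker(D(\alpha)+k)$ with $\ker(D(\alpha)-k)$ and is unitary, so $\rho_{-\vk}(-\vr)=\rho_\vk(\vr)$ and hence $\rho_{\vk_M}$ is even. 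Your specific claim that nonzero half-periods force $\ell_\vk$ to be non-constant is wrong: at the $M$-point corresponding to the real half-period $\tfrac12$ of the $\theta$-lattice $\Z+\omega\Z$ one computes $\ell_\vk\equiv0$, and the automorphy-factor bookkeeping you outline yields $|\theta(w-\tfrac12)|=|\theta(w+\tfrac12)|$, confirming evenness rather than refuting it. The honest statement is that $\Im A_\vk(\vG)\ne0$ for some $\vG$ precisely when $\vk\notin\tfrac12\Gamma^*$; this overreach (shared by the paper's proof) is harmless for \cref{thm:application-tbg-2}, since \cref{corr:two-band-gs} only needs the condition at a single $\vk\in\mc{K}$.
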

\begin{proof}
From \cref{corr:two-band-gs-real} we deduce that the existence of some $\vG \in \Gamma^*$ such that
\[ \Im \tr(A_{\vk}(\vG))\neq 0\]
is equivalent to the existence of some $\vr$ such that $\Vert \vu_{\vk}(\vr)\Vert \neq \Vert \vu_{\vk}(-\vr)\Vert$, where we use here the Euclidean norm of the $2$-vector $\vu_{\vk}$ that is given by
\[ \vu_{\vk} \in \ker_{L^2_2}(D(\alpha)+(k_1+ik_2) \operatorname{id}_{\mathbb C^2}) \setminus \{0\}.\]
At simple magic angles there is a unique solution $\vu_{\mathbf 0}$ which obey reflection symmetry, see \cite[Theo.1]{BeckerHumbertZworski2023} (also Figure \ref{fig:seeking_alpha} for an illustration).
\begin{figure}
\includegraphics[width=7.5cm]{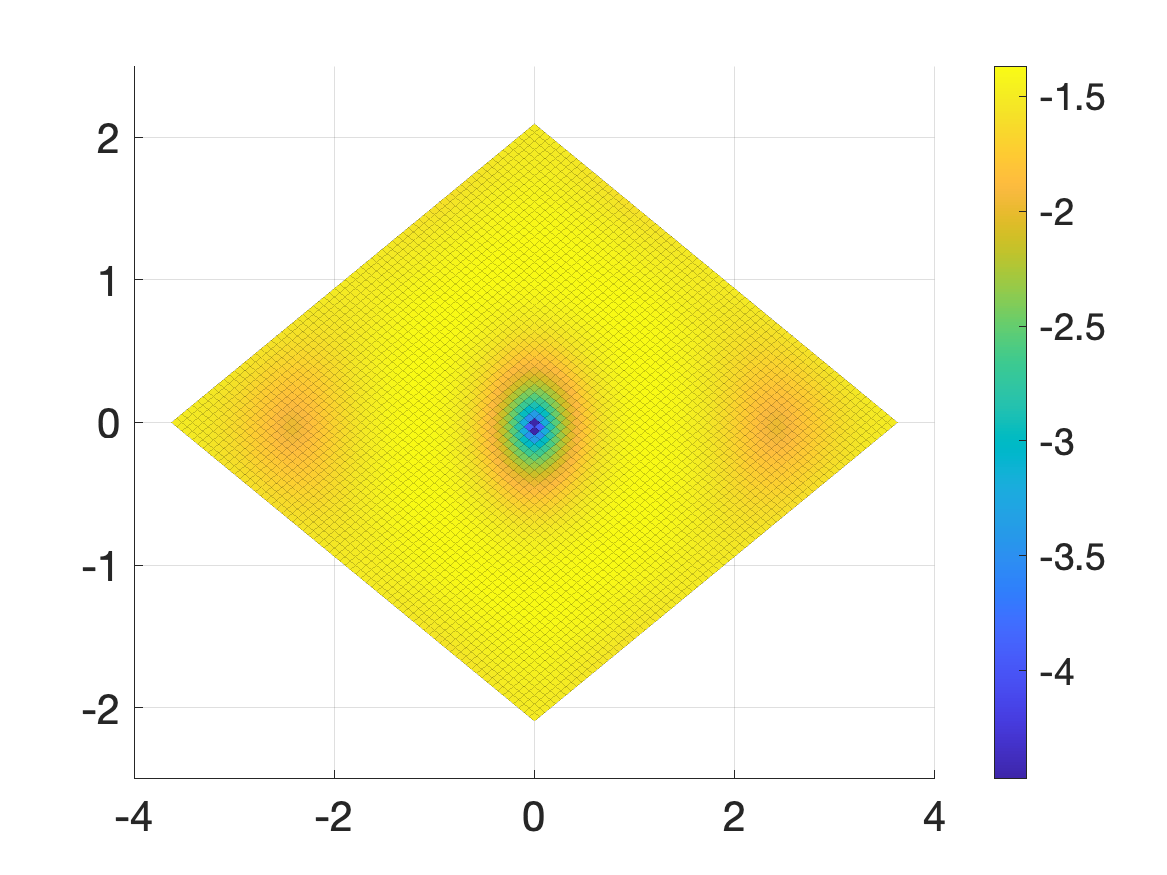}
\includegraphics[width=7.5cm]{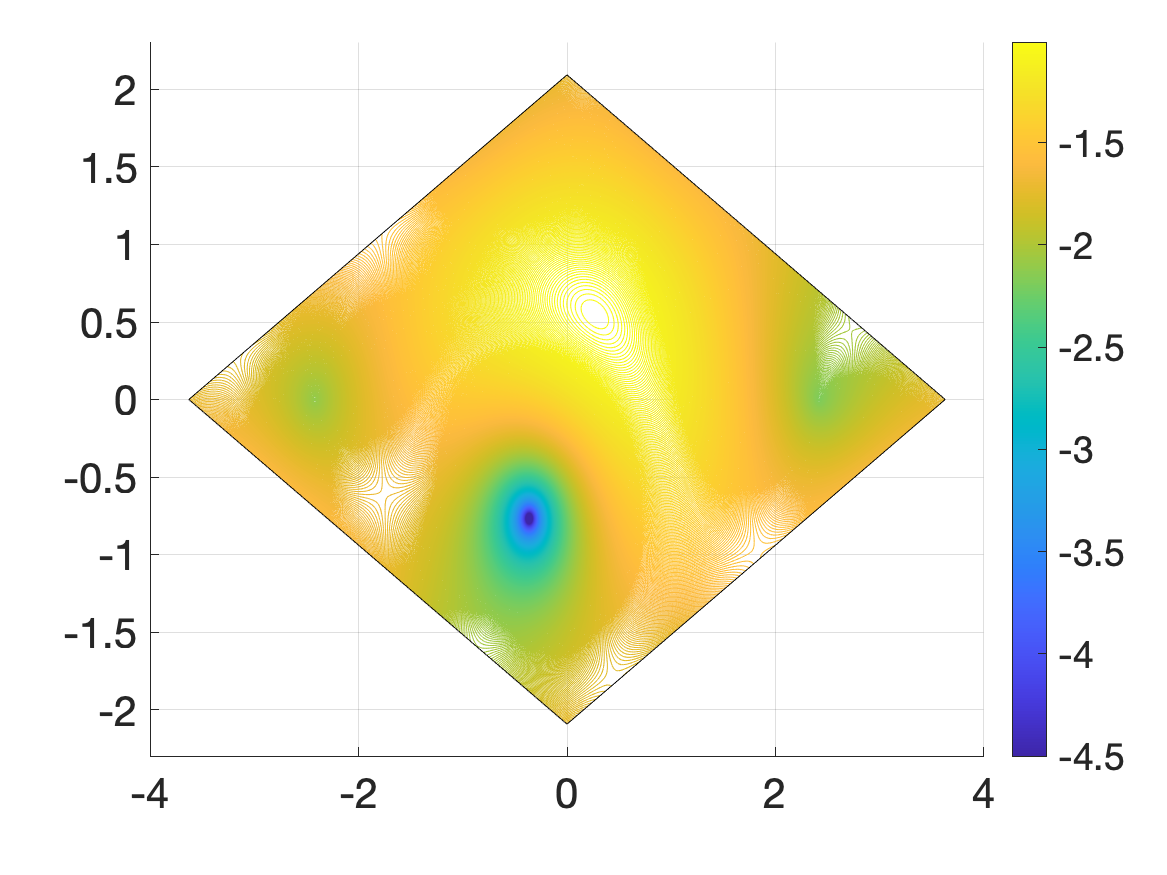}
\caption{\label{fig:seeking_alpha}Log of modulus of flat-band wavefunction with $U_0$ as in \eqref{eq:ref_pots} and $\vk=0$, $\alpha \approx 0.58656$ exhibiting reflection symmetry with zero in the center (left) and $\vk =(2,0.7)$ with zero away from center (right) showing $\Vert u_{\vk}(\vr)\Vert \neq \Vert u_{\vk}(-\vr)\Vert.$}
\end{figure}
Indeed, using symmetry \eqref{eq:Lsymm}, we find
\[ \Vert \vu_{\mathbf 0}(\vr)\Vert =\Vert \mathcal L \vu_{\mathbf 0}(\vr) \Vert = \Vert \vu_{\mathbf 0}(-\vr)\Vert \]
showing \eqref{eq:not0}.
By \cite[Theo. $3$]{BeckerHumbertZworski2022a} the function $\mathbf u_{\mathbf 0}$ has a simple zero at $\mathbf r=\mathbf 0$ and no other zeros in its fundamental domain.

From \cite[Lemma $3.1$]{BeckerHumbertZworski2022a},
\begin{equation}
\label{eq:identity}
 \vu_{\vk}(\vr) = F_{\vk}(\vr) \vu_{\mathbf 0}(\vr) \in \ker(D(\alpha)+(k_1+ik_2)\operatorname{id}_{\CC^2}),
 \end{equation}
where
\[ F_{\vk}(\vr):=e^{\frac{(k_1+ik_2)}{2}(-i(1+\omega)x+(\omega-1)y)} \frac{ \theta ( \frac{3(x_1+ix_2)}{4\pi i \omega} + \frac{k_1+ik_2}{\sqrt{3}\omega} ) }{
\theta\Big( \frac{3(x_1+ix_2)}{4\pi i \omega}\Big)}.\]
It follows that $\vu_{\vk}(\vr)$ has a unique zero at $\vr = \frac{4\pi}{3\sqrt{3}}(k_2,-k_1)^\top$ per unit cell (note that both $\vu_{\mathbf 0}$ and $\theta$ have a simple zero at $\vr=\mathbf 0$). Thus, $\Vert \vu_{\vk}\Vert$ cannot be an even function for $\vk \notin \Gamma^*.$
\end{proof}
Therefore, by \cref{corr:two-band-gs-real}, we have the following result
\begin{theo}
\label{thm:application-tbg-2}
   The ferromagnetic Slater determinants states are the unique ground states of the corresponding flat-band interacting model of TBG-2.
\end{theo}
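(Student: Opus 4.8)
The plan is to obtain \cref{thm:application-tbg-2} as an essentially immediate consequence of the general two-band uniqueness criterion \cref{corr:two-band-gs} (or equivalently its real-space reformulation \cref{corr:two-band-gs-real}) combined with the pointwise information on the flat-band mode recorded in \cref{prop:simple}. First I would check that the concrete chiral TBG-2 model fits the abstract framework of \cref{thm:hf-gs-unique}: after the periodization of \cref{sec:periodicity-tbg-ettg} the Hamiltonian $H(\alpha)$ of \cref{eq:chiral-ham}--\cref{eq:d-tbg} is $\Gamma$-periodic, each Bloch-Floquet fiber has compact resolvent and depends analytically on $\vk$, and at a simple magic angle $\alpha$ there are exactly two flat bands, so \cref{assume:h} holds with $M=1$; moreover $H(\alpha)$ commutes with the $C_{2z}\mc{T}$ symmetry $\mc{Q}$ and anticommutes with the sublattice and layer symmetries $\mc{Z}$, $\mc{L}$ of \cref{eq:z-symm-def}--\cref{eq:q-symm-def}, so \cref{assume:symm} holds. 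One then fixes a Monkhorst-Pack grid $\mc{K}$ fine enough that \cref{assume:grid} is satisfied; by the remark following that assumption this is possible since $\vk\mapsto(H_{\vk}-i)^{-1}$ is Lipschitz, and in particular it holds in the thermodynamic limit $n_{k_x},n_{k_y}\to\infty$.

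With the hypotheses of \cref{corr:two-band-gs} in place, the remaining step is to produce one grid point $\vk\in\mc{K}$ satisfying its condition. I would take any $\vk\in\mc{K}$ with $\vk\notin\Gamma^{*}$ — such a point exists whenever $N_{\vk}\ge 2$, since the only element of $\mc{K}$ lying in $\Gamma^{*}$ is $\vzero$ — and apply \cref{prop:simple}, which then supplies a $\vG\in\Gamma^{*}$ with $\Im\tr(A_{\vk}(\vG))\neq 0$. Because $M=1$ the matrix $A_{\vk}(\vG)$ is a scalar, so this is literally $\Im A_{\vk}(\vG)\neq 0$, which is exactly the hypothesis of \cref{corr:two-band-gs}. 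That corollary immediately yields that $\ket{\Psi_{\pm}}$ are the unique translation-invariant Hartree-Fock ground states of $\hat{H}_{FBI}$. Finally, since $E=E_{\mathrm{HF}}=0$ by \cref{prop:hf-gs} and $\hat{H}_{FBI}$ is positive semidefinite (\cref{sec:mb-posit-semid}), so that every zero-energy state is a ground state, these Hartree-Fock ground states coincide with the elements of $\mc{S}$ that are exact ground states of $\hat{H}_{FBI}$, which is the assertion of the theorem.

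In short, nothing analytically new is required at this stage: \cref{prop:simple} already carries the substantive input — via the $\theta$-function representation $\vu_{\vk}(\vr)=F_{\vk}(\vr)\vu_{\vzero}(\vr)$ of \cref{eq:identity}, the unique zero of the flat-band mode moves off the origin precisely when $\vk\notin\Gamma^{*}$, so $\|\vu_{\vk}(\vr)\|$ fails to be an even function — while \cref{corr:two-band-gs} supplies the many-body reduction. The only genuinely new work is the bookkeeping of verifying \cref{assume:h,assume:symm,assume:grid} for the TBG-2 model, and I expect the sole (and very minor) obstacle to be the compatibility of the two grid requirements: one must choose $\mc{K}$ fine enough for \cref{assume:grid} yet containing a point outside $\Gamma^{*}$, which is automatic for all sufficiently fine grids and hence poses no real difficulty.
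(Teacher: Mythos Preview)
Your proposal is correct and follows essentially the same route as the paper: the paper's proof of \cref{thm:application-tbg-2} is the single sentence ``Therefore, by \cref{corr:two-band-gs-real}, we have the following result,'' relying on \cref{prop:simple} for the required non-vanishing at some $\vk\notin\Gamma^{*}$. Your write-up simply unpacks the hypothesis verification (\cref{assume:h,assume:symm,assume:grid} and the existence of a grid point off $\Gamma^{*}$) that the paper leaves implicit.
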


\subsection{Application of Main Theorem to TBG-4}
\label{sec:application-tbg-4}
A similar result also holds for two-fold degenerate magic angles in chiral limit TBG.

\begin{prop}[Two-fold degenerate magic angle]
\label{prop:two}
Let $\alpha \in \CC$ be a two-fold degenerate magic angle of TBG-4, then for $\vk  = \pm\vq_1$, are invariant points under the rotation and translation symmetry, there are $\vG \in \Gamma^*$ such that
\begin{equation}
\label{eq:not02}
\Im \tr(A_{\vk}(\vG))\neq 0,
\end{equation}
while for $\vk \in \Gamma^*$ one has
\begin{equation}
\label{eq:not03}
\Im \tr(A_{\vk}(\vG))=0 \text{ for all }\vG \in \Gamma^*.
\end{equation}
\end{prop}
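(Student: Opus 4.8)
The plan is to recast both halves of the statement in the real-space language of \cref{sec:cond-main-real} and then treat them by quite different means. Recall from \cref{sec:cond-main-real} (cf.\ condition~(1) of \cref{corr:four-band-gs-real}) that $\tr(A_{\vk}(\vG))$ is exactly the $\vG$-th Fourier coefficient of the real-valued, $\Gamma$-periodic function $\vr\mapsto\tr(\rho_{\vk,\vk}(\vr))=\|\vu_{1\vk}(\vr)\|^{2}+\|\vu_{2\vk}(\vr)\|^{2}$, where $\{\vu_{1\vk},\vu_{2\vk}\}$ is any orthonormal basis of the two-dimensional $A$-sublattice flat-band space $\ker(D(\alpha)+(k_1+ik_2)\operatorname{id}_{\CC^2})$; since this trace is the pointwise trace of the flat-band spectral projector it is independent of the basis. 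Hence $\Im\tr(A_{\vk}(\vG))=0$ for all $\vG\in\Gamma^{*}$ iff $\vr\mapsto\tr(\rho_{\vk,\vk}(\vr))$ is an even function, and $\Im\tr(A_{\vk}(\vG))\neq0$ for some $\vG$ iff it is not. For $\vk\in\Gamma^{*}$ the claim is then immediate: by \eqref{eq:form-factor-shift}, $A_{\vk}(\vG)=A_{\vzero}(\vG)$, and, as recorded in the proof of \cref{lem:form-factor-properties}, the layer symmetry $\mc{L}$ (which fixes $\vk=\vzero$) forces $\Lambda_{\vzero}(\vG)$, and hence its diagonal block $A_{\vzero}(\vG)$, to be Hermitian; thus $\tr(A_{\vzero}(\vG))\in\RR$, giving \eqref{eq:not03}. (Equivalently, $\mc{L}$ forces $\rho_{\vzero,\vzero}(\vr)=\rho_{\vzero,\vzero}(-\vr)$.)

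For $\vk=\pm\vq_1$ I would use the explicit wavefunction representation, as in \cref{prop:simple}. The multiplier construction of \cite[Lemma~3.1]{BeckerHumbertZworski2022a} is insensitive to the dimension of $\ker D(\alpha)$ --- the scalar factor $F_{\vk}$ solving the relevant $\bar\partial$-equation is the same --- so at a two-fold degenerate magic angle one still has $\ker(D(\alpha)+(k_1+ik_2)\operatorname{id}_{\CC^2})=F_{\vk}\cdot\ker D(\alpha)$ with the identical $F_{\vk}$. Per fundamental domain, $F_{\vk}$ has a single simple zero, at $\vr_{\vk}:=\tfrac{4\pi}{3\sqrt 3}(k_2,-k_1)^{\top}$, and a single simple pole, at $\vr=\vzero$, the pole being cancelled by the vanishing at $\vr=\vzero$ of every member of $\ker D(\alpha)$ (which is what makes the $\vu_{m\vk}$ regular). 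Therefore $\tr(\rho_{\vk,\vk}(\vr))=|F_{\vk}(\vr)|^{2}\,g(\vr)$ with $g(\vr):=\|\vu_{1\vzero}(\vr)\|^{2}+\|\vu_{2\vzero}(\vr)\|^{2}$ a smooth nonnegative function vanishing only on the finite common zero set of $\ker D(\alpha)$. Taking $\vk=\vq_1=(0,1)^{\top}$ gives $\vr_{\vq_1}=\tfrac{4\pi}{3\sqrt 3}(1,0)^{\top}$, and one checks directly that $\vr_{\vq_1}\not\equiv-\vr_{\vq_1}$ and $\vr_{\vq_1}\not\equiv\vzero$ modulo $\Gamma$. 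Hence $\tr(\rho_{\vq_1,\vq_1}(\vr_{\vq_1}))=0$, while $\tr(\rho_{\vq_1,\vq_1}(-\vr_{\vq_1}))=|F_{\vq_1}(-\vr_{\vq_1})|^{2}g(-\vr_{\vq_1})>0$ because $F_{\vq_1}$ has no zero at $-\vr_{\vq_1}$ and $-\vr_{\vq_1}$ is not a common zero of $\vu_{1\vzero},\vu_{2\vzero}$ (their only common zero being $\vr=\vzero$). Thus $\tr(\rho_{\vq_1,\vq_1})$ is not even, and by the first paragraph some $\Im\tr(A_{\vq_1}(\vG))\neq0$; the case $\vk=-\vq_1$ is identical since $\vr_{-\vq_1}=-\vr_{\vq_1}$.

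The main obstacle is the last step: one must justify that the Jacobi-$\theta$ multiplier construction of \cite{BeckerHumbertZworski2022a} really survives the passage to a degenerate magic angle, and in particular (i) that every member of the two-dimensional $\ker D(\alpha)$ at $\vk=\vzero$ vanishes at $\vr=\vzero$, so that $F_{\vk}\vu_{m\vzero}$ is regular and spans $\ker(D(\alpha)+(k_1+ik_2)\operatorname{id}_{\CC^2})$, and (ii) that $\vr=\vzero$ is their \emph{only} common zero, so that $g(-\vr_{\vq_1})>0$. Point (ii) should follow from the decomposition of $\ker D(\alpha)$ into $C_3$-isotypic components at the rotation-invariant point $\vzero$: the two components are one-dimensional and carry distinct $C_3$-characters, which pins their non-trivial zeros to different high-symmetry points of $\Omega$, so the only overlap is at $\vzero$. (The rotation invariance at $\vk=\vq_1$ is also what makes the remaining hypothesis of \cref{corr:four-band-gs-real} checkable there.)
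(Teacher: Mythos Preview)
Your treatment of $\vk\in\Gamma^{*}$ is correct and matches the paper: the layer symmetry $\mc{L}$ forces $A_{\vzero}(\vG)$ to be Hermitian, so the trace is real.

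The argument for $\vk=\pm\vq_{1}$, however, has a genuine gap at exactly the place you flag as an obstacle, and it is not a matter of missing justification --- the claim is false.  At a two-fold degenerate magic angle the two-dimensional space $\ker_{L^{2}_{2}}D(\alpha)$ is spanned (see \cite[Theo.~1]{BeckerHumbertZworski2023}) by a function $\vw_{\mathbf 0}\in L^{2}_{2,1}$ with a \emph{double} zero at $\vr=\vzero$ and by $\vv_{\mathbf 0}(\vr)=\wp(x_{1}+ix_{2})\,\vw_{\mathbf 0}(\vr)\in L^{2}_{2,0}$.  Since $\wp$ has a pole of order~2 at the origin, $\vv_{\mathbf 0}(\vzero)\neq 0$; instead $\vv_{\mathbf 0}$ has simple zeros at $\pm\vr_{S}$.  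Hence your point~(i) fails: not every element of $\ker D(\alpha)$ vanishes at $\vr=\vzero$, so $F_{\vk}\vv_{\mathbf 0}$ retains a simple pole there and is not a valid flat-band eigenfunction.  The factorization $\tr(\rho_{\vk,\vk}(\vr))=|F_{\vk}(\vr)|^{2}g(\vr)$ with $g$ smooth therefore collapses, and with it the zero/nonzero comparison at $\pm\vr_{\vq_{1}}$.  (Your point~(ii) is also off: there is in fact \emph{no} common zero, and $-\vr_{\vq_{1}}=-\vr_{S}$ is a zero of $\vv_{\mathbf 0}$.)

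The paper circumvents this by using \emph{two different} shifted multipliers: $\vw_{\vk}=\beta_{\vk}F_{\vk}(\vr)\vw_{\mathbf 0}(\vr)$ but $\vv_{\vk}=\alpha_{\vk}F_{\vk}(\vr-\vr_{S})\vv_{\mathbf 0}(\vr)$, so that each pole lands on a zero of the respective factor.  Because the two basis elements no longer share a common scalar prefactor, one cannot simply read off a zero of $\tr(\rho_{\vk_{*},\vk_{*}})$; instead the paper writes the trace explicitly in terms of $\theta$ and $\wp$, reduces via a $\theta$-function identity for $\wp$, and then shows non-evenness by evaluating at a specific lattice point.  This is substantially more work than the simple-magic-angle case precisely because your shortcut is unavailable.
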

\begin{proof}
For a two-fold degenerate magic angle there is by \cite[Theo.$1$]{BeckerHumbertZworski2023}, a unique element $\vw_{\mathbf 0} \in \ker_{L^2_{2,1}}(D(\alpha))$ and $\vv_{\mathbf 0}(\vr ) = \wp(x_1+ix_2)\vw_{\mathbf 0}(\vr) \in \ker_{L^2_{2,0}}(D(\alpha)).$

Since they belong to $L^2$-orthogonal subspaces, due to different rotational symmetry using \eqref{eq:property}, we have $\tr(A_{\mathbf 0}(\vr)) = \Vert \vv_{\mathbf 0}(\vr ) \Vert^2 + \Vert \vw_{\mathbf 0}(\vr ) \Vert^2.$
Thus, by the symmetry \eqref{eq:Lsymm}, we find
\begin{equation}
\begin{split}
\label{eq:trace}
\tr(A_{\mathbf 0}(\vr))&= \Vert \vw_{\mathbf 0}(\vr) \Vert^2 + \Vert \vv_{\mathbf 0}(\vr) \Vert^2 \\
&= \Vert \mathcal L \vw_{\mathbf 0}(\vr) \Vert^2 + \Vert \mathcal L \vv_{\mathbf 0}(\vr) \Vert^2\\
&= \Vert \vw_{\mathbf 0}(-\vr) \Vert^2 + \Vert \vv_{\mathbf 0}(-\vr) \Vert^2 = \tr(A_{\mathbf 0}(-\vr)).
\end{split}
\end{equation}
We recall from \cref{corr:four-band-gs-real} the equivalence
\[ \Im \tr(A_{\vk}(\vG)) \neq 0 \text{ for some }\vG\text{ if and only if }\tr(\rho_{\vk,\vk}(\vr)) \neq \tr(\rho_{\vk,\vk}(-\vr))\text{ for some }\vr\]
showing  \eqref{eq:not03}, as all $\vk \in \Gamma^*$ are equivalent to $\vk=\bm 0.$

We recall from \cite[Theo. $3$]{BeckerHumbertZworski2023} and \cite[Lemm. $7.1$]{BeckerHumbertZworski2023}  that $\vw_{\mathbf 0}$ has a zero of order $2$ at $\mathbf r=\mathbf 0$ and $\vv_{\mathbf 0}$ has simple zeros at $\pm \vr_S$ with $\vr_S= (\frac{4\pi}{3\sqrt{3}},0)^{\top},$ see Figure \ref{fig:seeking_alpha2}.
\begin{figure}
 \includegraphics[width=7.5cm]{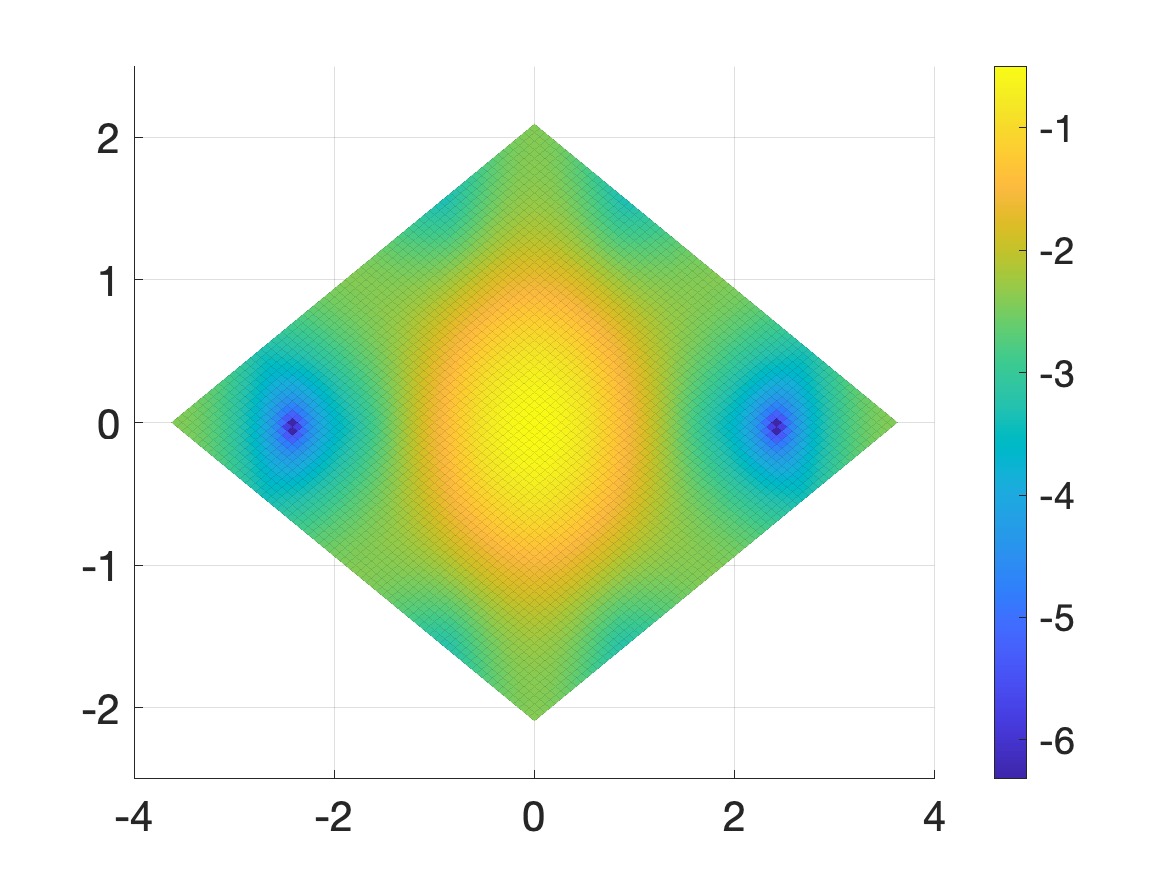} \includegraphics[width=7.5cm]{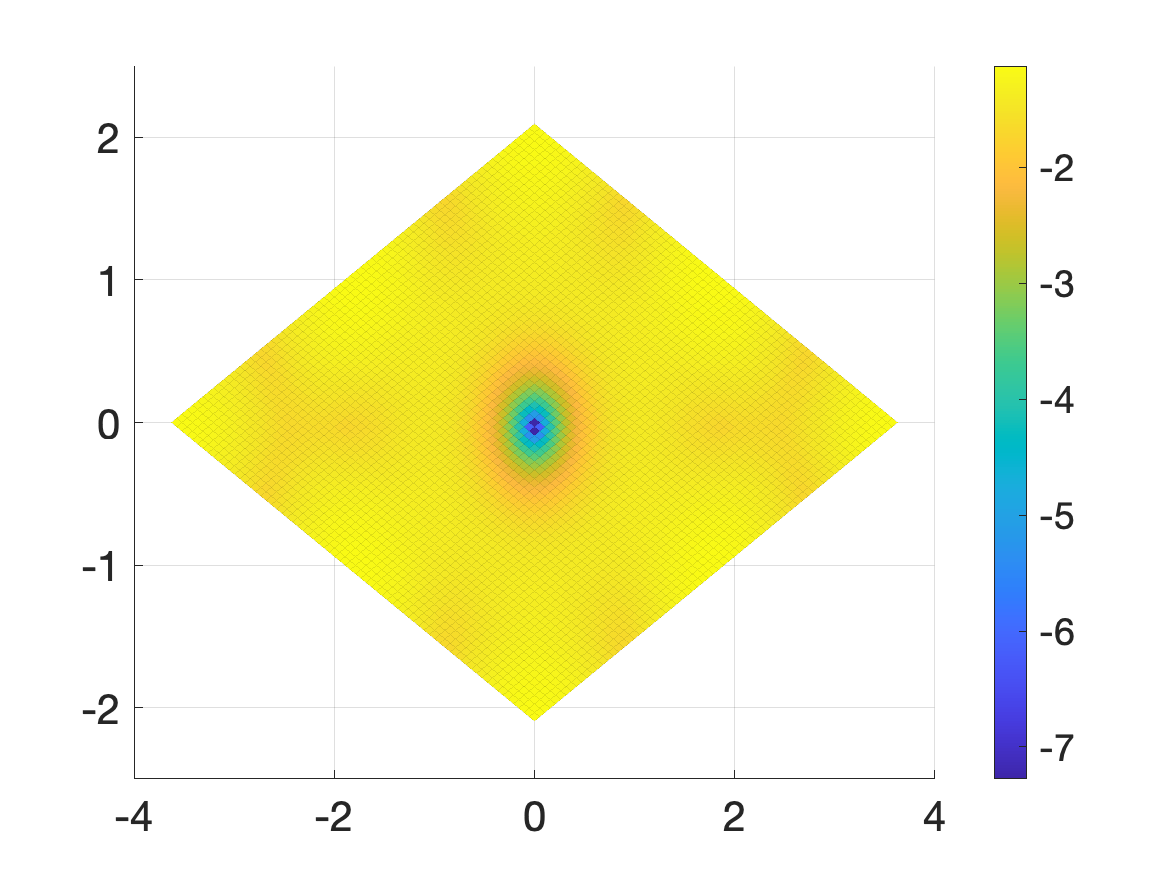}
\caption{\label{fig:seeking_alpha2}Log of modulus of flat-band wavefunctions two flat bands for $U_{7/8}$ as in \eqref{eq:ref_pots} with $\vk=0$, $\alpha \approx 0.853799$ with $\vv_{\mathbf 0}$ (left) (with simple zeros at $\pm \vr_{S}$ and $\vw_{\mathbf 0}$(right) with double zero at $\vr=0$. }
\end{figure}
We then define functions $ \vv_{\vk}, \vw_{\vk} \in \ker(D(\alpha)+(k_1+ik_2)\operatorname{id}_{\CC^2})$ by
\[ \vv_{\vk}(\vr) = \alpha_{\vk} F_{\vk}(\vr - \vr_S) \vv_{\mathbf 0}(\vr)  \text{ and }\vw_{\vk}(\vr) = \beta_{\vk} F_{\vk}(\vr)\vw_{\mathbf 0}(\vr)\]
with normalizing constants $\alpha_{\vk},\beta_{\vk}>0$ that we allow to change in this proof to simplify the notation and
\begin{equation}
\label{eq:Fk}
 F_{\vk}(\vr):=e^{\frac{(k_1+ik_2)}{2}(-i(1+\omega)x+(\omega-1)y)} \frac{ \theta ( \frac{3(x_1+ix_2)}{4\pi i \omega} + \frac{k_1+ik_2}{\sqrt{3}\omega} ) }{
\theta\Big( \frac{3(x_1+ix_2)}{4\pi i \omega}\Big)}.
\end{equation}
From \eqref{eq:identity}, we see that $\vv_{\vk_*},\vw_{\vk_*} \in \ker (D(\alpha)+i \operatorname{id}_{\CC^2}).$
We observe directly from the function $F_{\mathbf k}$ that $\vw_{\vk_*}$ vanishes to first order at $\mathbf r =0$ and $\mathbf r=\vr_S$, but not at any other point.
The function $\vv_{\vk_*}$ vanishes to second order at $\mathbf r=-\vr_S$, i.e. $\vv_{\vk_*}(-\vr_S)=0$ and $\nabla \vv_{\vk_*}(-\vr_S)=0$.

We conclude that they are $L^2$-orthogonal using \eqref{eq:property}, since we can write for $\gamma \neq 0$
\[\begin{split}\vw_{\vk_*}(\vr') ^=\gamma \wp(x_1'+\vert r_S\vert +ix_2') \vv_{\vk_*}(\vr'),\text{ where }\\
\gamma^:=\frac{\vw_{\vk_*}(-\vr_S)}{\Vert  \wp(\bullet+\vert \vr_S\vert) \vv_{\vk_*}(\bullet) \Vert \lim_{x \to -\vr_S}\wp(x+\vert \vr_S\vert ) \vv_{\vk_*}(x)}
\end{split}\]
and thus $\vw_{x}$ and $\vv_{\vk_*}$ have different rotational symmetries.
We thus have
\[\begin{split}
\tr(\rho_{\vk_*,\vk_*}(\vr))
&= \Vert \vw_{\vk_*}(\vr)\Vert^2 +\Vert  \vv_{\vk_*}(\vr)\Vert^2 \\
&= (\alpha_{\vk_*} \vert F_{\vk_*}(\vr- \vr_S) \wp(x_1+ix_2)\vert^2+\beta_{\vk_*} \vert F_{\vk_*}(\vr) \vert^2)\Vert  \vw_{\mathbf 0}(\vr)\Vert^2.
\end{split} \]
In addition, we record that
\[F_{\vk_*}(\vr-\vr_S) = e^{-\frac{2\pi}{3\sqrt{3}}(1+\omega)} e^{\frac{i}{2}(-i(1+\omega)x_1+(\omega-1)x_2)} \frac{ \theta ( \frac{3(x_1+ix_2)}{4\pi i \omega}+\frac{2i}{\sqrt{3}\omega} ) }{
\theta\Big( \frac{3(x_1+ix_2)}{4\pi i \omega}+\frac{i}{\sqrt{3}\omega}\Big)}\]
and
\[F_{\vk_*}(\vr) = e^{\frac{i}{2}(-i(1+\omega)x_1+(\omega-1)x_2)} \frac{ \theta ( \frac{3(x_1+ix_2)}{4\pi i \omega}+\frac{i}{\sqrt{3}\omega} ) }{
\theta\Big( \frac{3(x_1+ix_2)}{4\pi i \omega}\Big)}.\]
In particular, we have for the modulus of the prefactor of \eqref{eq:Fk} as above \[\vert e^{\frac{i}{2}(-i(1+\omega)x_1+(\omega-1)x_2)} \vert^2 = e^{\frac{x_1-\sqrt{3}x_2}{2}}.\]

Combining the above properties, we find
\[\begin{split}
\tr(\rho_{\vk_*,\vk_*}(\vr))
&=e^{\frac{x_1-\sqrt{3}x_2}{2}} \Bigg(\alpha_{\vk_*}  \frac{ \vert \wp(x_1+ix_2) \theta ( \frac{3(x_1+ix_2)}{4\pi i \omega}+\frac{2i}{\sqrt{3}\omega} )\vert^2 }{
\vert \theta\Big( \frac{3(x_1+ix_2)}{4\pi i \omega}+\frac{i}{\sqrt{3}\omega}\Big)\vert^2} + \beta_{\vk_*} \frac{ \vert \theta ( \frac{3(x_1+ix_2)}{4\pi i \omega}+\frac{i}{\sqrt{3}\omega} )\vert^2 }{
\vert \theta\Big( \frac{3(x_1+ix_2)}{4\pi i \omega}\Big)\vert^2} \Bigg)\Vert  \vw_{\mathbf 0}(\vr)\Vert^2.
\end{split} \]

We also notice that $2i/(\sqrt{3}\omega) = -i/(\sqrt{3}\omega) +(1-\omega)$ which allows us together with the identity
\[\wp(x_1+ix_2) = - e^{-\frac{2\pi}{\sqrt{3}\omega}}\Bigg( \frac{\theta'(0)}{\theta(\frac{i}{\sqrt{3}\omega})} \Bigg)^2 \frac{ \theta ( \tfrac{3(x_1+ix_2)}{4\pi i \omega}-\tfrac{i}{\sqrt{3}\omega} ) \theta ( \tfrac{3(x_1+ix_2)}{4\pi i \omega}+\tfrac{i}{\sqrt{3}\omega} )}{ \theta ( \tfrac{3(x_1+ix_2)}{4\pi i \omega} )^2} \]
to reduce the above expression to
\[\begin{split}
\tr(\rho_{\vk_*,\vk_*}(\vr))
&=e^{-(x_1-\sqrt{3}x_2)}\Bigg(\alpha_{\vk_*} \frac{\vert  \theta ( \tfrac{3(x_1+ix_2)}{4\pi i \omega}-\tfrac{i}{\sqrt{3}\omega} )\vert^4}{\vert \theta\Big( \frac{3(x_1+ix_2)}{4\pi i \omega}\Big)\vert^2}+ \beta_{\vk_*} \vert \theta ( \tfrac{3(x_1+ix_2)}{4\pi i \omega}+\tfrac{i}{\sqrt{3}\omega} )\vert^2 \Bigg)\frac{\Vert  \vw_{\mathbf 0}(\vr)\Vert^2}{\vert \theta\Big( \tfrac{3(x_1+ix_2)}{4\pi i \omega}\Big)\vert^2}.
\end{split} \]
Recalling that $\vw_{0}$ and $\vert \theta \vert$ are even, we have to analyze whether for $\gamma:=\beta_{\vk_*}/\alpha_{\vk_*}>0$
\[\Phi(\vr) = e^{-(x_1-\sqrt{3}x_2)}\Bigg( \vert  \theta ( \tfrac{3(x_1+ix_2)}{4\pi i \omega}-\tfrac{i}{\sqrt{3}\omega} )\vert^4+ \gamma \vert \theta ( \tfrac{3(x_1+ix_2)}{4\pi i \omega}+\tfrac{i}{\sqrt{3}\omega} )  \theta\Big( \tfrac{3(x_1+ix_2)}{4\pi i \omega}\Big)\vert^2  \Bigg) \]
is even.
Evaluating $\Phi$ on the lattice $\Gamma$ we have by \cite[Lemma $4.1$]{MumfordMusili2007}
\[ \vr = (x_1,x_2)^{\top} \in \Gamma \Leftrightarrow \theta\Big(\tfrac{3(x_1+ix_2)}{4\pi i \omega}\Big)=0,\]
we find assuming $\Phi$ to be even that
\[ e^{-(x_1-\sqrt{3}x_2)}  \vert  \theta ( \tfrac{3(x_1+ix_2)}{4\pi i \omega}-\tfrac{i}{\sqrt{3}\omega} )\vert^4 = \Phi(\vr) =\Phi(-\vr) = e^{(x_1-\sqrt{3}x_2)} \vert  \theta ( \tfrac{3(x_1+ix_2)}{4\pi i \omega}+\tfrac{i}{\sqrt{3}\omega} )\vert^4. \]

Choosing, explicitly lattice points $x_1 = 2\pi /\sqrt{3}$ and $x_2 = 2\pi/3$, which is just $-\mathbf v_1$, see \eqref{eq:lattice_vector}, we find
\[ \vert  \theta ( -1-\tfrac{i}{\sqrt{3}\omega} )\vert = \vert  \theta ( -1+\tfrac{i}{\sqrt{3}\omega} )\vert \]
which is false.

We have thus shown that
\[\tr(\rho_{\vk_*,\vk_*}(\vr)) \neq \tr(\rho_{\vk_*,\vk_*}(-\vr)). \]
Either repeating the previous construction for $-\vk_*$ or observing that using \eqref{eq:Lsymm}
\[ \tr(\rho_{-\vk_*,-\vk_*}(\vr)) = \Vert  \mathcal L \vw_{\vk_*}(r)\Vert^2 +\Vert \mathcal L \vv_{\vk_*}(\vr)\Vert^2 = \Vert \vw_{\vk_*}(r)\Vert^2 +\Vert \vv_{\vk_*}(\vr)\Vert^2 = \tr(\rho_{\vk_*,\vk_*}(\vr)), \]
we conclude that \eqref{eq:not02} holds.
\end{proof}

Using the notation of the previous proof, we also have
\begin{prop}
\label{prop:three}
Let $\alpha$ be a two-fold degenerate magic angle of TBG, $\vk_*:=(0,1)^{\top}$, let $\vr \in \{\mathbf 0,\pm \vr_S\}$, and $\vr' \notin \{\mathbf 0,\pm \vr_S\} +\Gamma$. Then the expression
\begin{equation}
\label{eq:D}
\begin{split}
D(\vr, \vr') &:=
\det{
\begin{bmatrix}
\| \vw_{\vk}(\vr)\|^2 - \| \vv_{\vk}(\vr)\|^2  & \braket{\vw_{\vk}(\vr), \vv_{\vk}(\vr)} \\
\| \vw_{\vk}(\vr')\|^2 - \| \vv_{\vk}(\vr')\|^2 & \braket{\vw_{\vk}(\vr'), \vv_{\vk}(\vr')}
\end{bmatrix}} \\
&= (\| \vw_{\vk}(\vr)\|^2 - \| \vv_{\vk}(\vr)\|^2)  \braket{\vw_{\vk}(\vr'), \vv_{\vk}(\vr')} \end{split}
\end{equation}
is non-zero.
\end{prop}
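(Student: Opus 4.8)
The plan is to reduce the statement to the structural facts already established in the proof of \cref{prop:two}. Recall from there that at $\vk_* = (0,1)^{\top}$ both $\vv_{\vk_*}$ and $\vw_{\vk_*}$ lie in $\ker(D(\alpha)+i\operatorname{id}_{\CC^2})$; that $\vw_{\vk_*}$ has a simple zero at each point of $\{\mathbf 0,\vr_S\}+\Gamma$ and is nonvanishing elsewhere; that $\vv_{\vk_*}$ has a double zero at each point of $\{-\vr_S\}+\Gamma$ and is nonvanishing elsewhere; and that there is a constant $\gamma\neq 0$ with $\vw_{\vk_*}(\vr) = \gamma\,\wp(x_1+|\vr_S|+ix_2)\,\vv_{\vk_*}(\vr)$ for every $\vr=(x_1,x_2)^{\top}$. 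First I would note that for each $\vr\in\{\mathbf 0,\pm\vr_S\}$ one of the two vectors $\vw_{\vk_*}(\vr),\vv_{\vk_*}(\vr)$ is the zero vector ($\vw_{\vk_*}$ at $\mathbf 0$ and $\vr_S$, $\vv_{\vk_*}$ at $-\vr_S$), so the $(1,2)$ entry $\braket{\vw_{\vk_*}(\vr),\vv_{\vk_*}(\vr)}$ vanishes; expanding the $2\times 2$ determinant then yields exactly the stated identity $D(\vr,\vr') = (\|\vw_{\vk_*}(\vr)\|^2-\|\vv_{\vk_*}(\vr)\|^2)\,\braket{\vw_{\vk_*}(\vr'),\vv_{\vk_*}(\vr')}$, and it remains to check that both factors are nonzero.

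For the first factor I would treat the three values of $\vr$ separately. Since $\vr_S\notin\Gamma$ and $2\vr_S\notin\Gamma$, neither $\mathbf 0$ nor $\vr_S$ lies in $\{-\vr_S\}+\Gamma$, so at $\vr\in\{\mathbf 0,\vr_S\}$ we have $\vw_{\vk_*}(\vr)=0$ and $\vv_{\vk_*}(\vr)\neq 0$, whence $\|\vw_{\vk_*}(\vr)\|^2-\|\vv_{\vk_*}(\vr)\|^2=-\|\vv_{\vk_*}(\vr)\|^2<0$; and at $\vr=-\vr_S$ we have $\vv_{\vk_*}(-\vr_S)=0$ and $\vw_{\vk_*}(-\vr_S)\neq 0$ (as $-\vr_S\notin\{\mathbf 0,\vr_S\}+\Gamma$), whence the difference equals $\|\vw_{\vk_*}(-\vr_S)\|^2>0$. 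In every case it is nonzero.

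For the second factor I would substitute the factorization to obtain $\braket{\vw_{\vk_*}(\vr'),\vv_{\vk_*}(\vr')} = \overline{\gamma\,\wp(x_1'+|\vr_S|+ix_2')}\,\|\vv_{\vk_*}(\vr')\|^2$ (the complex conjugate is immaterial for the vanishing question), which is zero if and only if $\vv_{\vk_*}(\vr')=0$ or $\wp(x_1'+|\vr_S|+ix_2')=0$. The former occurs only for $\vr'\in\{-\vr_S\}+\Gamma$. For the latter I would use the $\theta$-quotient representation of $\wp(\,\cdot\,;\vv_1,\vv_2)$ recalled in the proof of \cref{prop:two}, which shows that $\wp$ has exactly the two zeros $\pm|\vr_S|$ per period cell; since $3\vr_S=\vv_2-\vv_1\in\Gamma$ (so $2\vr_S\equiv-\vr_S\pmod\Gamma$), shifting the argument by $|\vr_S|$ moves this zero set to $\{\mathbf 0,\vr_S\}+\Gamma$. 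Hence $\braket{\vw_{\vk_*}(\vr'),\vv_{\vk_*}(\vr')}=0$ precisely when $\vr'\in\{\mathbf 0,\pm\vr_S\}+\Gamma$, so the hypothesis on $\vr'$ makes the second factor nonzero, and therefore $D(\vr,\vr')\neq 0$.

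Given \cref{prop:two}, the argument is essentially bookkeeping, so I do not expect a genuine obstacle; the step that will require the most care is the reduction of the several zero loci to representatives modulo $\Gamma$ — keeping track of $3\vr_S\in\Gamma$, $2\vr_S\equiv-\vr_S$, and the effect of the shift by $|\vr_S|$ on the zero set $\{\pm|\vr_S|\}$ of $\wp$ — so that the combined zero set of the off-diagonal entry is seen to be exactly $\{\mathbf 0,\pm\vr_S\}+\Gamma$, neither larger nor smaller.
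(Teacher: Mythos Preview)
Your proof is correct and follows essentially the same approach as the paper: both use the zero structure of $\vw_{\vk_*}$ and $\vv_{\vk_*}$ from \cref{prop:two} to collapse the determinant to a product, then invoke the factorization $\vw_{\vk_*}=\gamma\,\wp(\cdot+|\vr_S|)\,\vv_{\vk_*}$ to reduce the nonvanishing of $\braket{\vw_{\vk_*}(\vr'),\vv_{\vk_*}(\vr')}$ to the zero sets of $\wp(\cdot+|\vr_S|)$ and $\vv_{\vk_*}$. Your version is somewhat more explicit in tracking the lattice arithmetic (e.g.\ $3\vr_S\in\Gamma$, $2\vr_S\equiv-\vr_S$), which the paper compresses into the single remark that $\wp(\cdot+|\vr_S|)$ ``has the same zeros as $\vw_{\vk}$.''
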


\begin{proof}
From the proof of Proposition \ref{prop:two}, we know that $\vv_{\vk}$ vanishes to second order precisely at $-\vr_S$ and $\vw_{\vk}$ vanishes precisely to first order at $0$ and $\vr_S.$ This already implies the second line in \eqref{eq:D} and that $ (\| \vw_{\vk}(\vr)\|^2 - \| \vv_{\vk}(\vr)\|^2) $ is non-zero. Thus, it remains to show that $\braket{\vw_{\vk}(\vr'), \vv_{\vk}(\vr')} $ is non-zero.
We can write up to a negligible phase $\vw_{\vk}(\vr') = \wp(x_1'+\vert r_S\vert +ix_2') \vv_{\vk}(\vr')$ and thus
\[\braket{\vw_{\vk}(\vr'), \vv_{\vk}(\vr')}  =  \wp(x_1' +ix_2'+\vert r_S\vert) \Vert \vv_{\vk}(\vr')\Vert^2\]
and since $\wp(x_1' +ix_2'+\vert r_S\vert)$ has the same zeros as $\vw_{\vk}$ and $\vv_{\vk}(\vr') \neq 0$, the claim follows.
\end{proof}
Therefore, combining \cref{prop:two,prop:three} with \cref{corr:four-band-gs-real}, we have the following result
\begin{theo}
\label{thm:application-tbg-4}
The ferromagnetic Slater determinants states are the unique ground states of the corresponding flat-band interacting model for TBG-4.
\end{theo}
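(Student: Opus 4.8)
The plan is to deduce \cref{thm:application-tbg-4} directly from the four-band criterion \cref{corr:four-band-gs-real}, checking its two hypotheses at a single, conveniently chosen momentum using the explicit flat-band wavefunctions of TBG-4. First I would record that the relevant single-particle operator---the chiral Bistritzer--MacDonald Hamiltonian $H(\alpha)$ of \cref{eq:chiral-ham,eq:d-tbg} at a (real) two-fold degenerate magic angle $\alpha$ (such angles exist, e.g.\ for $U_\pm = U_{7/8}$, see \cref{table:mag_ang}), conjugated as in \cref{sec:periodicity-tbg-ettg} to make it $\Gamma$-periodic---satisfies the standing hypotheses: \cref{assume:h} holds because $H_\vk(\alpha)$ is an analytic family with compact resolvent whose flat-band space $\mc{N}$ is, by definition of a two-fold degenerate magic angle, four-dimensional; and \cref{assume:symm} holds because the chiral block structure together with the symmetries \cref{eq:u-translation,eq:u-rotation,eq:u-mirror} of $U_\pm$ force $H$ to commute with $\mc{Q}$ and anticommute with $\mc{Z}$ and $\mc{L}$, exactly as set up in \cref{sec:requ-symm-gauge}. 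Fixing the gauge \cref{eq:sublattice-symm,eq:composite-symm,eq:layer-symm} puts the form factor into the block form of \cref{lem:form-factor-properties}, so \cref{corr:four-band-gs-real} applies.

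Next I would fix the grid. Since $\vg_1 + \vg_2 = -3\vq_1 \in \Gamma^*$, the point $\vq_1 \equiv \tfrac{2}{3}(\vg_1+\vg_2) \pmod{\Gamma^*}$ lies on the Monkhorst--Pack grid $\mc{K}$ of \cref{eq:mcK} whenever $3 \mid n_{k_x}$ and $3 \mid n_{k_y}$; enlarging $n_{k_x}, n_{k_y}$ further (keeping them divisible by $3$) we may also assume \cref{assume:grid} holds, by the remark following that assumption. It then suffices to verify the two conditions of \cref{corr:four-band-gs-real} at the single point $\vk_* := \vq_1 \in \mc{K}$. Condition (1), that $\tr(\rho_{\vk_*,\vk_*}(\vr)) \neq \overline{\tr(\rho_{\vk_*,\vk_*}(-\vr))}$ for some $\vr$: via the identity $[A_\vk(\vq+\vG)]_{mn} = \int_\Omega e^{-i\vG\cdot\vr}[\rho_{\vk,\vk+\vq}(\vr)]_{mn}\,\ud\vr$ derived just before \cref{prop:hf-gs-unique-real}, this is equivalent to $\Im\tr(A_{\vk_*}(\vG)) \neq 0$ for some $\vG \in \Gamma^*$, which is precisely \cref{eq:not02} of \cref{prop:two} at $\vk = \vq_1$. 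Condition (2): identifying the flat-band space at $\vk_*$ with $\operatorname{span}\{\vv_{\vk_*}, \vw_{\vk_*}\}$ as in the proofs of \cref{prop:two,prop:three}, the $2\times 2$ determinant appearing in \cref{corr:four-band-gs-real} becomes exactly the quantity $D(\vr, \vr')$ of \cref{prop:three}, which is shown there to be nonzero for $\vr \in \{\vzero, \pm\vr_S\}$ and any $\vr' \notin \{\vzero, \pm\vr_S\}+\Gamma$. Both hypotheses of \cref{corr:four-band-gs-real} are thus met, and the corollary yields that $\ket{\Psi_+}$ and $\ket{\Psi_-}$ are the unique uniformly half-filled, translation-invariant Slater-determinant ground states of the TBG-4 FBI Hamiltonian.

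I expect the main obstacle to lie entirely inside \cref{prop:two,prop:three}, not in the present theorem: once those are proved, this statement is essentially bookkeeping. The genuine work is to write the four zero modes in closed form as products of Jacobi $\theta$ and Weierstrass $\wp$ functions (building on Becker--Humbert--Zworski), to locate their zero sets (a double zero of $\vv_{\vk_*}$ at $-\vr_S$, simple zeros of $\vw_{\vk_*}$ at $\vzero$ and $\vr_S$), and to show the real-analytic density $\tr(\rho_{\vk_*,\vk_*}(\cdot))$ is not even---which, after the $\wp$-function identities, reduces to the concrete inequality $|\theta(-1 - i/(\sqrt{3}\,\omega))| \neq |\theta(-1 + i/(\sqrt{3}\,\omega))|$ evaluated at a lattice point. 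The one step genuinely local to this theorem that deserves care is the compatibility check in the second paragraph: ensuring that a momentum witnessing the conditions of \cref{corr:four-band-gs-real} can be placed on the discretized grid $\mc{K}$ while \cref{assume:grid} still holds.
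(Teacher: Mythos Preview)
Your proposal is correct and follows essentially the same route as the paper, which proves \cref{thm:application-tbg-4} simply by combining \cref{prop:two} and \cref{prop:three} with \cref{corr:four-band-gs-real}. You add useful bookkeeping that the paper leaves implicit (verifying \cref{assume:h,assume:symm}, and ensuring $\vk_*=\vq_1$ lies on a grid satisfying \cref{assume:grid}), but the argument is the same.
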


\subsection{Application of Main Theorem to eTTG-4}
\label{sec:application-ettg-4}

Recall the definition of the $\times$ operation in \cref{eqn:ttgmapping}. If we denote by $\vu_{\mathbf 0}\in L^2_{2,2}$ the flat band eigenfunction associated in the nullspace of $D_{\text{TBG}}(\alpha_0)$, then
\[ \vw_{\mathbf 0} := \vu_{\mathbf 0} \times \vu_{\mathbf 0} \in \ker_{L^2_{1,1}}(D_{\text{eTTG}}(\sqrt 2\alpha_0)).\]
We then have that
\[ \Vert \vw_{\mathbf 0}(\vr)\Vert = \Vert \vw_{\mathbf 0}(-\vr)\Vert.\]
In particular, since $\vu_{\mathbf 0}$ has a simple zero at $\vr=0$, $\vw_{\mathbf 0}$ has a double zero at $\vr=0$.

The two flat bands of eTTG-4 are given by the almost everywhere linearly independent expressions
\[ w_{\vk}(\vr)=F_{\vk}(\vr)w_0(\vr), \quad w_{\vk}(\vr)=F_{\vk/2}(\vr)F_{\vk/2}(\vr)w_0(\vr).\]

Since $\vw_{\mathbf 0}$ has therefore the same properties as $\vw_{\mathbf 0}$ in the proof of Prop. \ref{prop:two}.
One can then replicate the proofs of Prop. \ref{prop:two} and Prop. \ref{prop:three} to find that

\begin{prop}
\label{prop:etwo}
Let $\alpha \in \CC$ be a two-fold degenerate magic angle of eTTG-4, with $\vk_*$ as in Prop. \ref{prop:three}, there are $\vG \in \Gamma^*$ such that
\begin{equation}
\label{eq:not02a}
\Im \tr(A_{\vk_*}(\vG))\neq 0
\end{equation}
while for $\vk \in \Gamma^*$ one has
\begin{equation}
\label{eq:not03a}
\Im \tr(A_{\vk}(\vG))=0 \text{ for all }\vG \in \Gamma^*.
\end{equation}
\end{prop}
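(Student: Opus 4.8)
The plan is to deduce both claims from \cref{corr:four-band-gs-real} (equivalently its real-space form \cref{corr:four-band-gs-real} via \cref{prop:hf-gs-unique-real}) by transporting the analysis of \cref{prop:two,prop:three} through the Popov--Tarnopolsky correspondence \eqref{eqn:ttgmapping} \cite{PopovTarnopolsky2023}. Write $\alpha = \sqrt{2}\,\alpha_{0}$ with $\alpha_{0}$ a simple magic angle of TBG-2 and let $\vu_{\mathbf 0}\in L^{2}_{2,2}$ span $\ker(D_{\text{TBG}}(\alpha_{0}))$ at the $\Gamma$ point; by \cite[Theo.~3]{BeckerHumbertZworski2022a} it has a simple zero at $\vr=\mathbf 0$ and no other zero in the fundamental domain. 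The starting point is the observation recorded above that $\vw_{\mathbf 0}:=\vu_{\mathbf 0}\times\vu_{\mathbf 0}\in\ker_{L^{2}_{1,1}}(D_{\text{eTTG}}(\alpha))$ inherits a \emph{double} zero at $\vr=\mathbf 0$ and no other zero, exactly mirroring the function $\vw_{\mathbf 0}$ of \cref{prop:two}; a second flat-band generator $\vv_{\mathbf 0}$ sits in a different rotational sector (here $L^{2}_{1,0}$) and can be written $\vv_{\mathbf 0}(\vr)=\wp(x_{1}+ix_{2})\vw_{\mathbf 0}(\vr)$, with simple zeros at $\pm\vr_{S}$, $\vr_{S}=(\tfrac{4\pi}{3\sqrt3},0)^{\top}$. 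Because $\vw_{\mathbf 0}$ and $\vv_{\mathbf 0}$ carry distinct rotational symmetry they are $L^{2}$-orthogonal, so $\tr(A_{\mathbf 0}(\vr)) = \|\vv_{\mathbf 0}(\vr)\|^{2}+\|\vw_{\mathbf 0}(\vr)\|^{2}$, and the two bands propagate in $\vk$ through the theta-quotient factors of \eqref{eq:Fk} (for the second band the text records the representation $F_{\vk/2}(\vr)^{2}\vw_{\mathbf 0}(\vr)$). Thus $\vw_{\mathbf 0},\vv_{\mathbf 0}$ carry precisely the structural data consumed by the proofs of \cref{prop:two,prop:three}.

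First I would prove \eqref{eq:not03a}. Every $\vk\in\Gamma^{*}$ is equivalent to $\vk=\mathbf 0$ by \eqref{eq:form-factor-shift}, and the layer symmetry $\mc{L}$ of \eqref{eq:l-symm-def} maps $\vk\mapsto-\vk$, hence fixes $\vk=\mathbf 0$. Exactly as in \eqref{eq:trace},
\[
  \tr(\rho_{\mathbf 0,\mathbf 0}(\vr)) = \|\vw_{\mathbf 0}(\vr)\|^{2}+\|\vv_{\mathbf 0}(\vr)\|^{2} = \|\mc{L}\vw_{\mathbf 0}(\vr)\|^{2}+\|\mc{L}\vv_{\mathbf 0}(\vr)\|^{2} = \tr(\rho_{\mathbf 0,\mathbf 0}(-\vr)),
\]
so the odd-in-$\vr$ part of $\tr(\rho_{\mathbf 0,\mathbf 0}(\cdot))$ vanishes identically, which by the real-space dictionary of \cref{prop:hf-gs-unique-real} means $\Im\tr(A_{\mathbf 0}(\vG))=0$ for every $\vG$; this is \eqref{eq:not03a}.

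Next I would prove \eqref{eq:not02a} at $\vk_{*}=\vq_{1}=(0,1)^{\top}$ by running the computation of \cref{prop:two} verbatim. Using $F_{\vk_{*}}\in\ker(D_{\text{eTTG}}(\alpha)+i\operatorname{id})$ one writes $\vw_{\vk_{*}}(\vr)=\beta_{\vk_{*}}F_{\vk_{*}}(\vr)\vw_{\mathbf 0}(\vr)$ and $\vv_{\vk_{*}}(\vr)=\alpha_{\vk_{*}}F_{\vk_{*}}(\vr-\vr_{S})\vv_{\mathbf 0}(\vr)$, feeds the explicit $\theta$- and $\wp$-expressions into $\tr(\rho_{\vk_{*},\vk_{*}}(\vr)) = \|\vw_{\vk_{*}}(\vr)\|^{2}+\|\vv_{\vk_{*}}(\vr)\|^{2}$, and uses the identity expressing $\wp$ as a quotient of $\theta$'s to reduce the question of whether $\tr(\rho_{\vk_{*},\vk_{*}}(\cdot))$ is even to whether, for some $\gamma>0$,
\[
  \Phi(\vr)=e^{-(x_{1}-\sqrt3 x_{2})}\Big(\big|\theta\big(\tfrac{3(x_{1}+ix_{2})}{4\pi i\omega}-\tfrac{i}{\sqrt3\omega}\big)\big|^{4}+\gamma\,\big|\theta\big(\tfrac{3(x_{1}+ix_{2})}{4\pi i\omega}+\tfrac{i}{\sqrt3\omega}\big)\,\theta\big(\tfrac{3(x_{1}+ix_{2})}{4\pi i\omega}\big)\big|^{2}\Big)
\]
is even. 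Evaluating at the lattice point $\vr=-\vv_{1}=(\tfrac{2\pi}{\sqrt3},\tfrac{2\pi}{3})^{\top}\in\Gamma$, where $\theta\big(\tfrac{3(x_{1}+ix_{2})}{4\pi i\omega}\big)=0$, evenness of $\Phi$ would force $\big|\theta(-1-\tfrac{i}{\sqrt3\omega})\big|=\big|\theta(-1+\tfrac{i}{\sqrt3\omega})\big|$, which is false. Hence $\tr(\rho_{\vk_{*},\vk_{*}}(\vr))\neq\tr(\rho_{\vk_{*},\vk_{*}}(-\vr))$ for some $\vr$, i.e.\ $\Im\tr(A_{\vk_{*}}(\vG))\neq0$ for some $\vG$; the same conclusion at $-\vk_{*}$ follows from $\mc{L}$-symmetry since $\|\mc{L}\vw_{\vk_{*}}\|^{2}+\|\mc{L}\vv_{\vk_{*}}\|^{2}=\|\vw_{\vk_{*}}\|^{2}+\|\vv_{\vk_{*}}\|^{2}$ gives $\tr(\rho_{-\vk_{*},-\vk_{*}})=\tr(\rho_{\vk_{*},\vk_{*}})$. (The companion commutator/determinant hypothesis of \cref{corr:four-band-gs-real}, needed for \cref{thm:application-ettg-4}, is obtained by the identical transfer of \cref{prop:three}, using that $\vv_{\vk_{*}}$ vanishes to second order only at $-\vr_{S}$ while $\vw_{\vk_{*}}$ vanishes only to first order at $\mathbf 0$ and $\vr_{S}$.)

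The part I expect to be the main obstacle is not the $\theta$-function bookkeeping --- which is word-for-word that of \cref{prop:two,prop:three} --- but verifying rigorously that the $\times$-map of \eqref{eqn:ttgmapping} really does endow $\vw_{\mathbf 0}$ and its companion $\vv_{\mathbf 0}$ with \emph{exactly} the data the TBG-4 argument consumes: a genuine double zero of $\vw_{\mathbf 0}$ at the origin and nowhere else, compatibility with the rotational grading $L^{2}_{\ell,p}$ so that $\vw_{\mathbf 0}\perp\vv_{\mathbf 0}$ and $\tr(A)$ splits as a sum of squared norms, and that the momentum rescaling $\vk\mapsto\vk/2$ in the $F_{\vk/2}^{2}$-representation of the second band places its zeros at the positions required for the evaluation at $\vr=-\vv_{1}$ to go through. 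Once these structural transfers are checked, the proof closes exactly as for TBG-4.
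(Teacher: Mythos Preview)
Your proposal is correct and follows essentially the same approach as the paper: the paper's proof is a one-line statement that, since $\vw_{\mathbf 0}:=\vu_{\mathbf 0}\times\vu_{\mathbf 0}$ for eTTG-4 has the same structural properties (double zero at the origin, even modulus) as $\vw_{\mathbf 0}$ in the TBG-4 case, one can \emph{replicate} the proofs of \cref{prop:two} and \cref{prop:three} verbatim. Your write-up carries out that replication explicitly --- using $\mc{L}$-symmetry for \eqref{eq:not03a}, and the $\theta/\wp$ computation evaluated at $-\vv_{1}$ for \eqref{eq:not02a} --- and your identified ``main obstacle'' (verifying that the $\times$-map really transfers the zero structure and rotational grading) is precisely the step the paper asserts rather than proves in detail.
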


\begin{prop}
\label{prop:ethreea}
Let $\alpha$ be a two-fold degenerate magic angle of eTTG-4 and $\vk = \pm e_2$. Moreover, let $\vr \in \{\mathbf 0,\pm \vr_S\}$ and $\vr' \notin \{\mathbf 0,\pm \vr_S\} +\Gamma$, then the expression
\begin{equation}
\label{eq:eDa}
\begin{split}
D(\vr, \vr') &:=
\det{
\begin{bmatrix}
\| \vw_{\vk}(\vr)\|^2 - \| \vv_{\vk}(\vr)\|^2  & \braket{\vw_{\vk}(\vr), \vv_{\vk}(\vr)} \\
\| \vw_{\vk}(\vr')\|^2 - \| \vv_{\vk}(\vr')\|^2 & \braket{\vw_{\vk}(\vr'), \vv_{\vk}(\vr')}
\end{bmatrix}} \\
&= (\| \vw_{\vk}(\vr)\|^2 - \| \vv_{\vk}(\vr)\|^2)  \braket{\vw_{\vk}(\vr'), \vv_{\vk}(\vr')} \end{split}
\end{equation}
is non-zero.
\end{prop}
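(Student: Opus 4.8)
The plan is to mirror the structure of the proof of \cref{prop:three}, which is the TBG-4 analogue, since the eTTG-4 flat bands $\vw_{\vk}$ and $\vv_{\vk}$ have been arranged to share the essential analytic features of their TBG-4 counterparts. Concretely, recall that we built $\vw_{\mathbf 0} = \vu_{\mathbf 0} \times \vu_{\mathbf 0} \in \ker_{L^2_{1,1}}(D_{\text{eTTG}}(\sqrt{2}\alpha_0))$ with a double zero at $\vr = \mathbf 0$, and $\vv_{\mathbf 0}(\vr) = \wp(x_1 + i x_2)\vw_{\mathbf 0}(\vr)$ living in a space with a different rotational character, so that $\vw_{\mathbf 0}$ and $\vv_{\mathbf 0}$ are $L^2$-orthogonal; $\vv_{\mathbf 0}$ then has a simple zero at each of $\pm\vr_S$. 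Twisting by the factor $F_{\vk}$ (and $F_{\vk/2}$ for the $\times$-constructed band, using \cref{eq:identity}), we obtain $\vw_{\vk}, \vv_{\vk} \in \ker(D_{\text{eTTG}}(\sqrt{2}\alpha) + (k_1 + ik_2)\operatorname{id})$ whose zero sets are exactly $\{\mathbf 0, \vr_S\}$ (simple) and $\{-\vr_S\}$ (double), respectively, for $\vk = \pm e_2$, just as in the TBG-4 case. The zero-set information is precisely what drives the argument.

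The first step is to observe that the determinant in \eqref{eq:eDa} expands along the first row as in \eqref{eq:D}, because for $\vr \in \{\mathbf 0, \pm\vr_S\}$ one of $\vw_{\vk}(\vr)$ or $\vv_{\vk}(\vr)$ vanishes: at $\vr = \mathbf 0$ and $\vr = \vr_S$ we have $\vw_{\vk}(\vr) = 0$, and at $\vr = -\vr_S$ we have $\vv_{\vk}(\vr) = 0$, so in every case $\braket{\vw_{\vk}(\vr), \vv_{\vk}(\vr)} = 0$ and hence the $(1,2)$-entry drops out, leaving $D(\vr,\vr') = \big(\|\vw_{\vk}(\vr)\|^2 - \|\vv_{\vk}(\vr)\|^2\big)\braket{\vw_{\vk}(\vr'), \vv_{\vk}(\vr')}$. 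The same zero-counting shows the first factor is nonzero: at $\vr = \mathbf 0$ or $\vr_S$, $\vw_{\vk}(\vr) = 0$ while $\vv_{\vk}(\vr) \neq 0$ (since $\mathbf 0, \vr_S \notin \{-\vr_S\} + \Gamma$), and at $\vr = -\vr_S$, $\vv_{\vk}(\vr) = 0$ while $\vw_{\vk}(\vr) \neq 0$.

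The second and final step is to show $\braket{\vw_{\vk}(\vr'), \vv_{\vk}(\vr')} \neq 0$ for $\vr' \notin \{\mathbf 0, \pm\vr_S\} + \Gamma$. Here I would use the relation $\vw_{\vk}(\vr') = \wp(x_1' + |r_S| + i x_2')\,\vv_{\vk}(\vr')$ (up to a nonzero constant/phase), exactly as in the proof of \cref{prop:three}: this identity follows because $\vw_{\vk}$ and $\vv_{\vk}$ have the same twisting factor and their quotient is a meromorphic elliptic function with the prescribed zeros and poles, which is a translate of $\wp$. Then $\braket{\vw_{\vk}(\vr'), \vv_{\vk}(\vr')} = \wp(x_1' + |r_S| + i x_2')\,\|\vv_{\vk}(\vr')\|^2$, and since the translated $\wp$ has its zeros exactly on the shifted lattice $\{-\vr_S\} + \Gamma$ (and poles on $\{\mathbf 0, \vr_S\} + \Gamma$ where $\vw_{\vk}$'s simple zeros cancel $\vv_{\vk}$'s behavior — more precisely the quotient is finite there), and $\vr' \notin \{\mathbf 0, \pm\vr_S\} + \Gamma$ guarantees both $\wp(\cdots) \neq 0$ and $\vv_{\vk}(\vr') \neq 0$, the product is nonzero. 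The main obstacle, and the only place real work beyond transcription is needed, is verifying that the $\times$-construction genuinely transfers all the zero-order data: that $\vw_{\mathbf 0} = \vu_{\mathbf 0} \times \vu_{\mathbf 0}$ has a double (not higher) zero at the origin and no other zeros in the fundamental domain, and that the elliptic quotient $\vw_{\vk}/\vv_{\vk}$ is genuinely (a translate of) $\wp$ rather than some other elliptic function — this rests on counting zeros and poles with multiplicity and invoking the uniqueness of elliptic functions with a given divisor, together with the rotational-symmetry bookkeeping from \cref{eq:property} that forces $\vw_{\mathbf 0}$ and $\vv_{\mathbf 0}$ into orthogonal symmetry sectors.
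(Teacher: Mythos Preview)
Your proposal is correct and takes essentially the same approach as the paper: the paper's proof of \cref{prop:ethreea} is literally the instruction to ``replicate the proofs of Prop.~\ref{prop:two} and Prop.~\ref{prop:three}'' once one observes that the eTTG-4 function $\vw_{\mathbf 0}=\vu_{\mathbf 0}\times\vu_{\mathbf 0}$ has the same zero structure (a double zero at the origin) as its TBG-4 namesake. Your write-up carries out that replication in detail, including the identification of the zero sets of $\vw_{\vk}$ and $\vv_{\vk}$ at $\vk=\pm e_2$, the reduction of the determinant via $\braket{\vw_{\vk}(\vr),\vv_{\vk}(\vr)}=0$, and the $\wp$-quotient argument for the second factor; the ``main obstacle'' you flag (that the $\times$-construction preserves the order of the zero and introduces no spurious ones) is exactly the one input the paper asserts before invoking the replication.
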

Therefore, combining \cref{prop:etwo,prop:ethreea} with \cref{corr:four-band-gs-real}, we have the following result
\begin{theo}
\label{thm:application-ettg-4}
The ferromagnetic Slater determinants states are the unique ground states of the corresponding flat-band interacting model for eTTG-4.
\end{theo}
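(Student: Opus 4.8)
The plan is to obtain \cref{thm:application-ettg-4} as an immediate consequence of \cref{corr:four-band-gs-real}: it suffices to produce a single $\vk\in\mc K$ at which the two real-space non-degeneracy conditions of that corollary hold, and this is precisely what \cref{prop:etwo,prop:ethreea} assert. So the real work is to establish those two propositions, and the approach I would take is to transcribe the TBG-4 arguments (\cref{prop:two,prop:three}) to eTTG-4 using the Popov--Tarnopolsky map $\times$ of \cref{eqn:ttgmapping} to manufacture the eTTG-4 flat-band eigenfunctions out of the TBG-2 ones.

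First I would fix a simple magic angle $\alpha_0$ of TBG-2 with unique, reflection-symmetric flat-band generator $\vu_{\mathbf 0}\in\ker_{L^2_2}(D_{\text{TBG}}(\alpha_0))$ (from \cite{BeckerHumbertZworski2023}), recall the $\theta$-function description $\vu_{\vk}(\vr)=F_{\vk}(\vr)\vu_{\mathbf 0}(\vr)$ of \cref{eq:identity}, and set $\vw_{\mathbf 0}:=\vu_{\mathbf 0}\times\vu_{\mathbf 0}\in\ker_{L^2_{1,1}}(D_{\text{eTTG}}(\sqrt 2\alpha_0))$. Since $\times$ is quadratic and $\vu_{\mathbf 0}$ has a single simple zero per cell at the origin, $\vw_{\mathbf 0}$ has a double zero there, and since $\mc L$ is a symmetry we get $\|\vw_{\mathbf 0}(\vr)\|=\|\vw_{\mathbf 0}(-\vr)\|$; thus $\vw_{\mathbf 0}$ plays exactly the role of the function $\vw_{\mathbf 0}$ in \cref{prop:two}. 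The second eTTG-4 flat band is $\vv_{\mathbf 0}:=\wp(x_1+ix_2)\vw_{\mathbf 0}$, which by \cref{eq:property} lies in a different rotational-symmetry sector and is therefore $L^2$-orthogonal to $\vw_{\mathbf 0}$; shifting to general $\vk$ via $F_{\vk}$ (resp.\ $F_{\vk/2}F_{\vk/2}$) preserves the zero multiplicities exactly as in the TBG-4 case. With these identifications the proofs of \cref{prop:two,prop:three} carry over line by line: at $\vk\in\Gamma^*$ the layer symmetry forces $\tr(\rho_{\vk,\vk})$ to be even, giving $\Im\tr(A_{\vk}(\vG))=0$; at $\vk_*=\pm\vq_1=\pm e_2$ the explicit $\theta$-function prefactors reduce $\tr(\rho_{\vk_*,\vk_*}(\vr))$ to the same function $\Phi(\vr)$ as in \cref{prop:two}, whose failure to be even is exposed by evaluating at the lattice point $-\vv_1$ — this yields \cref{prop:etwo}; and the distinct zero sets of $\vv_{\vk}$ and $\vw_{\vk}$ make $\|\vw_{\vk}(\vr)\|^2-\|\vv_{\vk}(\vr)\|^2$ nonzero for $\vr\in\{\mathbf 0,\pm\vr_S\}$ while the $\wp$-factorization makes $\braket{\vw_{\vk}(\vr'),\vv_{\vk}(\vr')}$ nonzero at generic $\vr'$, so the determinant in \cref{eq:eDa} is nonzero — this yields \cref{prop:ethreea}. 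Feeding \cref{prop:etwo,prop:ethreea} into \cref{corr:four-band-gs-real} finishes the proof.

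The main obstacle is justifying that the $\times$-construction captures the \emph{entire} two-dimensional flat-band space of eTTG-4 at $\sqrt 2\alpha_0$ — \cite{PopovTarnopolsky2023} only guarantees that the multiplicity is \emph{at least} doubled — and that $\vw_{\mathbf 0}$ and $\vv_{\mathbf 0}=\wp\cdot\vw_{\mathbf 0}$ are genuinely independent almost everywhere with the claimed distinct rotational symmetries; this forces one to lean on the degeneracy/multiplicity results of \cite{BeckerHumbertZworski2023,PopovTarnopolsky2023} and to check that the gauge-fixing and sublattice bookkeeping of \cref{sec:requ-symm-gauge} transfer intact from the two-layer to the three-layer setting. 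Once the flat-band basis is pinned down, everything else is routine $\theta$- and $\wp$-function manipulation identical to the TBG-4 computations already in hand.
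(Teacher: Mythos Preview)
Your proposal is correct and follows essentially the same route as the paper: the paper constructs $\vw_{\mathbf 0}:=\vu_{\mathbf 0}\times\vu_{\mathbf 0}$ via the Popov--Tarnopolsky map, observes that it inherits the same double-zero and reflection properties as the TBG-4 generator, and then explicitly states that one can ``replicate the proofs of Prop.~\ref{prop:two} and Prop.~\ref{prop:three}'' to obtain \cref{prop:etwo,prop:ethreea}, from which the theorem follows by \cref{corr:four-band-gs-real}. You have in fact supplied more detail than the paper does, and your identification of the dimension-count obstacle (that $\times$ must produce the \emph{entire} two-dimensional flat-band space, not merely a subspace) is exactly the point the paper glosses over.
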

\bibliographystyle{plain}
\bibliography{bibliography}

\appendix

\section{Reformulation of Fock Energy}
\label{sec:reform-fock-energy}
Using the notations introduced in \cref{eq:rotated-form-factor} and using the cyclic property of trace, we have that
\begin{equation}
  \begin{split}
    &\tr( \Lambda_{\vk}(\vq') Q(\vk') \Lambda_{\vk}(\vq')^\dagger Q(\vk)) \\
    & = \tr\Big(
      \begin{bmatrix}
        B^{(1)}_{\vk}(\vq') & \\
                                 &  B^{(2)}_{\vk}(\vq')
      \end{bmatrix}
      \begin{bmatrix}
        c(\vk') & s(\vk') \\
        s(\vk') & -c(\vk')
      \end{bmatrix} \\
    & \hspace{5em}
      \begin{bmatrix}
        B^{(1)}_{\vk}(\vq')^{\dagger} & \\
                                           &  B^{(2)}_{\vk}(\vq')^{\dagger}
      \end{bmatrix}
      \begin{bmatrix}
        c(\vk) & s(\vk) \\
        s(\vk) & -c(\vk)
      \end{bmatrix}
      \Big).
  \end{split}
\end{equation}
We will now split the sine and cosine matrices into their diagonal and off-diagonal parts
\begin{equation}
  \begin{bmatrix}
    c(\vk) & s(\vk) \\
    s(\vk) & -c(\vk)
  \end{bmatrix}
  =
  \begin{bmatrix}
    c(\vk) & \\
           & -c(\vk)
  \end{bmatrix}
  +
  \begin{bmatrix}
    & s(\vk) \\
    s(\vk) &
  \end{bmatrix}
\end{equation}
and similarly for the matrix involving $c(\vk')$ and $s(\vk')$.

Since the form factor is block diagonal, one can easily verify that after splitting sine and cosine matrix as above, the only non-vanishing contributions to the trace are
\begin{equation}
  \begin{split}
    & \tr\Big(
      \begin{bmatrix}
        B^{(1)}_{\vk}(\vq') & \\
                                 &  B^{(2)}_{\vk}(\vq')
      \end{bmatrix}
      \begin{bmatrix}
        c(\vk') & \\
                & -c(\vk')
      \end{bmatrix}
      \begin{bmatrix}
        B^{(1)}_{\vk}(\vq')^{\dagger} & \\
                                           &  B^{(2)}_{\vk}(\vq')^{\dagger}
      \end{bmatrix}
      \begin{bmatrix}
        c(\vk) & \\
               & -c(\vk)
      \end{bmatrix}
      \Big) \\[2ex]
    & \tr\Big(
      \begin{bmatrix}
        B^{(1)}_{\vk}(\vq') & \\
                                 &  B^{(2)}_{\vk}(\vq')
      \end{bmatrix}
      \begin{bmatrix}
        & s(\vk') \\
        s(\vk') &
      \end{bmatrix}
      \begin{bmatrix}
        B^{(1)}_{\vk}(\vq')^{\dagger} & \\
                                           &  B^{(2)}_{\vk}(\vq')^{\dagger}
      \end{bmatrix}
      \begin{bmatrix}
        & s(\vk) \\
        s(\vk) &
      \end{bmatrix}
      \Big).
  \end{split}
\end{equation}
Therefore,
\begin{equation}
  \begin{split}
    &\tr( \Lambda_{\vk}(\vq') Q(\vk') \Lambda_{\vk}(\vq')^\dagger Q(\vk)) \\
    & = \tr\Big(\Big[ B^{(1)}_{\vk}(\vq') c(\vk') B^{(1)}_{\vk}(\vq')^{\dagger}  c(\vk) + B^{(2)}_{\vk}(\vq') c(\vk') B^{(2)}_{\vk}(\vq')^{\dagger}  c(\vk) \Big] \\
    & \hspace{3em} + \Big[ B^{(1)}_{\vk}(\vq') s(\vk') B^{(2)}_{\vk}(\vq')^{\dagger} s(\vk) +  B^{(2)}_{\vk}(\vq') s(\vk') B^{(1)}_{\vk}(\vq')^{\dagger} s(\vk) \Big] \Big).
  \end{split}
\end{equation}
The last two terms can be slightly simplified by observing that
\begin{equation}
  \Big(B^{(1)}_{\vk}(\vq') s(\vk') B^{(2)}_{\vk}(\vq')^{\dagger} s(\vk)\Big)^{\dagger}  = s(\vk) B^{(2)}_{\vk}(\vq') s(\vk') B^{(1)}_{\vk}(\vq')^{\dagger}.
\end{equation}
Therefore, using the cyclic property of trace
\begin{equation}
  \overline{\tr{\Big( B^{(1)}_{\vk}(\vq') s(\vk') B^{(2)}_{\vk}(\vq')^{\dagger} s(\vk) \Big)}} = \tr{\Big(B^{(2)}_{\vk}(\vq') s(\vk') B^{(1)}_{\vk}(\vq')^{\dagger} s(\vk) \Big)}.
\end{equation}
From these calculations Fock energy can be written as
\begin{equation}
  \begin{split}
    K[P] = -\frac{1}{4 | \Omega | N_{\vk}} \sum_{\vk,\vq} \sum_{\vG} V(\vq') \Big[
    & \tr\Big(B^{(1)}_{\vk}(\vq') c(\vk') B^{(1)}_{\vk}(\vq')^\dagger c(\vk) \Big) \\[-2ex]
    & + \tr\Big(B^{(2)}_{\vk}(\vq') c(\vk') B^{(2)}_{\vk}(\vq')^\dagger c(\vk) \Big) \\
    & + 2 \Re \Big(\tr\Big(B^{(1)}_{\vk}(\vq') s(\vk') B^{(2)}_{\vk}(\vq')^\dagger s(\vk) \Big) \Big)\Big].
  \end{split}
\end{equation}
We fix $\vk, \vq, \vG$ and define:
\begin{equation}
  \begin{split}
    & c_i := [c(\vk)]_{ii} \qquad\qquad\qquad c_i' := [c(\vk')]_{ii} \\
    & s_{i} := [s(\vk)]_{ii} \qquad\qquad\qquad s_i' := [s(\vk')]_{ii} \\
    & b^{(1)}_{ij} := [B^{(1)}_{\vk}(\vq')]_{ij} \\
    & b^{(2)}_{ij} := [B^{(2)}_{\vk}(\vq')]_{ij}.
  \end{split}
\end{equation}
With these definition, the Fock energy becomes
\begin{equation}
  \begin{split}
    \sum_{ij} & b^{(1)}_{ij} c_{j}' \overline{b^{(1)}_{ij}} c_{i} + \sum_{ij} b^{(2)}_{ij} c_{j}' \overline{b^{(2)}_{ij}} c_{i} + 2 \Re\Big[\sum_{ij} b^{(1)}_{ij} s_{j}' \overline{b^{(2)}_{ij}} s_{i} \Big] \\
    & =
    \sum_{ij}  (|b^{(1)}_{ij}|^{2} + |b^{(2)}_{ij}|^{2}) c_{j}' c_{i} + 2 \Re[ b^{(1)}_{ij} \overline{b^{(2)}_{ij}}] s_{j}' s_{i}.
  \end{split}
\end{equation}
Recalling the trigonometric product-to-sum rules we have
\begin{equation}
  \begin{split}
    c_i c_j' = \cos{(\theta_i(\vk))} \cos{(\theta_j(\vk'))} = \frac{1}{2} \bigg( \cos{(\theta_i(\vk) - \theta_j(\vk'))} + \cos{(\theta_i(\vk) + \theta_j(\vk'))} \bigg) \\[1ex]
    s_i s_j' = \sin{(\theta_i(\vk))} \sin{(\theta_j(\vk'))} = \frac{1}{2} \bigg( \cos{(\theta_i(\vk) - \theta_j(\vk'))} - \cos{(\theta_i(\vk) + \theta_j(\vk'))} \bigg).
  \end{split}
\end{equation}
Therefore, for fixed $i,j$ we have
\begin{equation}
  \begin{split}
    (|b^{(1)}_{ij}|^{2} + |b^{(2)}_{ij}|^{2}) c_{i} c_{j}' + 2 \Re[& b^{(1)}_{ij} \overline{b^{(2)}_{ij}}] s_{i} s_{j}' \\
    = \frac{1}{2} (|b^{(1)}_{ij}|^{2} + |b^{(2)}_{ij}|^{2}) & \bigg( \cos{(\theta_i(\vk) - \theta_j(\vk'))} + \cos{(\theta_i(\vk) + \theta_j(\vk'))} \bigg) \\
    +  \Re[ b^{(1)}_{ij} \overline{b^{(2)}_{ij}}]& \bigg( \cos{(\theta_i(\vk) - \theta_j(\vk'))} - \cos{(\theta_i(\vk) - \theta_j(\vk'))} \bigg).
  \end{split}
\end{equation}
But since
\begin{equation}
  |b^{(1)}_{ij}|^{2} + |b^{(2)}_{ij}|^{2} \pm 2 \Re[ b^{(1)}_{ij} \overline{b^{(2)}_{ij}}] = | b^{(1)}_{ij} \pm b^{(2)}_{ij} |^2
\end{equation}
this simplifies to
\begin{equation}
  \frac{1}{2} | b^{(1)}_{ij} + b^{(2)}_{ij} |^2 \cos{(\theta_i(\vk) - \theta_j(\vk'))} + \frac{1}{2} | b^{(1)}_{ij} - b^{(2)}_{ij} |^2  \cos{(\theta_i(\vk) + \theta_j(\vk'))}.
\end{equation}
Combining these calculations together we finally find that
\begin{equation}
  \begin{split}
    K[P] = \frac{1}{8 | \Omega | N_{\vk}} \sum_{\vk,\vq} \sum_{\vG} V(\vq') \sum_{ij} \bigg\{ & | b^{(1)}_{ij} + b^{(2)}_{ij} |^2 \cos{(\theta_i(\vk) - \theta_j(\vk'))} \\
    & + | b^{(1)}_{ij} - b^{(2)}_{ij} |^2  \cos{(\theta_i(\vk) + \theta_j(\vk'))} \bigg\}.
  \end{split}
\end{equation}

\section{Proof of~\cref{lem:full-rank}}
\label{sec:proof-full-rank}
We set $\vq = \vk' - \vk$ and prove that $\Lambda_{\vk}(\vq)$ is full rank which is equivalent to proving $A_{\vk}(\vq)$ is full rank.
We start by recalling the definition of the form factor for $\vG = \vzero$
\begin{equation}
  [\Lambda_{\vk}(\vq)]_{mn} = \frac{1}{|\Omega|} \sum_{\vG'} \sum_{\sigma,j} \overline{\hat{u}_{m\vk}(\vG'; \sigma, j)} \hat{u}_{n(\vk+\vq)}(\vG'; \sigma, j)
\end{equation}
where $m,n \in \mc{N}$.
Notice that this is just the inner product between the functions $(\vG, \sigma, j) \mapsto u_{m\vk}(\vG, \sigma, j)$ and $(\vG, \sigma, j) \mapsto u_{n(\vk+\vq)}(\vG, \sigma, j)$ on the space $L^{2}(\Gamma^{*} \times \CC^{2} \times \CC^{N})$.
Therefore, if we pick an orthogonal basis $\{ \ket{n} : n \in \{ 1, \cdots, 2 M \}\}$ for $\CC^{\# | \mc{N} |}$ we can define the operator $\Phi(\vk)$ using bra-ket notation:
\begin{equation}
  \Phi(\vk) := \frac{1}{|\Omega|^{1/2}} \sum_{n \in \mc{N}} \ket{\hat{u}_{n\vk}} \bra{n}
\end{equation}
and the form factor at $\vG = 0$ becomes
\begin{equation}
  \label{eq:form-factor-full-rank-1}
  \Lambda_{\vk}(\vq) = \Phi(\vk)^{\dagger} \Phi(\vk + \vq).
\end{equation}
Note that $\Phi(\vk)$ is a partial isometry since by the orthogonality of $\hat{u}_{n\vk}$ we know that $\Phi(\vk)^{\dagger} \Phi(\vk) = \sum_{n} \ket{n} \bra{n} = I$.

Our goal is to show that there exists a $\vq$ so that $\Lambda_{\vk}(\vq)$ is full rank or equivalently for all $v \in \CC^{2M}$ such that $\| v \| = 1$, we must show that $\| \Phi(\vk + \vq)^{\dagger} \Phi(\vk) v\| > 0$.

Now observe that
\begin{equation}
  \begin{split}
    &\braket{v, \Phi(\vk)^{\dagger} \Phi(\vk + \vq) \Phi(\vk + \vq)^{\dagger} \Phi(\vk) v} \\
    & = \braket{v, \Phi(\vk)^{\dagger} \Phi(\vk + \vq) \Phi(\vk + \vq)^{\dagger} \Phi(\vk) v} - \braket{v,  v} + \braket{v, v} \\
    & = \braket{v, \Phi(\vk)^{\dagger} \Phi(\vk + \vq) \Phi(\vk + \vq)^{\dagger} \Phi(\vk) v} - \braket{v, \Phi(\vk)^{\dagger} \Phi(\vk) \Phi(\vk)^{\dagger} \Phi(\vk) v} + 1 \\
    & = \braket{v, \Phi(\vk)^{\dagger} \Big(\Phi(\vk + \vq) \Phi(\vk + \vq)^{\dagger} - \Phi(\vk) \Phi(\vk)^{\dagger}\Big) \Phi(\vk) v} + 1 \\
  \end{split}
\end{equation}
where in the second line we have used that $\Phi(\vk)$ is a partial isometry.
Since
\begin{equation}
\begin{split}
  \sup_{\| v \| = 1} & |\braket{v, \Phi(\vk)^{\dagger} \Big(\Phi(\vk + \vq) \Phi(\vk + \vq)^{\dagger} - \Phi(\vk) \Phi(\vk)^{\dagger}\Big) \Phi(\vk) v} | \\
  & \leq \| \Phi(\vk + \vq) \Phi(\vk + \vq)^{\dagger} - \Phi(\vk) \Phi(\vk)^{\dagger}\|
\end{split}
\end{equation}
it suffices to show that the norm $\| \Phi(\vk + \vq) \Phi(\vk + \vq)^{\dagger} - \Phi(\vk) \Phi(\vk)^{\dagger} \|$ is bounded away from 1.
But observe that $\Phi(\vk) \Phi(\vk)^{\dagger}$ and $\Phi(\vk + \vq) \Phi(\vk + \vq)^{\dagger}$ are spectral projector onto the flat bands at momentum $\vk$ and $\vk + \vq$.
Therefore by assumption $\| \Phi(\vk + \vq) \Phi(\vk + \vq)^{\dagger} - \Phi(\vk) \Phi(\vk)^{\dagger} \| < 1$.

\section{Real Space Conditions for~\cref{thm:hf-gs-unique}}
\label{sec:real-space-proof}

\begin{lemm}
For all orthogonal projectors $\Pi$, the following statements are equivalent:
\begin{enumerate}
\item There exists $\vG'$, so that $\| (I - \Pi) A_{\vk}(\vG') \Pi \| > 0$.
\item There exists $\vr$, so that $\| (I - \Pi) \rho_{\vk,\vk}(\vr) \Pi \| > 0$.
\end{enumerate}
\end{lemm}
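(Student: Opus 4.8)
The plan is to recognize that, as $\vG'$ ranges over $\Gamma^{*}$, the matrices $(I-\Pi)A_{\vk}(\vG')\Pi$ are exactly the Fourier coefficients of a single $\Gamma$-periodic matrix-valued function, and then to invoke uniqueness of Fourier series. Concretely, recall from the computation just preceding \cref{prop:hf-gs-unique-real} that $A_{\vk}(\vG)=\int_{\Omega}e^{-i\vG\cdot\vr}\rho_{\vk,\vk}(\vr)\,\ud\vr$, i.e.\ $A_{\vk}(\vG)$ is the $\vG$-th Fourier coefficient of the periodic matrix function $\rho_{\vk,\vk}(\cdot)$. Since $\Pi$ and $I-\Pi$ are constant matrices they pass through the integral, so $(I-\Pi)A_{\vk}(\vG)\Pi=\int_{\Omega}e^{-i\vG\cdot\vr}\,(I-\Pi)\rho_{\vk,\vk}(\vr)\Pi\,\ud\vr$ is the $\vG$-th Fourier coefficient of the $\Gamma$-periodic function $g(\vr):=(I-\Pi)\rho_{\vk,\vk}(\vr)\Pi$.

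With this identification in hand I would argue the equivalence by contraposition. Since $\norm{T}>0$ is the same as $T\neq 0$ for a matrix $T$, statement (1) asserts that $g$ has some nonzero Fourier coefficient and statement (2) asserts that $g(\vr)\neq 0$ for some $\vr\in\Omega$; it therefore suffices to show that ``all Fourier coefficients of $g$ vanish'' is equivalent to ``$g\equiv 0$ on $\Omega$''. The direction $g\equiv 0\Rightarrow$ all coefficients vanish is immediate. For the converse I would note that the entries of $\rho_{\vk,\vk}$ are products of components of the periodic Bloch functions $u_{n\vk}$, which lie in $L^{2}(\Omega)$, so by Cauchy--Schwarz the entries of $g$ lie in $L^{1}(\Omega)$; the uniqueness theorem for Fourier series of $L^{1}$ functions then gives $g=0$ almost everywhere, and since $u_{n\vk}$ is continuous (being an eigenfunction of the elliptic operator $H$, or a continuous limit of such functions at band crossings) $g$ is continuous, hence $g\equiv 0$ on $\Omega$. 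Reading the chain backwards: (1) fails $\iff$ all Fourier coefficients of $g$ vanish $\iff$ $g\equiv 0$ $\iff$ (2) fails, which is the desired equivalence.

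I do not expect a genuine obstacle here; the only point that needs a little care is the regularity input used to upgrade ``$g=0$ almost everywhere'' to ``$g(\vr)=0$ for every $\vr$'', which is exactly where continuity of the Bloch functions enters. At the band-crossing momenta one must recall that the gauge-fixed Bloch functions there were defined by continuity (as remarked after \cref{assume:grid}), so $g$ remains continuous at those $\vk$ as well; everything else is the routine dictionary between a periodic function and its Fourier coefficients.
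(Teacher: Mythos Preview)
Your argument is correct and follows the same route the paper takes: recognizing $A_{\vk}(\vG)$ as the Fourier coefficients of the periodic matrix function $\rho_{\vk,\vk}$ and then passing between the two via the Fourier correspondence. The paper justifies this lemma only by the one-line remark in \cref{sec:cond-main-real} (``Since the Fourier transform is isometric up to scaling, we can equivalently state conditions \ldots\ in real space'') and then states the lemma in the appendix without a separate proof; you have simply supplied the details, using uniqueness of Fourier coefficients rather than Parseval, and you are more careful than the paper about the continuity of $u_{n\vk}$ needed to upgrade ``$g=0$ a.e.'' to ``$g\equiv 0$.''
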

This lemma connects the second condition of~\cref{thm:hf-gs-unique} to a property of the real space functions $u_{n\vk}(\vr)$ which is easier to check in practice.
For the special case that the system has four bands, similar to~\cref{corr:four-band-gs}, can derive the following simpler condition:
\begin{lemm}
    Suppose that the system has four flat bands then the following are equivalent for all $\vk$
    \begin{enumerate}
        \item For all non-trivial orthogonal projectors, $\Pi$ exists an $\vr$ so that
        \[
        \| (I - \Pi) \rho_{\vk,\vk}(\vr) \Pi \| > 0
        \]
      \item There exist $\vr, \vr'$ so that $[ \rho_{\vk,\vk}(\vr), \rho_{\vk,\vk}(\vr') ] \neq 0$.
    \end{enumerate}
    Furthermore, to prove $[ \rho_{\vk,\vk}(\vr), \rho_{\vk,\vk}(\vr') ] \neq 0$ it suffices to show there exist $\vr, \vr'$ so that
    \begin{equation}
      \det{
        \begin{bmatrix}
          \| u_{1\vk}(\vr)\|^2 - \| u_{2\vk}(\vr)\|^2  & \braket{u_{1\vk}(\vr), u_{2\vk}(\vr)} \\
          \| u_{1\vk}(\vr')\|^2 - \| u_{2\vk}(\vr')\|^2 & \braket{u_{1\vk}(\vr'), u_{2\vk}(\vr')}
        \end{bmatrix}} \neq 0
    \end{equation}
  \end{lemm}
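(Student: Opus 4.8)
The plan is to reduce everything to elementary $2 \times 2$ linear algebra, using that four flat bands means $M = 2$, so that for each fixed $\vk$ and $\vr$ the matrix $\rho_{\vk,\vk}(\vr)$ is $2 \times 2$ and \emph{Hermitian}: its entries are $[\rho_{\vk,\vk}(\vr)]_{mm} = \|u_{m\vk}(\vr)\|^{2}$ and $[\rho_{\vk,\vk}(\vr)]_{12} = \braket{u_{1\vk}(\vr), u_{2\vk}(\vr)} = \overline{[\rho_{\vk,\vk}(\vr)]_{21}}$. On $\CC^{2}$ every non-trivial orthogonal projector is rank one, and a commuting family of Hermitian $2 \times 2$ matrices admits a common orthonormal eigenbasis; these two facts drive both the equivalence of (1) and (2) and the determinant criterion. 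The argument parallels the proof of \cref{corr:four-band-gs}, but is shorter: since $\rho_{\vk,\vk}(\vr)$ is already Hermitian (unlike $A_{\vk}(\vG)$), there is no need to pass between $\vG$ and $-\vG$ via \cref{eq:form-factor-dagger}.

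For the implication (2) $\Rightarrow$ (1), I would argue by contradiction. Fix a non-trivial projector $\Pi = \ket{v}\bra{v}$ with $\|v\| = 1$, and let $v^{\perp}$ span the orthogonal complement. If $\|(I - \Pi)\rho_{\vk,\vk}(\vr)\Pi\| = 0$ for all $\vr$, then $\braket{v^{\perp} | \rho_{\vk,\vk}(\vr) | v} = 0$ for all $\vr$, and Hermiticity forces $\braket{v | \rho_{\vk,\vk}(\vr) | v^{\perp}} = \overline{\braket{v^{\perp} | \rho_{\vk,\vk}(\vr) | v}} = 0$ as well; hence $\{v, v^{\perp}\}$ diagonalizes every $\rho_{\vk,\vk}(\vr)$, so the $\rho_{\vk,\vk}(\vr)$ all commute pairwise, contradicting (2). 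For (1) $\Rightarrow$ (2): if (2) fails, the Hermitian family $\{\rho_{\vk,\vk}(\vr)\}_{\vr \in \Omega}$ is pairwise commuting and hence simultaneously diagonalizable; in dimension two this means either every $\rho_{\vk,\vk}(\vr)$ is a scalar multiple of the identity, or there is a single orthonormal pair $\{v, v^{\perp}\}$ diagonalizing all of them. In either case, taking $\Pi = \ket{v}\bra{v}$ (any rank-one projector in the scalar case) gives $\rho_{\vk,\vk}(\vr)\Pi = \lambda_{\vr}\Pi$ and therefore $(I - \Pi)\rho_{\vk,\vk}(\vr)\Pi = 0$ for all $\vr$, contradicting (1).

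For the ``furthermore'' part, I would simply identify the stated determinant with an off-diagonal entry of the commutator. Writing $\rho_{\vk,\vk}(\vr) = \begin{bmatrix} a & b \\ \overline{b} & c\end{bmatrix}$ with $a = \|u_{1\vk}(\vr)\|^{2}$, $c = \|u_{2\vk}(\vr)\|^{2}$, $b = \braket{u_{1\vk}(\vr), u_{2\vk}(\vr)}$, and $a', b', c'$ the same quantities evaluated at $\vr'$, a one-line matrix multiplication gives
\[
[\rho_{\vk,\vk}(\vr),\, \rho_{\vk,\vk}(\vr')]_{12} = (a - c) b' - (a' - c') b = \det \begin{bmatrix} a - c & b \\ a' - c' & b'\end{bmatrix},
\]
which is exactly the determinant appearing in the statement. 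Hence a non-vanishing determinant forces a non-vanishing commutator, i.e. condition (2); combined with the equivalence (1) $\Leftrightarrow$ (2) established above this finishes the proof. I do not expect a genuine obstacle here; the only step needing a little care is the appeal to simultaneous diagonalizability of a commuting family of Hermitian matrices, and in particular the separate handling of the degenerate case in which each $\rho_{\vk,\vk}(\vr)$ is already proportional to the identity.
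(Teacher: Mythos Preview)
Your proposal is correct and follows essentially the same approach as the paper: both reduce the equivalence (1)$\Leftrightarrow$(2) to the observation that a non-trivial rank-one projector $\Pi=\ket{v}\bra{v}$ annihilates $(I-\Pi)\rho_{\vk,\vk}(\vr)\Pi$ for all $\vr$ precisely when $\{v,v^{\perp}\}$ is a common eigenbasis of the Hermitian family $\{\rho_{\vk,\vk}(\vr)\}$, and both identify the stated determinant with the $(1,2)$ entry of the commutator. Your explicit handling of the scalar-multiple-of-identity case and the clean $(a-c)b'-(a'-c')b$ computation are, if anything, slightly tidier than the paper's version.
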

\begin{proof}
One may easily verify that $\rho_{\vk,\vk}(\vr)$ is a $2 \times 2$ Hermitian matrix for all $\vr$.
As a consequence, since $\Pi$ is a non-trivial projection, $\| (I - \Pi) \rho_{\vk,\vk}(\vr) \Pi \| = 0$ if and only if $\Pi = \ket{v}\bra{v}$ and $(I - \Pi) = \ket{v^\perp}\bra{v^\perp}$ where $\ket{v}$ and $\ket{v^\perp}$ are a complete basis of eigenvectors for $\rho_{\vk,\vk}(\vr)$.
Therefore, $\| (I - \Pi) \rho_{\vk,\vk}(\vr) \Pi \| = 0$ for all $\vr$ if and only if the set $\{ \rho_{\vk,\vk}(\vr) : \vr \in \Omega \}$ are mutually commuting.
Hence, $\| (I - \Pi) \rho_{\vk,\vk}(\vr) \Pi \| > 0$ for all non-trivial $\Pi$ if and only if there exist $\rho_{\vk,\vk}(\vr)$ and $\rho_{\vk,\vk}(\vr')$ which do not commute.

For the second part of the lemma, we begin by calculating the off-diagonal entries of $\rho_{\vk,\vk}(\vr) \rho_{\vk,\vk}(\vr')$:
\begin{equation}
\begin{split}
    & \rho_{\vk,\vk}(\vr) \rho_{\vk,\vk}(\vr') \\
    & =
    \begin{bmatrix}
        \| u_{1\vk}(\vr) \|^2 & \braket{u_{1\vk}(\vr), u_{2\vk}(\vr)} \\
        \braket{u_{1\vk}(\vr), u_{2\vk}(\vr)}^* & \| u_{2\vk}(\vr) \|^2
    \end{bmatrix}
       \begin{bmatrix}
        \| u_{1\vk}(\vr') \|^2 & \braket{u_{1\vk}(\vr'), u_{2\vk}(\vr')} \\
        \braket{u_{1\vk}(\vr'), u_{2\vk}(\vr')}^* & \| u_{2\vk}(\vr') \|^2
    \end{bmatrix} \\
    & =
    \begin{bmatrix}
        * & d(\vr,\vr')   \\
        \overline{d(\vr',\vr)} & *
    \end{bmatrix}
\end{split}
\end{equation}
where
\begin{equation}
   d(\vr,\vr') = \| u_{1\vk}(\vr) \|^2 \braket{u_{1\vk}(\vr'), u_{2\vk}(\vr')} + \| u_{2\vk}(\vr') \|^2 \braket{u_{1\vk}(\vr), u_{2\vk}(\vr)}
\end{equation}
Therefore, using that $(\rho_{\vk,\vk}(\vr) \rho_{\vk,\vk}(\vr'))^\dagger = \rho_{\vk,\vk}(\vr') \rho_{\vk,\vk}(\vr)$ we have
\begin{equation}
[ \rho_{\vk,\vk}(\vr), \rho_{\vk,\vk}(\vr') ] =
\begin{bmatrix}
      *  & d(\vr,\vr') - d(\vr',\vr) \\
    \overline{d(\vr',\vr)} - \overline{d(\vr,\vr')} &  *
\end{bmatrix}.
\end{equation}
Now we calculate
\begin{equation}
\begin{split}
d(\vr,\vr') - d(\vr',\vr)
 = & \Big( \| u_{1\vk}(\vr) \|^2 - \| u_{2\vk}(\vr) \|^2 \Big) \braket{u_{1\vk}(\vr'), u_{2\vk}(\vr')} \\
& - \Big( \| u_{1\vk}(\vr') \|^2 - \| u_{2\vk}(\vr') \|^2 \Big) \braket{u_{1\vk}(\vr), u_{2\vk}(\vr)} \\
= & \det{
    \begin{bmatrix}
    \| u_{1\vk}(\vr)\|^2 - \| u_{2\vk}(\vr)\|^2  & \braket{u_{1\vk}(\vr), u_{2\vk}(\vr)} \\
    \| u_{1\vk}(\vr')\|^2 - \| u_{2\vk}(\vr')\|^2 & \braket{u_{1\vk}(\vr'), u_{2\vk}(\vr')}
    \end{bmatrix}}
\end{split}
\end{equation}
which completes the proof.
\end{proof}

As for the first condition of~\cref{thm:hf-gs-unique}, we observe that
\begin{equation}
  \tr{(A_{\vk}(\vG))} = \int_{\Omega} e^{-i \vG \cdot \vr} \sum_{m} \| u_{m\vk}(\vr) \|^{2} \ud\vr
\end{equation}
where the norm is understood to sum over $\sigma, j$.
And we recall the following simple result:
\begin{lemm}
\label{lemm:symmetry}
    Let $\Gamma$ be a d-dimensional lattice in $\mathbb R^d$ and $f:\RR^d/\Gamma \to \RR$ a continuous function, then $f$ is even, i.e. $\overline{f(-\mathbf r)}=f(\mathbf r)$ if and only if
    \[ \Im\hat{f}(\vG) =0 \text{ for all }\mathbf G \in \Gamma^*,\]
    where $\hat{f}$ is the Fourier transform of $f.$
\end{lemm}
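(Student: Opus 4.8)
The plan is to deduce both implications from the elementary dictionary between reflection symmetry of a periodic function and reality of its Fourier coefficients, invoking only Parseval's identity together with the continuity hypothesis. Throughout I write $\hat f(\vG) = \int_{\Omega} e^{-i \vG \cdot \vr} f(\vr)\, \ud\vr$ for $\vG \in \Gamma^{*}$, with $\Omega$ a fundamental domain for $\Gamma$, matching the normalization used just above; since $f$ is continuous on $\RR^d/\Gamma$, its Fourier series represents it, and in particular two continuous functions with identical Fourier coefficients are equal.

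First I would record two one-line identities obtained by changing variables in the defining integral. Substituting $\vr \mapsto -\vr$ and using that the integrand $e^{-i\vG\cdot\vr}f(\vr)$ is $\Gamma$-periodic (so its integral is the same over any fundamental domain) gives $\widehat{f(-\cdot)}(\vG) = \hat f(-\vG)$. Separately, conjugating the defining integral and using that $f$ is real-valued gives $\overline{\hat f(\vG)} = \hat f(-\vG)$. Combining the two yields the key relation
\[
  \overline{\hat f(\vG)} = \widehat{f(-\cdot)}(\vG), \qquad \vG \in \Gamma^{*}.
\]

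The equivalence is then immediate. If $f$ is even, then $\widehat{f(-\cdot)}(\vG) = \hat f(\vG)$, so the key relation forces $\overline{\hat f(\vG)} = \hat f(\vG)$, i.e. $\Im \hat f(\vG) = 0$ for all $\vG$. Conversely, if $\Im \hat f(\vG) = 0$ for all $\vG$, then $\hat f(\vG) = \overline{\hat f(\vG)} = \widehat{f(-\cdot)}(\vG)$, so $f$ and $f(-\cdot)$ are continuous functions with the same Fourier coefficients, hence equal; that is, $f$ is even.

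There is no substantial obstacle; the only point worth making explicit is where continuity is used. From equality of Fourier coefficients one gets $f = f(-\cdot)$ only almost everywhere in general — for instance, by Parseval, $\| f - f(-\cdot) \|_{L^2(\Omega)}^2$ is a positive multiple of $\sum_{\vG} | \hat f(\vG) - \widehat{f(-\cdot)}(\vG) |^2 = 0$ — and continuity upgrades this to the pointwise statement claimed. In the application to $\tr{(A_{\vk}(\vG))}$, where $f(\vr) = \sum_{m} \| u_{m\vk}(\vr) \|^2$ is continuous (indeed real-analytic, by the analyticity assumptions on $H$), this hypothesis is exactly what is available.
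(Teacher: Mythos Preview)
Your proof is correct and follows essentially the same approach as the paper: both directions reduce to the identity $\overline{\hat f(\vG)}=\widehat{f(-\cdot)}(\vG)$, obtained by the same change of variables and conjugation the paper uses. Your treatment of the converse is in fact more careful than the paper's, which simply asserts that it is ``easily observed from the Fourier series''; you explicitly invoke Parseval and continuity to upgrade equality of Fourier coefficients to pointwise equality.
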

\begin{proof}
    Indeed, assuming $\overline{f(-\mathbf r)}=f(\mathbf r)$, we have using \eqref{eq:Fourier}
    \[\begin{split}
    \overline{\hat{f}(\vG)} &= \overline{\int_{\RR^d/\Gamma} f(\vr) e^{-i\vG \cdot \vr} \ d\vr }=  \int_{\RR^d/\Gamma} \overline{f(\vr)} e^{i\vG \cdot \vr} \ d\vr \\
    &=\int_{\RR^d/\Gamma} \overline{f(-\vr)} e^{-i\vG \cdot \vr} \ d\vr = \int_{\RR^d/\Gamma} f(\vr) e^{-i\vG \cdot \vr} \ d\vr = \hat{f}(\vG).
    \end{split} \]
    The converse implication is easily observed from the Fourier series.
\end{proof}
By~\cref{lemm:symmetry}, we conclude that $\Im\tr{(A_{\vk}(\vG))} \neq 0$ if and only if the function
\begin{equation}
    \vr \mapsto \sum_{m} \| u_{m\vk}(\vr) \|^{2}
\end{equation}
is not an even function.

\end{document}